\DeclareMathOperator*{\argmin}{arg\,min}
\newcommand{\jj}{\mathrm{j}}
\newcommand{\proposed}{NCOTA-DGD}
\newtheorem{theorem}{Theorem}
\newtheorem{lemma}[theorem]{Lemma}
\newtheorem{assumption}{Assumption}
\theoremstyle{definition}
\newtheorem{defi}{Definition}
\newtheorem{example}{Example}
\newtheorem{remark}{Remark}
\theoremstyle{remark}
\newcommand{\secref}[1]{Sec.~\ref{#1}}
\newcommand{\e}{\boldsymbol{\epsilon}}
\newcommand{\ba}[2]{\begin{align}\label{#2}#1\end{align}}
\newcommand{\LL}{\hat{\boldsymbol{L}}}
\newcommand\bstheta{\boldsymbol{\theta}}
\newcommand\CS{Cauchy-Schwarz inequality}
\title{Non-Coherent Over-the-Air\\ Decentralized  Gradient Descent}
\author{Nicol\`o Michelusi
\thanks{N. Michelusi is with the School of Electrical, Computer and Energy Engineering, Arizona State University.
Part of this work appeared at IEEE ICC'23 \cite{ICC23} and at IEEE ICASSP'24 \cite{icassp24}.}
\thanks{This research has been funded in part by NSF under grant CNS-2129615.}
\vspace{-5mm}
}
\begin{document}

\thispagestyle{plain}
\pagestyle{plain} 
\setulcolor{red}
\setul{red}{2pt}
\setstcolor{red}

\maketitle

\begin{abstract}
Implementing Decentralized Gradient Descent (DGD) in wireless systems is challenging due to noise, fading, and limited bandwidth, necessitating topology awareness, transmission scheduling, and the acquisition of channel state information (CSI) to mitigate interference and maintain reliable communications. These operations may result in substantial signaling overhead and scalability challenges in large networks lacking central coordination. This paper introduces a scalable DGD algorithm that eliminates the need for scheduling, topology information, or CSI (both average and instantaneous). At its core is a Non-Coherent Over-The-Air (NCOTA) consensus scheme that exploits a noisy energy superposition property of wireless channels. Nodes encode their local optimization signals into energy levels within an OFDM frame and transmit simultaneously, without coordination. The key insight is that the received energy equals, \emph{on average}, the sum of the energies of the transmitted signals, scaled by their respective average channel gains, akin to a consensus step. This property enables unbiased consensus estimation, utilizing average channel gains as mixing weights, thereby removing the need for their explicit design or for CSI. Introducing a consensus stepsize mitigates consensus estimation errors due to energy fluctuations around their expected values. For strongly-convex problems, it is shown that the expected squared  distance between the local and globally optimum models vanishes at a rate of $\mathcal O(1/\sqrt{k})$ after $k$ iterations, with suitable decreasing learning and consensus stepsizes. Extensions accommodate a broad class of fading models and frequency-selective channels. Numerical experiments on image classification demonstrate faster convergence in terms of running time compared to state-of-the-art schemes, especially in dense network scenarios.
\end{abstract}
\vspace{-3mm}
\section{Introduction}
\label{intro}
Distributed optimization algorithms are used 
 across various domains such as remote sensing, distributed inference \cite{6494683},  estimation \cite{9224135}, multi-agent coordination \cite{Nedic2018}, and machine learning (ML)  \cite{YANG2019278}.  Typically, in these applications, $N$ devices with sensing, computation and wireless communication capabilities aim to find a global ${\mathbf w}^*\in\mathbb R^d$, solution of
\begin{align}
\label{global}
{\mathbf w}^*{=}\smash{\argmin_{\mathbf w\in\mathbb R^d}}
\ F(\mathbf{w}),
\text{ where }\vphantom{\sum}
\smash{F(\mathbf{w})\triangleq\frac{1}{N}\sum_{i=1}^Nf_i(\mathbf w),} \tag*{\{P\}}  
\end{align}
$f_i(\mathbf w)$ is a local function only known to node $i$,
 $\mathbf w$ is a $d$-dimensional parameter, so that $F(\mathbf w)$ is the network-wide objective.
For instance, in ML applications, $f_i(\mathbf w){=}\frac{1}{|\mathcal D_i|}\sum_{\boldsymbol{\xi}\in\mathcal D_i}\phi(\boldsymbol{\xi};\mathbf w)$ is the empirical loss over the local dataset $\mathcal D_i$, with loss function  $\phi(\boldsymbol{\xi};\mathbf w)$ on the datapoint $\boldsymbol{\xi}$.
In distributed linear regression, each node measures $\mathbf w^*$
via $\mathbf y_i{=}\mathbf A_i\mathbf w^*{+}\mathbf n_i$ corrupted by noise $\mathbf n_i$,
corresponding to the error metric $f_i(\mathbf w){=}\frac{1}{2}\Vert\mathbf y_i{-}\mathbf A_i\mathbf w\Vert_2^2$ for node $i$ \cite{10103556,9224135}.

Federated learning (FL) solves \ref{global} based on a client-server architecture,
in which the $N$  devices (the clients) interact
over multiple rounds
 with a parameter server (PS, such as a base station), acting as a model aggregator.
Nevertheless, the FL architecture encounters several challenges in wireless scenarios:
1) devices far away from the PS (e.g., in rural areas), may suffer from severe path loss conditions and blockages \cite{8870236};
2) uplink communications to the PS may be a severe bottleneck when $N$ is large, due to limited bandwidth  \cite{Lian17};
3) if the PS fails, e.g., due to a natural disaster, the whole system may break down.
Therefore, in crucial scenarios like swarms of unmanned aerial vehicles (UAVs) in remote areas \cite{9475989} or disaster response operations, solving \ref{global} in a fully decentralized way, without centralized coordination, is paramount \cite{8950073}.

\label{rev1resp2}
  A renowned algorithm solving \ref{global} in a fully decentralized fashion is \emph{Decentralized Gradient Descent} (DGD) \cite{Nedic2009grad,Yuan2016} and its stochastic gradient variant \cite{9241497}: each node ($i$) 
  owns a local copy $\mathbf w_{i}\in\mathbb R^d$ of $\mathbf w$ (its \emph{state}), and
transmits it to its neighbors in the network; upon receiving its neighbors' states, it then updates $\mathbf w_{i}$ as a weighted average of the received signals (the consensus signal $\mathbf c_i$), followed by a local gradient step,  illustrated here as non-stochastic for clarity,
\begin{align}
\mathbf w_{i}
\gets\mathbf c_i
-\eta\nabla f_i(\mathbf w_{i}),\text{ with }\vphantom{\sum}
\smash{\mathbf c_i=\sum_{j=1}^N\omega_{ij}\mathbf w_{j}.}
\tag*{\{DGD\}}  
\label{DGD}
\end{align}
Here,  $\omega_{ij}$ denotes a set of non-negative, symmetric ($\omega_{ij}{=}\omega_{ji}$) mixing weights, such that
$\sum_{j=1}^N\omega_{ij}{=}1$, defined over a mesh network ($\omega_{ij}{=}0$ if $i$ and $j$ are not direct neighbors).
These steps are repeated across the $N$ nodes and over multiple iterations, until a desired convergence criterion or deadline are met. 
Yet, in wireless systems affected by noise, fading, and interference, reliably communicating
the states
 $\mathbf w_j$ necessary to compute the consensus signal $\mathbf c_i$ can introduce delays. For example, with a TDMA scheme, the $N$
nodes broadcast their local state to their neighbors in pre-assigned time slots in a round-robin fashion. This strategy protects transmissions from mutual interference, thus facilitating reliable communications during the consensus phase. However, the time required to complete one DGD iteration scales proportionally with $N$.
Consequently, when $N$ is large, only a few iterations of DGD may be completed within a given deadline, resulting in poor convergence performance.
 Communication delays may be reduced by
quantizing $\mathbf w_j$ below machine precision \cite{9782148, Kajiyama2020, Magnusson2020, 9562482,9224135}, or via non orthogonal transmissions \cite{9563232}, at the cost of errors introduced in the computation of $\mathbf c_i$.

Consequently, there is a tension between accurately computing $\mathbf c_i$ and minimizing the communication delay of DGD iterations.
In fully decentralized wireless systems lacking centralized coordination, this tension is further exacerbated by the signaling overhead associated with tasks such as scheduling, channel estimation, link monitoring, and topology awareness:
 \begin{enumerate}[leftmargin=10pt]
\item \underline{Scheduling \& signaling overhead}: Mitigating the impact of unreliable wireless links requires scheduling of transmissions to manage interference, and the acquisition of channel state information (CSI, average or instantaneous) to compensate for signal fluctuations and link outages caused by fading. 
Executing tasks such as graph-coloring (used for scheduling in \cite{9563232,9322286,9517780}), pilot assignment to mitigate pilot contamination during channel estimation \cite{9205230}, monitoring large-scale propagation conditions, and reporting CSI feedback for power/rate adaptation  may require inter-agent coordination and entail severe signaling overhead.
\item \underline{Topology information \& weight design}:
The consensus step in \ref{DGD} requires local
topology awareness (set of neighbors in the mesh network),
necessitating continuous monitoring of link qualities and path loss conditions.
Furthermore, design of the mixing weights (e.g.,
using decentralized methods developed in 
 \cite{ROKADE2022110322,9222206}) incurs additional signaling overhead.
\end{enumerate}
The complexity of these tasks is further exacerbated in mobile scenarios, such as swarms of UAVs, where the connectivity structure changes frequently.
Therefore, achieving device-scalability of {DGD} in wireless systems, with a small signaling overhead footprint, remains an open challenge.

\vspace{-3mm}
\subsection{Novelty and Contributions}
\label{novelty}
 Addressing these challenges necessitates the design of decentralized optimization schemes tailored to wireless propagation environments.
To meet these goals, in this paper we present a \emph{Non-Coherent Over-the-Air} (NCOTA)-DGD algorithm.
Unlike conventional DGD approaches that rely on orthogonal transmissions to mitigate interference (e.g., via TDMA),
the core idea of \proposed\ is to allow simultaneous transmissions, thus achieving device-scalability. To do so,
\proposed\ 
exploits the superposition property of wireless channels to estimate the consensus signal $\mathbf c_i$, in line with over-the-air computing (AirComp) paradigms proposed for FL \cite{8952884,9014530,9515709,8870236,9382114,9042352}. 
Nevertheless, AirComp typically relies on \emph{coherent} alignment of signals at a common  base station, achieved via channel inversion at the transmitters. 
 In a decentralized setting, this condition cannot be met because signals are broadcast to multiple receivers, each receiving through a different channel.
 
To overcome this limitation, we develop a
\emph{non-coherent energy-based transmission technique} that operates without CSI, and
we exploit a \emph{noisy energy superposition property} of wireless channels at the receivers.
Specifically, nodes encode their local state $\mathbf{w}_i$ into transmitted energy levels and transmit simultaneously using a randomized scheme for half-duplex operation.
  Leveraging the circularly symmetric nature of wireless channels, the received signal energy is, on average, the sum of individual transmitted signal energies scaled by their respective channel gains, akin to the consensus signal $\mathbf{c}_i$, but with channel gains acting as mixing weights. Introducing a suitable consensus stepsize mitigates energy fluctuations in the received signal (errors in the consensus estimate).
The benefits of \proposed\ over conventional DGD schemes that rely on a mesh network \cite{9224135,9563232,9517780,9322286,9562482,9772390,9716792,9838891}  are two-fold:
\begin{enumerate}[leftmargin=10pt]
\item Concurrent transmissions obviate the need for scheduling or inter-agent coordination, resulting in fast (albeit noisy) iterations. Consequently, device-scalability (when $N$ is large) and scheduling-free operation are achieved.
\item
By leveraging intrinsic channel properties for signal mixing, explicit design and knowledge of the mixing weights or topology information are not required.
Furthermore, utilizing a non-coherent energy-based transmission and reception scheme obviates the need for CSI at transmitters or receivers.
Consequently, the  signaling overhead typically associated with these tasks is eliminated.
\end{enumerate}
 
Yet, these benefits come at the cost of consensus estimation errors due to random energy fluctuations, requiring careful  design of consensus and learning stepsizes to mitigate error propagation. Therefore, we develop a novel analysis of DGD with noisy consensus and noisy gradients, for the class of strongly-convex problems. 
 We prove that, by choosing the learning stepsize as $\eta_k\propto 1/k$ and the consensus stepsize as $\gamma_k\propto k^{-3/4}$ at the $k$th iteration,
the expected squared distance between the local and globally optimum models vanishes with rate $\mathcal O(1/\sqrt{k})$.
This result improves prior works \cite{9563232,9517780}, exhibiting non-vanishing errors,
and \cite{8786146} using fixed stepsizes.

Lastly, we propose extensions to cover a broad class of fading models and frequency-selective channels, including static channels as a special case. Numerical experiments on fashion-MNIST  \cite{fmnist} image classification,
 formulated as regularized cross-entropy loss minimization, demonstrate faster convergence compared to state-of-the-art algorithms, including quantized DGD \cite{8786146}, a device-to-device wireless implementation of DGD \cite{9563232}, and AirComp-based FL \cite{8870236}. Notably,
  our approach excels in densely deployed networks with large $N$, thus demonstrating its device-scalable design.

\subsection{Related work}
\label{relworks}
The study of distributed optimization algorithms to solve \ref{global} is vast and can be
traced back to the seminal work \cite{1104412}. 
Several studies enable decentralized algorithms to function over finite capacity channels,
 using methods such as
compression \cite{Taheri2020, Kovalev2020, Liao2021} or quantization  \cite{9224135,9782148, Kajiyama2020, Magnusson2020, 9562482}  of the transmitted signals. However, by assuming error-free communications, these works
 do not account for the impact of noise and fading in wireless channels.
The impact of unreliable communications is investigated in \cite{9562482,9772390,9716792,9838891}.
Yet, all these works 
 implicitly assume the use of orthogonal transmission scheduling to mitigate interference (e.g., via TDMA), hence may not be device-scalable (when $N$ is large), as observed earlier. 

 Recently, there has been significant interest in 
 achieving device-scalability by leveraging the waveform superposition properties of wireless channels via AirComp \cite{9562559}, first studied from an information-theoretic perspective in \cite{4305404}.
 However,
 most of these studies focus on a centralized FL architecture, where a base station functions as the PS  \cite{8952884,9014530,9515709,8870236,9382114,9042352,9815298}. As a result, they rely on assumptions that are difficult to satisfy in decentralized environments:
 \begin{itemize}[leftmargin=10pt]
 \item \underline{Coherent alignment}:
 The works \cite{8952884,9014530,9515709,8870236} rely on 
   channel inversion to achieve coherent alignment of the signals at a single receiver, the PS.
Instead, in decentralized settings, multiple receivers receive signals through different channels. This diversity of channels makes simultaneous coherent alignment across multiple receivers impossible.
 \item \underline{Signaling overhead}: Centralized FL schemes typically require meticulous power control and device scheduling to enforce power constraints and ensure device participation, requiring CSI acquisition and monitoring of average path loss conditions \cite{Faraz}.
 However, the ensuing signaling overhead, often presumed error-free or its impact not assessed, might not scale effectively to decentralized networks.
   \item  \underline{Massive-MIMO}: 
  The work \cite{9382114} overcomes the need for CSI at the transmitters by assuming a large number of antennas at the PS
  and by leveraging the channel hardening effect.
  However, this assumption is not applicable to edge devices (e.g., UAVs) with compact form factor.
 \item  \underline{Noise-free downlink}: All these papers (except \cite{9515709}) assume noise-free downlink. While reasonable for a PS with a typically larger power budget, 
  this assumption may not hold under stringent power constraints of edge devices.
 \end{itemize}
In contrast, \proposed\ does not require CSI (at either transmitters or receivers) and does not rely on a large number of antennas, thanks to the use of non-coherent energy-based signaling strategies. It  operates without  power control or scheduling of devices, thus achieving device-scalable and scheduling-free operation. These features make \proposed\ appealing to fully decentralized wireless systems.
 
 
 Notable exceptions in the AirComp literature with star-based topology include \cite{9815298,8974619} (no \emph{instantaneous} CSI) and \cite{9076343} (partial CSI).
 In \cite{9815298}, local gradients are first encoded using random vector quantization \cite{9740125}, and then transmitted via preamble-based random access
 over flat-fading channels. However, this scheme is not suitable for fully decentralized systems: 
1) The randomness introduced by random access increases the estimation variance at the receiver.
2) It relies on large antenna arrays at the PS to achieve vanishing optimality error, which may not be available, e.g., in small UAVs.
3) It hinges on average channel inversion to ensure unbiased gradient estimation at the PS. This is impractical in decentralized systems due to the signaling overhead associated with average CSI acquisition, and the diversity of channels across multiple receivers. 
4) Non-negligible propagation delays and multipath create frequency-selectivity.
To tackle 1-2), we design a \emph{deterministic} energy-based encoding strategy
and  stepsizes
 to mitigate errors at the receivers, supported by a novel convergence analysis of DGD with noisy consensus.
 To tackle 3), we design a signal processing scheme tailored to unbiased consensus estimation, with weights given by average path loss conditions, without relying on their explicit knowledge.
To tackle 4), we design mechanisms that enable operation over a broad class of frequency-selective channels, including static ones as a special case.
  In \cite{8974619}, blind over-the-air computation (BlairComp) of nomographic functions over multiple access channels is addressed using a Wirtinger flow method. However, \cite{8974619} BlairComp does not ensure unbiased estimates, crucial for proving convergence of FL and DGD algorithms. The paper \cite{9076343} uses channel phase correction at the transmitters to achieve coherent phase alignment at the PS. However, 
  such alignment condition cannot be achieved towards multiple receivers, calling for a non-coherent energy superposition technique, \proposed.

  
  Recent works have developed algorithms for decentralized FL that are robust to wireless propagation impairments \cite{9563232,9322286,9517780}.
  These works solve the optimization problem \ref{global}
  by implementing DGD over wireless channels, following \ref{DGD} or the equivalent formulation 
  in \eqref{ddd} with a consensus stepsize as in \cite{9563232}.
  These schemes mitigate interference by decomposing the network into smaller non-interfering subgraphs, via graph-coloring. In each subgraph, one device operates as the PS,  enabling the use of AirComp techniques, coupled with a suitable consensus enforcing step similar to \ref{DGD}. However, these schemes require topology awareness for graph-coloring, CSI feedback at the transmitters, power control, and scheduling of transmissions.  Moreover, they rely on design of the mixing weights $\omega$, using methods such as \cite{ROKADE2022110322,9222206}.
  Coordinating the network to effectively execute these tasks may entail severe signaling overhead, and their complexity may not scale well to large networks.
  In contrast, in this paper, we leverage average path loss as a proxy for mixing weights, eliminating the need for their explicit design, thus achieving device-scalability without CSI and inter-agent coordination.   

 Finally, the works \cite{9311931,9562522,9834707,9705093} proposed \emph{semi-decentralized} FL architectures, where
   edge devices collaborate with their neighbors unable to form a reliable direct link with the PS. However, these works inherit similar challenges outlined earlier under both centralized and decentralized architectures.
  
\subsection{Notation and Organization of the paper}
\noindent  For a set $\mathcal S{\subset}\mathbb R^d$, $\mathrm{int}(\mathcal S)$ is its interior, $\mathrm{bd}(\mathcal S)$ its boundary,
$\mathrm{conv}(\mathcal S)$ its convex hull.
For matrix $\mathbf A$,
$[\mathbf A]_{ij}$ is its $ij$th component,
$\mathbf A^\top$ its transpose, $\mathbf A^{\mathrm H}$ its complex conjugate transpose.
Vectors are  defined in column form.
 For (column) vector $\mathbf a$, 
 $[\mathbf a]_i$ is its $i$th component, $\Vert\mathbf a\Vert{=}\sqrt{\mathbf a^{\mathrm H}\mathbf a}$ its Euclidean norm.
 For random vector $\mathbf a$, we let $\Vert\mathbf a\Vert_{\mathbb E}{\triangleq}\sqrt{\mathbb E[\Vert\mathbf a\Vert^2]}$
 ($\Vert\mathbf a\Vert_{\mathbb E}{=}\Vert\mathbf a\Vert$ for deterministic $\mathbf a$),
 $\mathrm{var}(\mathbf a){\triangleq}\mathbb E[\Vert\mathbf a\Vert^2]{-}\Vert\mathbb E[\mathbf a]\Vert^2
$ (variance) and 
 $\mathrm{sdv}(\mathbf a){\triangleq}\sqrt{\mathrm{var}(\mathbf a)}$ (standard deviation).
 We write $\Vert\cdot\Vert_{\mathbb E|\mathcal F}$, $\mathrm{var}(\cdot|\mathcal F)$ and $\mathrm{sdv}(\cdot|\mathcal F)$ when the expectation is conditional on $\mathcal  F$.
  $\mathbf a{\odot}\mathbf b$ and $\mathbf a{\otimes}\mathbf b$ are the Hadamard (entry-wise) and Kronecker products of $\mathbf a,\mathbf b$.
 We define the following $n$-dimensional vectors/matrices (we omit $n$ when clear from context):
the $m$th standard basis vector $\mathbf e_{n,m}$,  with 
$m$th component ${=}1$, and $0$ otherwise; the all ones and all zeros vectors $\mathbf 1_n$, $\mathbf 0_n$;
the identity matrix $\mathbf I_n$. 
 $\mathbbm{1}[A]$ is the indicator of  event~$A$. 

 The rest of this paper is organized as follows. In Secs. \ref{sysmo}-\ref{M2}, we present the system model and \proposed. In \secref{convanalysis}, we delve into its convergence analysis, with proofs  provided in the Appendix.
In \secref{numres}, we present numerical results, followed by concluding remarks in \secref{conclu}.
Supplementary details, including additional proofs and extended numerical evaluations, are provided in the supplemental document.

\section{System Model and \proposed}
\label{sysmo}
Consider $N$ wirelessly-connected nodes solving \ref{global} in a  decentralized fashion.
We divide time into frames of duration $T$.
In this section, we consider a generic frame (iteration) $k$, and omit the dependence on $k$. 
In each iteration, we aim at emulating \ref{DGD}, expressed in the equivalent form
\begin{align}
\mathbf w_{i}
{\gets}
\mathbf w_{i}
{+}\gamma\mathbf d_i
{-}\eta\nabla f_i(\mathbf w_{i}),
\text{ with }
\vphantom{\sum}
\smash{\mathbf d_{i}{=}\sum_{j=1}^N\ell_{ij}(\mathbf w_{i}{-}\mathbf w_{j}),}
\label{ddd}
\end{align}
where
$\mathbf d_{i}$ is the \emph{disagreement signal} (so that $\mathbf c_i=\mathbf w_{i}
{+}\gamma\mathbf d_i$),
$\gamma\in(0,1/\max_{i}\ell_{ii}]$ and $\eta>0$ are (possibly, time-varying) \emph{consensus} and \emph{learning} stepsizes, and
 $\ell_{ij}$ are a set of Laplacian weights.\footnote{\eqref{ddd} maps to \ref{DGD} with $\omega_{ij}{=}-\gamma \ell_{ij}$ for $i\neq j$, and
$\omega_{ii}{=}1{-}\gamma\ell_{ii}$.
} 
These are symmetric 
($\ell_{ij}{=}\ell_{ji}{\leq}0$ for $i{\neq}j$) and add to zero at each node ($\ell_{ii}{=}{-}\sum_{j\neq i}\ell_{ij}{>}0$). The flexibility in the choice of $\ell$ will be exploited by designing a physical layer scheme that leverages the channel propagation conditions to mix the incoming signals -- a departure from prior studies that 
rely on a mesh network \cite{9224135,9563232,9517780,9322286,9562482,9772390,9716792,9838891} and
hinge on explicit design of $\ell$ or $\omega$ (see, for instance, 
\cite[Sec. III.B]{10103556} for a discussion on the design of these weights).
\begin{defi}[Laplacian matrix]
\label{lapdef}
We define the matrix $\boldsymbol{L}{\in}\mathbb R^{N\times N}$ with components $[\boldsymbol{L}]_{ij}{=}\ell_{ij}$.
$\boldsymbol{L}$
 is symmetric ($\boldsymbol{L}{=}\boldsymbol{L}^\top$) and satisfies $\boldsymbol{L}{\cdot}\mathbf 1{=}\mathbf 0$ (since $\ell_{ii}{=}{-}\sum_{j\neq i}\ell_{ij},\forall i$).
\end{defi}
  
  \begin{figure}
     \centering
         \includegraphics[width = .7\linewidth]{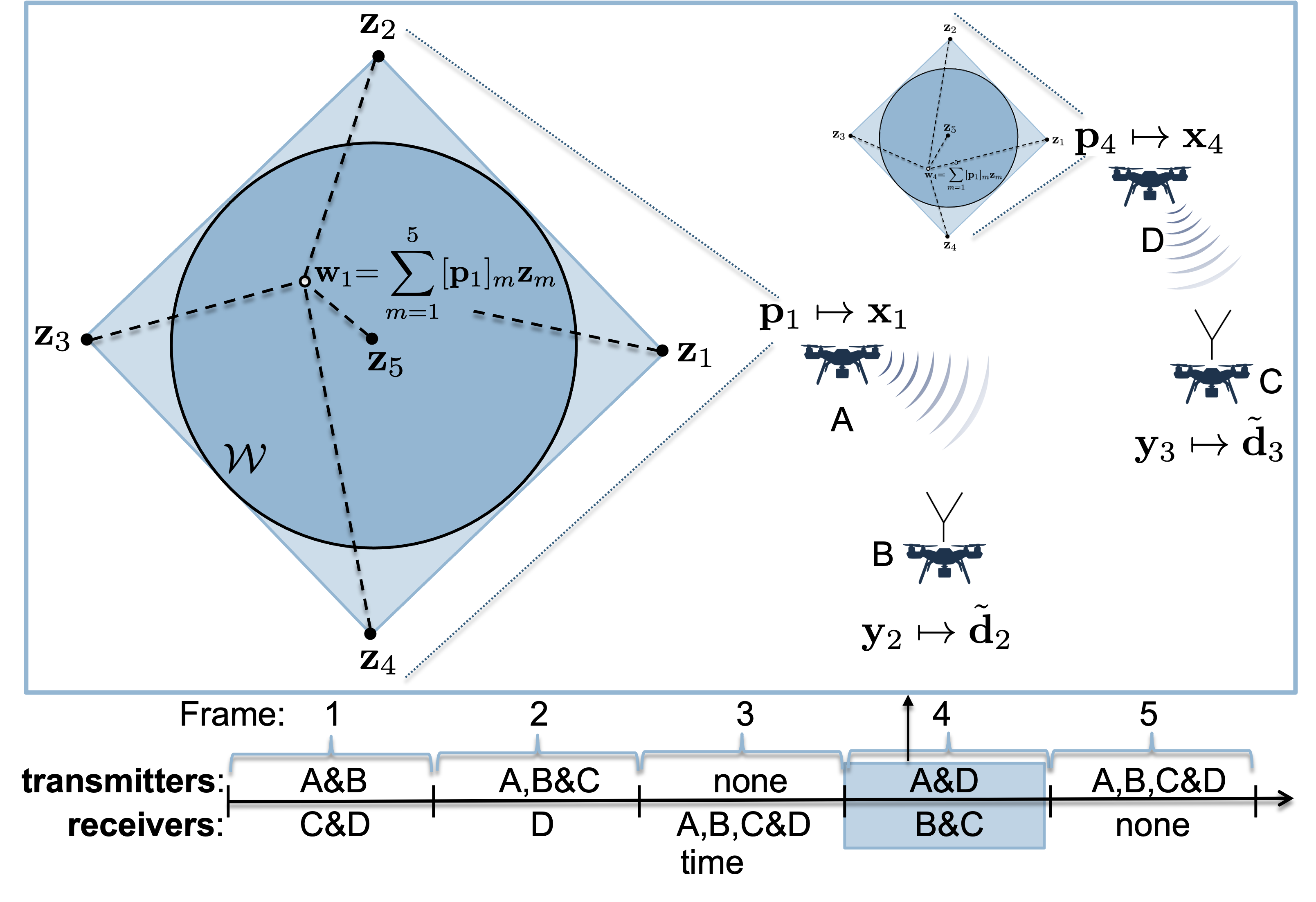}
         \vspace{-4mm}
\caption{
Example of $d{=}2$-dimensional problem with $N{=}4$ nodes.  The circle represents the set $\mathcal W$, whereas the diamond represents the convex hull of
$M{=}5$ codewords from the CP0 codebook (Example \ref{ex1}), $\mathbf z_1,\dots,\mathbf z_5$.
In frame $k=4$, nodes A and D encode their local state  $\mathbf w_i$ to a transmit signal $\mathbf x_i$ (Eq. \ref{xM1}),  
 and transmit simultaneously;
receiving nodes C and D estimate the disagreement signal $\tilde{\mathbf d}_i$ from their received signal $\mathbf y_i$
 (Eqs. \ref{yi}-\ref{dik}).\vspace{-2mm}}
\label{fig:sigmo}
\end{figure}

 We assume
 half-duplex operation, i.e., devices cannot simultaneously transmit and receive
   (Fig.~\ref{fig:sigmo}).
To accommodate this constraint, at the beginning of the frame, each node performs a random decision to either act as transmitter (denoted by the indicator variable $\chi_i{=}1$ for node $i$) or receiver ($\chi_i{=}0$) in the current frame. These decisions are i.i.d. over iterations $k$ and across nodes, and we let $p_{\mathrm{tx}}\triangleq\mathbb P(\chi_i=1)$ be the transmission probability (designed in Lemma \ref{L0}).

To achieve device-scalability, we allow the transmitting devices to transmit their local states $\mathbf w_j$ simultaneously (vs orthogonal transmissions) and we leverage the superposition property of the wireless channel to estimate $\mathbf{d}_i$. Unlike AirComp schemes developed for centralized FL that require coherent alignment of signals at the receiver, we employ \emph{non-coherent energy estimation} methods, where information is conveyed as \emph{signal energy}. As we will show, this method eliminates the need for instantaneous or average CSI at  transmitters and receivers,  but introduces errors in the estimation of $\mathbf{d}_i$,
calling for mitigation mechanisms developed in this work.

However, the local state $\mathbf w_i$
includes sign information unsuitable for energy representation.
To this end, transmitters employ an \emph{energy-based encoding procedure} to map
$\mathbf w_i$ to an "energy"
signal $\mathbf p_i\geq 0$ (\secref{enersig}).
  The latter controls energy levels across subcarriers of an OFDM symbol $\mathbf x_{i}$, then transmitted over the wireless channel (\secref{ptox}).
 Receivers, on the other hand, receive the signal $\mathbf y_i$ (\secref{rxsignal}) and
  estimate the \emph{disagreement signal} as $\tilde{\mathbf d}_i$ (\secref{disest}).
Finally, all nodes update their local state $\mathbf w_i$ 
 by combining it with $\tilde{\mathbf d}_i$, followed by  local gradient descent (\secref{optimization}).
  An example showcasing these steps is shown in Fig. \ref{fig:sigmo}.
  
 \subsection{Energy-based encoding:  $\mathbf w_i$ mapped to "energy" signal $\mathbf p_i$}
 \label{enersig}

We assume that the optimizer of \ref{global} is known to lie within a closed, convex and bounded set $\mathcal W$. 
Thus, $\mathbf w^*{\in}\mathcal W$, $\mathbf w_{i}{\in}\mathcal W,\forall i$, and \ref{global} is restricted within this set.
 For strongly-convex global functions $F(\cdot)$ with strong-convexity parameter $\mu$ (see Assumption \ref{fiassumption}),
 $\mathcal W$ may be chosen as the $d$-dimensional sphere 
  with radius
 $\frac{1}{\mu}\Vert\nabla F(\mathbf 0)\Vert$,\footnote{Strong convexity and the optimality condition $\nabla F(\mathbf w^*){=}\mathbf 0$ imply
$\Vert\nabla F(\mathbf 0)\Vert{=}\Vert\nabla F(\mathbf 0){-}\nabla F(\mathbf w^*)\Vert{\geq}\mu \Vert\mathbf 0{-}\mathbf w^*\Vert$,
hence 
$\Vert\mathbf w^*\Vert{\leq}\Vert\nabla F(\mathbf 0)\Vert/\mu$.
}
computed at initialization.

Since $\mathcal W$ is bounded, it is contained by the convex-hull defined by a finite set of codewords, as depicted in Fig. \ref{fig:sigmo}.
Concretely, let $\mathcal Z{=}\{\mathbf z_m{\in}\mathbb R^d:m=1,\dots,M\}$ be a codebook of $M$ codewords such that $\mathrm{conv}(\mathcal Z)\supseteq\mathcal W$. 
  $\mathcal Z$ is common knowledge among the nodes and  remains fixed over iterates.
Therefore, any $\mathbf w\in\mathcal W$ may be represented as a convex combination of $\mathcal Z$. In other words, there exists 
a probability vector $\mathbf p\in\mathbb R^M$ ($\mathbf 1^\top\cdot\mathbf p=1$, $\mathbf p\geq \mathbf 0$) such that
\begin{align}
\label{convcomb}
\mathbf w=\vphantom{\sum}
{\sum_{m=1}^M}
 [\mathbf p]_m\mathbf z_m.\end{align}
Given $\mathbf w_i$ and the codebook $\mathcal Z$, node $i$ then computes the associated probability vector
 $\mathbf p_i$ satisfying the convex combination condition of \eqref{convcomb}.
 Unlike $\mathbf w_i$ containing sign information, the non-negative vector $\mathbf p_i$ is suitable to encode energy levels.
 An example of this construction is provided next and depicted in Fig. \ref{fig:sigmo}, based on the cross-polytope codebook \cite{9740125}.
\begin{example}[Cross-polytope-$\phi$ (CP$\phi$) codebook]
\label{ex1}
 Let $\mathcal W\equiv\{\mathbf w:\Vert\mathbf w\Vert\leq r\}$ ($d$-dimensional sphere of radius $r$).
Consider $M{=}2d{+}1$ $d$-dimensional codewords defined as $\mathbf z_{2d+1}=\mathbf 0$,
\\\centerline{$\mathbf z_m=\sqrt{d}r\mathbf e_m,\ \mathbf z_{d+m}=-\sqrt{d}r\mathbf e_m,\ \text{for }m=1,\dots, d.$}
Then, for any $\mathbf w\in\mathcal W$, one can define the convex combination weights as
 $[\mathbf p]_{2d+1}{=}1{-}\frac{1}{\sqrt{d}r}\Vert\mathbf w\Vert_1{-}\phi$, and, for $m{=}1,\dots,d$,
\\\centerline{$[\mathbf p]_m{=}\frac{1}{\sqrt{d}r}([\mathbf w]_m)^+{+}\frac{\phi}{2d},\ [\mathbf p]_{d+m}{=}\frac{1}{\sqrt{d}r}(-[\mathbf w]_m)^+{+}\frac{\phi}{2d},$}
where $(\cdot)^+{=}\max\{\cdot,0\}$,  $\phi{\in}[0,1{-}\frac{1}{\sqrt{d}r}\Vert\mathbf w\Vert_1]$, and
$\Vert\cdot\Vert_1$ is the $\ell_1$ norm.
With these value of $\mathbf z$, $\mathbf p$ and $\phi$, it can be verified by inspection that 
$\mathbf p$ is a probability vector, i.e. $[\mathbf p]_m\geq 0,\forall m $ and $\sum_{m=1}^M[\mathbf p]_m=1$ (this step requires using $(x)^+{+}(-x)^+{=}|x|$),
and that \eqref{convcomb} holds (this step requires using $(x)^+{-}(-x)^+{=}x$).
We conclude that $\mathbf p$ defines the desired convex combination.
 The cross-polytope vector quantization of \cite{9740125} is a special case with $\phi{=}1{-}\frac{1}{\sqrt{d}r}\Vert\mathbf w\Vert_1$ and $[\mathbf p]_{2d+1}{=}0$.
\qed
\end{example}

\subsection{Transmission: $\mathbf p_i$ mapped to transmit signal $\mathbf x_i$}
\label{ptox}
To provide insights into the energy-based transmission mechanism and the subsequent disagreement signal estimation at the receiver, 
it is convenient to rewrite $\mathbf d_i$ in \eqref{ddd} in the form\footnote{Since $\ell_{ij}$ are non-positive for $i\neq j$, for convenience we absorb the sign into $-\ell_{ij}$, so that $-\ell_{ij}\geq 0$.}
\begin{align}
\label{direwritten}
\mathbf d_{i}{=}
\sum_{j\neq i}\ell_{ij}(\mathbf w_{i}{-}\mathbf w_{j}){=}\!\!
\sum_{m=1}^M\!\!\Big(\sum_{j\neq i}-\ell_{ij}[\mathbf p_j]_m\Big)(\mathbf z_m{-}\mathbf w_{i}),
\end{align}
where: in the first step, we restricted the sum to $j\neq i$, since $\ell_{ii}(\mathbf w_{i}{-}\mathbf w_{i})=0$;
in the second step, we replaced $\mathbf w_j$ with its  convex combination representation \eqref{convcomb}, 
and used the fact that $\mathbf p_j$ is a probability vector, adding to one.
This formulation suggests that, to estimate
$\mathbf d_{i}$, node $i$ can first estimate
\begin{align}\sum_{j\neq i}-\ell_{ij}[\mathbf p_j]_m,\ \forall m=1,\dots,M,\label{sump}\end{align}
since the terms $\mathbf z_m-\mathbf w_i$ are known at node $i$.
By interpreting $[\mathbf p_j]_m\geq 0$ as the energy level transmitted by device $j$ on the $m$th signal dimension, one can then interpret 
\eqref{sump} as a \emph{weighted sum} of these energies at the receiver side. Indeed, we will show that 
the wireless channel satisfies a \emph{noisy energy superposition property} similar to \eqref{sump}, allowing direct estimation of 
\eqref{sump} from received energies, with weights $-\ell_{ij}$ corresponding to the average channel gains.
  To leverage this property,
  node $i$ generates the transmit signal $\mathbf x_i$ as described next.
  
  We assume frequency-selective channels with non-negligible propagation delays between node pairs, hence the use of OFDM signaling. 
For clarity, each transmission is structured into one OFDM symbol with $M$ subcarriers (matching the codebook size $|\mathcal Z|$). 
We will extend this configuration to a generic number of subcarriers and/or OFDM symbols in \secref{M2}.
 Node $i$ then maps $\mathbf p_{i}$ to 
\begin{align}
\label{xM1}
[\mathbf x_{i}]_m=\sqrt{E\cdot M\cdot [\mathbf p_{i}]_m},\ \forall m=1,\dots, M,
\end{align}
 on the $m$th subcarrier,
with energy per sample $E$.
In other words, $[\mathbf p_{i}]_m$ controls the energy allocated on the $m$th subcarrier (up to a scaling factor $EM$).
In \eqref{xM1}, we opt for a deterministic map $\mathbf w_i{\mapsto}\mathbf p_i{\mapsto}\mathbf x_i$, in contrast to the random map
of $\mathbf w_i$ to $\mathbf z_m$ according to the probability vector $\mathbf p_i$,
 used in \cite{9815298,9740125}. In fact, this random map introduces an additional source of randomness that increases the variance of the disagreement signal estimation.
\begin{remark}
The transmitted energy per sample satisfies
$$
\frac{1}{M}\Vert\mathbf x_i\Vert^2=
\frac{1}{M}\sum_{m=1}^M|[\mathbf x_i]_m|^2
=
E\sum_{m=1}^M[\mathbf p_{i}]_m=E,
$$
since $\mathbf p_i$ is a probability vector.
Therefore, this  scheme satisfies a target power constraint, enforced via the parameter $E$.
\qed
\end{remark}

\subsection{Received signal and Noisy energy superposition property}
\label{rxsignal}
Focusing now on receiving nodes ($\chi_i=0$),
 let $\mathbf h_{ij}{\in}\mathbb C^{M}$ be the 
frequency-domain channel vector between transmitter $j$ and receiver $i$, so that $[\mathbf h_{ij}]_m$ is the channel in the $m$th subcarrier.
 In this section, we assume Rayleigh fading channels (generalized to a broad class of channel models in \secref{M2}), independent across
$i,j$ and iterations $k$, so that $[\mathbf h_{ij}]_m\sim\mathcal{CN}( 0,\Lambda_{ij}),\forall m$ with average channel gain $\Lambda_{ij}$
(${=}\Lambda_{ji}$ from channel reciprocity).
Then, node $i$ receives
\begin{align}
\label{yim}
[\mathbf y_{i}]_m=\sum_{j\neq i}\chi_j[\mathbf h_{ij}]_m[\mathbf x_{j}]_m+[\mathbf n_{i}]_m
\end{align}
on the $m$th subcarrier, where $[\mathbf n_{i}]_m$ is AWGN noise with variance $N_0$. 
In vector form,
\begin{align}
\label{yi}
\mathbf y_{i}=\sum_{j\neq i}\chi_j\mathbf h_{ij}\odot\mathbf x_{j}+\mathbf n_{i},
\end{align}
where $\mathbf n_{i}{\sim}\mathcal {CN}(\mathbf 0,N_0\mathbf I)$. 
This signal model is the discrete-time equivalent baseband representation of the communication system,
where $\mathbf x$ is the signal before the inverse DFT and the cyclic prefix appending at the transmitter,
and  $\mathbf y$ is the received signal after the DFT and cyclic prefix removal at the receiver. See, for instance, \cite[Chapter 12.4]{Goldsmith_2005}.
In the next lemma, we demonstrate the energy superposition property, 
which will be directly tied to the estimation of \eqref{sump}.
\begin{lemma}[Energy superposition property]
\label{ESP}
The energy received on the $m$th subcarrier of node $i$, $|[\mathbf y_{i}]_m|^2$, satisfies
\begin{align}
\label{Eym}
\mathbb E[|[\mathbf y_{i}]_m|^2]
{=}
E M\cdot p_{\mathrm{tx}}\sum_{j\neq i}\Lambda_{ij}[\mathbf p_{j}]_m{+}N_0,
\end{align}
 with $\mathbb E$ computed with respect to the noise, 
Rayleigh fading, and random transmission decisions of nodes $j\neq i$.\footnote{In this section, we implicitly assume that all expectations are conditional on
$\{\mathbf w_j,\forall j\}$, hence on $\{\mathbf p_j,\forall j\}$.}
\end{lemma}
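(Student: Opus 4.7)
The plan is to compute $\mathbb E[|[\mathbf y_{i}]_m|^2]$ directly by expanding the squared magnitude of the received signal in \eqref{yim} into a double sum, then to exploit (i) zero-mean complex Gaussianity of the fading coefficients, (ii) their independence across transmitter indices, (iii) independence of the half-duplex indicators $\chi_j$, and (iv) independence and zero-mean property of the AWGN, to kill off all the cross terms.

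Concretely, writing $[\mathbf y_i]_m = s_i + [\mathbf n_i]_m$ with $s_i \triangleq \sum_{j\neq i}\chi_j[\mathbf h_{ij}]_m[\mathbf x_j]_m$, I would first observe that since $[\mathbf n_i]_m\sim\mathcal{CN}(0,N_0)$ is independent of $\mathbf h_{ij}$ and $\chi_j$, the cross term vanishes in expectation, leaving
\begin{align*}
\mathbb E\bigl[|[\mathbf y_i]_m|^2\bigr] = \mathbb E\bigl[|s_i|^2\bigr] + N_0.
\end{align*}
Next I would expand
\begin{align*}
\mathbb E[|s_i|^2] = \sum_{j\neq i}\sum_{j'\neq i}\mathbb E\bigl[\chi_j\chi_{j'}\bigr]\,[\mathbf x_j]_m[\mathbf x_{j'}]_m\,\mathbb E\bigl[[\mathbf h_{ij}]_m\overline{[\mathbf h_{ij'}]_m}\bigr],
\end{align*}
using that $[\mathbf x_j]_m\in\mathbb R_{\ge 0}$ is deterministic (conditionally on $\mathbf p_j$) and that $\chi$'s are independent of the channels. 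For $j\neq j'$, independence of $[\mathbf h_{ij}]_m$ and $[\mathbf h_{ij'}]_m$ together with their zero mean makes the cross correlation vanish. For $j=j'$, $\mathbb E[|[\mathbf h_{ij}]_m|^2]=\Lambda_{ij}$ by definition of the Rayleigh fading variance, and $\mathbb E[\chi_j^2]=\mathbb E[\chi_j]=p_{\mathrm{tx}}$ since $\chi_j$ is Bernoulli.

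Combining these observations yields
\begin{align*}
\mathbb E[|s_i|^2] = p_{\mathrm{tx}}\sum_{j\neq i}\Lambda_{ij}\,|[\mathbf x_j]_m|^2,
\end{align*}
and substituting $|[\mathbf x_j]_m|^2 = E\cdot M\cdot[\mathbf p_j]_m$ from \eqref{xM1} delivers \eqref{Eym}.

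This argument is essentially bookkeeping and I do not expect any real obstacle: the only subtlety worth stating carefully is the justification for dropping the $j\neq j'$ cross terms, which relies jointly on independence across distinct transmitter indices and on the circular symmetry (zero mean) of Rayleigh fading — precisely the property advertised in the paper's motivation for a \emph{non-coherent} energy-based scheme. I would make this explicit so that the reader sees why coherent cross-interference averages out and only the non-negative per-transmitter energies $[\mathbf p_j]_m$, weighted by the average channel gains $\Lambda_{ij}$, survive.
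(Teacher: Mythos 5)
Your proposal is correct and follows essentially the same route as the paper: the paper simply packages your explicit double-sum expansion into a one-line identity for sums of zero-mean uncorrelated complex random variables (its Eq.~\eqref{uncorrsum}), and then invokes the same three facts you use — independence across transmitter indices, zero mean of the Rayleigh coefficients, and $\mathbb E[|\chi_j[\mathbf h_{ij}]_m[\mathbf x_j]_m|^2]=p_{\mathrm{tx}}\Lambda_{ij}|[\mathbf x_j]_m|^2$ — before substituting \eqref{xM1}. No gaps; the justification you flag for dropping the $j\neq j'$ cross terms is exactly the uncorrelatedness argument the paper relies on.
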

\begin{proof}
Let $Y=\sum_{j} X_j$ be a sum of zero-mean ($\mathbb E[X_j]=0$) uncorrelated ($\mathbb E[X_j X_{j'}^*]=0$ for $j\neq j'$) complex-valued random variables. Then,
from the linearity of expectation,
\begin{align}
\label{uncorrsum}
\mathbb E[|Y|^2]=\mathbb E\Big[\sum_{j} |X_j|^2+\sum_{j,j'\neq j} X_j X_{j'}^*\Big]=\sum_{j}\mathbb E[|X_j|^2].
\end{align}
In \eqref{yim}, $[\mathbf y_{i}]_m$ is a sum of uncorrelated random variables, since
 channels $\mathbf h_{ij}$ and transmission decisions $\chi_j$ are independent across $j$, $[\mathbf n_{i}]_m{\sim}\mathcal {CN}(0,N_0)$ is independent.
 Furthermore, 
$\mathbb E[\chi_j[\mathbf h_{ij}]_m[\mathbf x_{j}]_m]=0$ 
and
 $\mathbb E[|\chi_j[\mathbf h_{ij}]_m[\mathbf x_{j}]_m|^2]=p_{\mathrm{tx}}\Lambda_{ij}|[\mathbf x_{j}]_m|^2$,
 since $[\mathbf h_{ij}]_m{\sim}\mathcal{CN}( 0,\Lambda_{ij})$ and $\mathbb P(\chi_j=1)=p_{\mathrm{tx}}$.
 Eq. \eqref{Eym} directly follows from \eqref{uncorrsum} and \eqref{xM1}.
\end{proof}
Lemma \ref{ESP} states that, on average, the received energy is a superposition of transmitted energies, scaled by the average channel gains, plus the noise energy.
It is thus apparent that $|[\mathbf y_{i}]_m|^2$ contains information about \eqref{sump}, in the form of an unbiased estimate of it,
with Laplacian weights corresponding to the average channel gains, $-\ell_{ij}=\Lambda_{ij},\forall i\neq j$.
This fact is exploited at the receiver to
compute an unbiased estimate of the disagreement signal, discussed next.
\subsection{Disagreement signal estimation, $\tilde{\mathbf d}_i$} 
\label{disest}
\noindent Lemma \ref{ESP} suggests that node $i$ can estimate $\sum_{j\neq i}\Lambda_{ij}[\mathbf p_{j}]_m$ as
\begin{align}
\label{fdgnhdsf}
&r_{im}
=
(1-\chi_i)\frac{|[\mathbf y_{i}]_{m}|^2-N_0}{p_{\mathrm{tx}}(1-p_{\mathrm{tx}})EM},\ \forall m=1,\dots,M.
\end{align}
Note that $r_{im}=0$ for transmitting nodes ($\chi_i=1$), consistent with the half-duplex constraint.
To see that this is an unbiased estimate of  $\sum_{j\neq i}\Lambda_{ij}[\mathbf p_{j}]_m$, we take the expectation of $r_{im}$, conditional on $\chi_i$, and use Lemma \ref{ESP},
yielding
$$
\mathbb E[r_{im}|\chi_i]
=
\frac{1-\chi_i}{1-p_{\mathrm{tx}}}\sum_{j\neq i}\Lambda_{ij}[\mathbf p_{j}]_m.
$$
Finally, we take the expectation with respect to the transmit decision $\chi_i\in\{0,1\}$ and use $\mathbb E[\chi_i]=p_{\mathrm{tx}}$, yielding
\begin{align}
\label{Erm}
\mathbb E[r_{im}]
=
\sum_{j\neq i}\Lambda_{ij}[\mathbf p_{j}]_m,\ \forall m=1,\dots,M.
\end{align}
Therefore,
$r_{im}$ is an unbiased estimate of \eqref{sump}, with Laplacian weights given by the average channel gains,
$\ell_{ij}{=}-\Lambda_{ij}$ and $\ell_{ii}{=}\sum_{j\neq i}\Lambda_{ij}{>}0$.
It is then  straightforward to generate an unbiased estimate of $\mathbf d_i$,  by simply replacing $\sum_{j\neq i}-\ell_{ij}[\mathbf p_j]_m$ in
\eqref{direwritten} with its unbiased estimate $r_{im}$, yielding
\ba{
&\tilde{\mathbf d}_{i}=
\sum_{m=1}^{M}r_{im}(\mathbf z_m-{\mathbf w}_{i}).
}{dik}
Since $\tilde{\mathbf d}_{i}$ is an unbiased estimate of $\mathbf d_i$, it follows that
\ba{
\mathbb E[\tilde{\mathbf d}_{i}]{=}
\sum_{m=1}^{M}\mathbb E[r_{im}](\mathbf z_m-{\mathbf w}_{i}){=}
\sum_{j\neq i}\Lambda_{ij}(\mathbf w_{j}{-}\mathbf w_{i})
{=}
{\mathbf d}_{i}.\!\!
}{Ed}
However, since this property holds only in expectation,
  $\tilde{\mathbf d}_{i}$ exhibits zero-mean deviations around its expected value,
due to noise, fading, random transmission decisions, and energy fluctuations in the wireless channel.
 The analysis of how these estimation errors impact convergence and are mitigated via a suitable stepsize design is conducted in \secref{convanalysis}.
 
The $\ell$ values are indeed Laplacian weights, since they satisfy $\ell_{ij}{=}-\Lambda_{ij}{\leq}0$ for $i{\neq}j$, $\ell_{ij}{=}\ell_{ji}$ (average channel reciprocity, $\Lambda_{ij}=\Lambda_{ji}$), and $\ell_{ii}{=}\sum_{j\neq i}\Lambda_{ij}{>}0$.
Accordingly, we define the Laplacian matrix $\boldsymbol{L}\in\mathbb R^{N\times N}$ as in Definition \ref{lapdef},
induced by the average channel gains $\Lambda_{ij}$.


\subsection{Local optimization state update}
 \label{optimization}
During the frame, node $i$ computes an unbiased stochastic gradient
$\mathbf g_{i}$, with $\mathbb E[\mathbf g_{i}|\mathbf w_i]{=}\nabla f_i(\mathbf w_{i})$.
It then updates $\mathbf w_{i}$ as
\begin{align}
\label{updateeq}
\mathbf w_{i}\gets\Pi[\mathbf w_{i}+\gamma\tilde{\mathbf d}_{i}-\eta\mathbf g_{i}],
\end{align}
where  
 $\Pi[\mathbf a]$ is a projection operator,
$$\Pi[\mathbf a]=\argmin_{\mathbf w\in\mathcal W}\Vert\mathbf w-\mathbf a\Vert,$$
restricting the algorithm within the set $\mathcal W$, so that  $\mathbf w_{i}\in\mathcal W,\forall i$.
$\gamma,\eta>0$ are (possibly, time-varying) \emph{consensus} and \emph{learning} stepsizes, respectively,:
the former controls information diffusion and mitigates signal fluctuations in the disagreement signal estimation (used also in
\cite{9563232} to mitigate wireless channel impairments and in 
\cite{8786146} to mitigate quantization errors);
the latter regulates the magnitude of gradient steps, hence the learning progress.
As detailed in \secref{convanalysis}, their design is crucial to balancing these competing objectives.



These steps   are repeated in frame $k{+}1$ with the new local state $\mathbf w_{i}$, and so on.
  Note the key differences with respect to the conventional formulation
\eqref{ddd}: 1) the disagreement signal $\mathbf d_i$ and gradient $\nabla f_i(\mathbf x_i)$ are replaced with unbiased estimates, $\tilde{\mathbf d}_i$ and $\mathbf g_i$, respectively; 2) the projection operator $\Pi$.
  Overall,  \proposed\ expressed in \eqref{updateeq} can thus be interpreted as a 
projected DGD with noisy consensus and noisy gradients,
whose convergence properties are studied in \secref{convanalysis}.
\section{Generalizations}
\label{M2}
\subsection{More general frame structure}

More generally, a frame is constituted of $O$ OFDM symbols, each with $\mathrm{SC}$ subcarriers,
defining $Q{=}O{\times}\mathrm{SC}{\geq}M$ resource units.
An example is depicted in Fig. \ref{fig:ofdmframe}.
Let $q{\in}\{1,{\dots},Q\}{\equiv}\mathcal Q$ be the $q$th resource unit, on a certain OFDM symbol and subcarrier.
Unlike the previous section where $Q=M$, hence each signal dimension could only be mapped to a single subcarrier,
we now have $Q\geq M$, which allows to map each component  of the vector $\mathbf p_{i}$ to multiple resources units.
We thus define a partition of $\mathcal Q$ into $M$ non-empty sets, $\mathcal R_1,\dots,\mathcal R_M$, with $\mathcal R_m\cap\mathcal R_{m'}\equiv\emptyset,\forall m\neq m'$ and $\cup_{m=1}^M\mathcal R_m\equiv\mathcal Q$.
For instance, in Fig. \ref{fig:ofdmframe},
$\mathcal R_1\equiv\{1,6,11,16,21\}$,
$\mathcal R_2\equiv\{2,7,12,17,22\}$,
$\mathcal R_3\equiv\{3,8,13,18,23\}$,
$\mathcal R_4\equiv\{4,9,14,19,24\}$,
containing 5 resource units, and
$\mathcal R_5\equiv\{5,10,15,20\}$ containing 4.
We then map the $m$th component $[\mathbf p_{i}]_m$ to the resource units $\mathcal R_m$ by repetition.
In vector notation, we define
\begin{align}
\label{preambles}
\mathbf u_m\triangleq \sqrt{\frac{Q}{R_m}}\sum_{q\in\mathcal R_m}\mathbf e_{q}\in\mathbb C^Q,
\end{align}
with $R_m\triangleq |\mathcal R_m|$,
 as the vector describing the resource units that are activated in the set $\mathcal R_m$,
 with $\Vert\mathbf u_m\Vert=\sqrt{Q}$.
 In other words, $\mathbf u_m$ has components equal to $\sqrt{Q/R_m}$ in the indices in the set $\mathcal R_m$, 
and equal to zero otherwise. Note also that
$\{\mathbf u_m\}$ are orthogonal to each other, $\mathbf u_m^{\mathrm H}\mathbf u_{m'}=0$ for $m\neq m'$, since $\mathcal R_m\cap\mathcal R_{m'}\equiv \emptyset$.
 With this definition, the allocation of $[\mathbf p_{i}]_m$ to its associated resource units by repetition can be simply 
described as $\sqrt{[\mathbf p_{i}]_m}\mathbf u_m$, so that $[\mathbf p_{i}]_m$ scales the energy allocated to $\mathbf u_m$, and
 the transmit signal is defined as
$$
\mathbf x_{i}=\sqrt{E}\cdot
\sum_{m=1}^{M}\sqrt{[\mathbf p_{i}]_m}\cdot\mathbf u_{m}.
$$
With $\mathbf x_i$ thus defined, the  signal received
by node $i$ is given as in \eqref{yi}.
Node $i$ then computes the $M$ energy signals
\begin{align}
\label{fdgnhdsf2}
&r_{im}
=
(1-\chi_i)\sum_{q\in\mathcal R_{m}}
\frac{|[\mathbf y_{i}]_{q}|^2-N_0}{p_{\mathrm{tx}}(1-p_{\mathrm{tx}})EQ}, 
\end{align}
yielding \eqref{xM1}-\eqref{fdgnhdsf} as a special case with $Q{=}M$ and $\mathcal R_m{\equiv}\{m\}$.
In \eqref{fdgnhdsf2}, energies are aggregated across
the resource units $\mathcal R_{m}$ allocated on the $m$th component, rather than a single subcarrier in \eqref{fdgnhdsf}.
To show that $r_{im}$ is an unbiased estimate of \eqref{sump}, note that $[\mathbf x_{j}]_q=\sqrt{EQ/R_m}\cdot\sqrt{[\mathbf p_{j}]_m}$ and 
\begin{align*}
[\mathbf y_{i}]_q{=}\sum_{j\neq i}\chi_j[\mathbf h_{ij}]_q[\mathbf x_{j}]_q+[\mathbf n_{i}]_q,\ \forall q\in\mathcal R_m.
\end{align*}
Then, we use similar steps as in Lemma \ref{ESP} to find
\begin{align}
\label{yiprexp}
\mathbb E[|[\mathbf y_{i}]_q|^2]{=}\frac{EQ}{R_m} p_{\mathrm{tx}}\sum_{j\neq i}\Lambda_{ij}[\mathbf p_{j}]_m+N_0,\ \forall q\in\mathcal R_m,
\end{align}
so that $\mathbb E[r_{im}]=\sum_{j\neq i}\Lambda_{ij}[\mathbf p_{j}]_m$ as in \eqref{Erm}, after 
replacing \eqref{yiprexp} into the expectation of \eqref{fdgnhdsf2}, and computing the expectation with respect to $\chi_i$.
Therefore, \eqref{Ed} still holds.

\begin{figure}
     \centering
         \includegraphics[width = \linewidth]{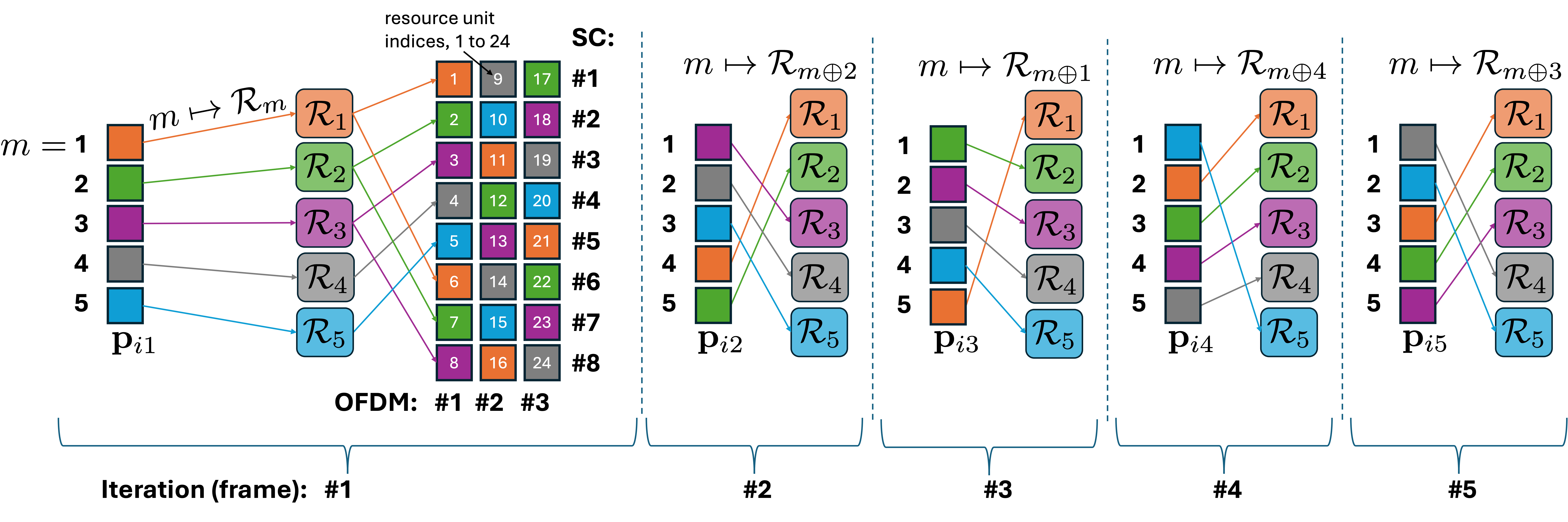}
         \vspace{-7mm}
\caption{
Example of $M{=}5$-dimensional $\mathbf p$ mapped to a frame containing $O=3$ OFDM symbols, each with $\mathrm{SC}=8$ subcarriers, over 5 iterations.
The color-coded sets $\mathcal R_{1},\dots,\mathcal R_5$ correspond to a partition of the $Q=3\times 8=24$ resource units (across subcarriers and OFDM symbols)
allocated to a certain signal dimension. For instance, at iteration $1$, $m=1$ is allocated to $\mathcal R_1$, containing the 
 resource units $\{1,6,11,16,21\}$.
Frames \#1 to \#5 demonstrate the circular subcarrier shift (\secref{broadclass}). In this example, each signal dimension is mapped to all resource units across 5 iterations.
\vspace{-2mm}}
\label{fig:ofdmframe}
\end{figure}

\subsection{Broad class of channel models}
\label{broadclass}
 The unbiasedness of $\tilde{\mathbf d}_{i}$  in \eqref{Ed} is key to proving convergence in \secref{convanalysis},
 but it may not hold under more general channel models than those considered in \secref{sysmo}. For instance, if $\mathbf h_{ij}$ remains fixed over iterations $k$ (static channels),
\begin{align}
\label{biasissue}
&\mathbb E[r_{im}]=\sum_{j\neq i}\frac{1}{R_m}\sum_{q\in\mathcal R_m}|[\mathbf h_{ij}]_q|^2[\mathbf p_{j}]_m\\
&+2p_{\mathrm{tx}}\sum_{j,j'\neq i:j'<j}\frac{1}{R_m}\sum_{q\in\mathcal R_m}\mathrm{re}([\mathbf h_{ij}]_q [\mathbf h_{ij'}]_q^*)\sqrt{[\mathbf p_{j}]_m[\mathbf p_{j'}]_m},
\nonumber
\end{align}
so that \eqref{Erm} no longer holds, violating the unbiasedness of the estimate $\tilde{\mathbf d}_i$.
Compared to \eqref{Erm}: 1) each component of $\mathbf p_{j}$ experiences a different average channel gain, $\frac{1}{R_m}\sum_{q\in\mathcal R_m}|[\mathbf h_{ij}]_q|^2$ for the $m$th component, instead of a common $\Lambda_{ij}$; and 2) the cross-product terms
$\mathrm{re}([\mathbf h_{ij}]_q [\mathbf h_{ij'}]_q^*)$ do not vanish.
A bias term thus appears, that accumulates over time and leads to loss of the convergence properties.
To recover the unbiasedness of $\tilde{\mathbf d}_{i}$, we introduce
two mechanisms under a broad class of channels, including Rayleigh fading and static as special cases.
\begin{assumption}
\label{ch2}
The channels $\mathbf h_{ij}$ are i.i.d. over iterations $k$.\footnote{Static channels can be interpreted as a degenerate i.i.d. process in which
   $\mathbf h_{ij}$ takes a deterministic value, with probability 1.}
We define the average gain across $\mathcal R_m$  as
\\
\centerline{$\Lambda_{ij}^{(m)}\triangleq\mathbb E\Big[\frac{1}{R_m}\sum_{q\in\mathcal R_m}|[\mathbf h_{ij}]_{q}|^2\Big]$,}
\\
and further across $m$ as
  $\Lambda_{ij}{\triangleq}\frac{1}{M}\sum_{m=1}^{M}\Lambda_{ij}^{(m)}$. 
\end{assumption}

\noindent\emph{Random phase shift}: each transmitter applies a random phase shift to the transmit signal, uniform
in $[0,2\pi]$, i.i.d. over time and across nodes, and i.i.d. across resource units within the same set $\mathcal R_m$.
Let $[\bstheta_j]_q$ be the phase shift applied to the $q$th resource unit by node $j$.
Then, the combined channel  $\tilde{\mathbf h}_{ij}{=}e^{\mathrm{j}\bstheta_j}{\odot}{\mathbf h}_{ij}$ between transmitter $j$ and receiver $i$
($[\tilde{\mathbf h}_{ij}]_q{=}e^{\mathrm{j}[\bstheta_j]_q}{\cdot}[{\mathbf h}_{ij}]_q$ on the $q$th resource unit)
 becomes
circularly symmetric,
yielding $\mathbb E[\mathrm{re}([\tilde{\mathbf h}_{ij}]_q [\tilde{\mathbf h}_{ij'}]_q^*)]=0,\forall j\neq j'$, upon taking the expectation on $[\bstheta_j]_q$.
Therefore, the cross product terms in \eqref{biasissue} vanish under a random phase shift.
\\\noindent\emph{Coordinated circular subcarrier shift}: each transmitter applies a shift $\varsigma$ to the mapping of signal components to resource units,
i.e., $[\mathbf p_{i}]_m$ is mapped to the resource units in the set  $\mathcal R_{m\oplus \varsigma}$;\footnote{Here, $m\oplus\varsigma{=}1{+}\mathrm{mod}(m{-}1{+}\varsigma,M)$ is the sum modulo $M$.}
 $\varsigma$ is chosen uniformly at random in $\mathcal M{\equiv}\{0,{\dots}, M{-}1\}$, i.i.d. over time, and identical across devices (e.g., a pseudo-random sequence with a common seed).
 For instance, in Fig. \ref{fig:ofdmframe}, $m$ is mapped to $\mathcal R_{m}$ at iteration 1 ($\varsigma{=}0$),
 to $\mathcal R_{m\oplus 2}$ at iteration 2 ($\varsigma{=}2$), to $\mathcal R_{m\oplus 1}$ at iteration 3 ($\varsigma{=}1$), 
 to $\mathcal R_{m\oplus 4}$ at iteration 4 ($\varsigma{=}4$),
 and to $\mathcal R_{m\oplus 3}$ at iteration 5 ($\varsigma{=}3$).
Thanks to this mechanism, each signal component goes through all resource units over multiple iterations, and experiences the same channel gain $\Lambda_{ij}$, on average.

Combining these two mechanisms, 
the transmit signal is
\begin{align}
\label{xi}
\mathbf x_{i}=\sqrt{E}\cdot
e^{\mathrm{j}\bstheta_i}\odot
\sum_{m=1}^{M}\sqrt{[\mathbf p_{i}]_m}\cdot\mathbf u_{m\oplus\varsigma}.
\end{align}
Upon receiving $\mathbf y_{i}$ as in \eqref{yi}, node $i$ aggregates the energies received on $\mathcal R_{m\oplus \varsigma}$
(the resource units allocated to $m$, accounting for the shift $\varsigma$)
 to compute
$r_{im}$, hence we modify \eqref{fdgnhdsf2}~as
\begin{align}
\label{fdgnhdsf3}
&
r_{im}
=
(1-\chi_i)\sum_{q\in\mathcal R_{m\oplus \varsigma}}
\frac{|[\mathbf y_{i}]_{q}|^2-N_0}{p_{\mathrm{tx}}(1-p_{\mathrm{tx}})EQ}. 
\end{align}
$\tilde{\mathbf d}_i$ is then computed as in  \eqref{dik},
and the local model update follows \eqref{updateeq}.
To show that $r_{im}$ yields an unbiased estimate of \eqref{sump}, note that in the $q$th resource unit, $q\in\mathcal R_{m\oplus\varsigma}$, we have
\begin{align*}
[\mathbf y_{i}]_q{=}\sum_{j\neq i}\chi_j[\mathbf h_{ij}]_q[\mathbf x_{j}]_q{+}[\mathbf n_{i}]_q,\ \forall q\in\mathcal R_{m\oplus\varsigma},
\end{align*}
with $[\mathbf x_{j}]_q=e^{\mathrm{j}[\bstheta_j]_q}\sqrt{EQ/R_{m\oplus\varsigma}}\sqrt{[\mathbf p_{j}]_m}$.
Then, we compute the expectation of $|[\mathbf y_{i}]_q|^2$
 with respect to the phase shifts, random transmissions of $j\neq i$, noise and channels, conditional on the subcarrier shift $\varsigma$. 
Thanks to the  random phase shift $[\bstheta_j]_q$, i.i.d. across $j$,
 $\chi_j[\mathbf h_{ij}]_q[\mathbf x_{j}]_q$ are zero mean and uncorrelated across $j$.
We can then use similar steps as in Lemma \ref{ESP} to find
$$
\mathbb E[|[\mathbf y_{i}]_q|^2|\varsigma]{=}\frac{EQp_{\mathrm{tx}}}{R_{m\oplus\varsigma}}\sum_{j\neq i}\mathbb E[|[\mathbf h_{ij}]_q|^2][\mathbf p_{j}]_m{+}N_0,
\forall q{\in}\mathcal R_{m\oplus\varsigma}.
$$
Using this expression in the expectation of \eqref{fdgnhdsf3}, we then find
$$
\mathbb E[r_{im}|\chi_i,\varsigma]{=}
\frac{1-\chi_i}{1-p_{\mathrm{tx}}}
\sum_{j\neq i}\frac{1}{R_{m\oplus\varsigma}}\sum_{q\in\mathcal R_{m\oplus \varsigma}}\mathbb E[|[\mathbf h_{ij}]_q|^2][\mathbf p_{j}]_m
$$$$
=
\frac{1-\chi_i}{1-p_{\mathrm{tx}}}\sum_{j\neq i}\Lambda_{ij}^{(m\oplus\varsigma)}[\mathbf p_{j}]_m,
$$
with $\Lambda_{ij}^{(m')}$ defined in  Assumption \ref{ch2}. Therefore,
 the cross-product term in \eqref{biasissue} vanishes, thanks to the circular symmetry of the equivalent channels $\tilde{\mathbf h}_{ij}{=}e^{\mathrm{j}\bstheta_j}{\odot}{\mathbf h}_{ij}$.
Finally, we compute the expectation with respect to $\chi_i$ and $\varsigma\sim\mathrm{uniform}(\mathcal M)$ (hence, $m'=m\oplus \varsigma$ is uniform in $\{1,2,\dots,M\}$), yielding
$$
\mathbb E[r_{im}]
=
\frac{1}{M}\sum_{m'=1}^{M}\sum_{j\neq i}\Lambda_{ij}^{(m')}[\mathbf p_{j}]_m
=
\sum_{j\neq i}\Lambda_{ij}[\mathbf p_{j}]_m,
$$
with $\Lambda_{ij}$ defined in  Assumption \ref{ch2}.
We conclude that $r_{im}$ satisfies \eqref{Erm},
so that $\tilde{\mathbf d}_i$ is an unbiased estimate of $\mathbf d_i$, 
as in \eqref{Ed}, and the convergence analysis of \secref{convanalysis} readily applies.

Therein, the convergence
is dictated by the variance of the estimation error $\tilde{\mathbf d}_i{-}\mathbb E[\tilde{\mathbf d}_i]$, bounded
in the next lemma.
It shows the impact of:
the variation of $|[\mathbf h_{ij}]_{q}|^2$ around the average channel gain (averaged across resource units) $\Lambda_{ij}$, 
 captured by the term $\vartheta$;
the variation of $\hat\lambda_{ij}^{(m)}$ (sample average channel gain across $\mathcal R_{m}$) around $\Lambda_{ij}$, captured by the term $\varpi$;
 the maximum sum gain ($\Lambda^*$).
 The lemma also provides  a variance-minimizing design of the transmission probability $p_{\mathrm{tx}}$ (proof in the  supplemental document).
\begin{lemma}
\label{L0}
Consider channels satisfying 
Assumption \ref{ch2}. Assume that the resource units are evenly allocated among $m=1,\dots, M$, i.e., $|R_m{-}Q/M|{<}1,\forall m$ (e.g., Fig. \ref{fig:ofdmframe}). 
Let:
\begin{align}
\label{vartheta}
&
\vartheta\triangleq\max_{i,j\neq i}\frac{1}{\Lambda_{ij}}\Big(\frac{1}{Q}\sum_{q=1}^Q\mathbb E[(|[\mathbf h_{ij}]_{q}|^2-\Lambda_{ij})^2]\Big)^{1/2},\\&
\label{varsigma}
\varpi\triangleq\max_{i,j\neq i}
\frac{1}{\Lambda_{ij}}\Big(\frac{1}{M}\sum_{m=1}^M\mathbb E[(\hat\lambda_{ij}^{(m)}-\Lambda_{ij})^2]\Big)^{1/2},
\\
\label{Lambda}
&\Lambda^*\triangleq\max_{i}\sum_{j\neq i}\Lambda_{ij},
\end{align}
where $\hat\lambda_{ij}^{(m)}=\frac{1}{{R_{m}}}\sum_{q\in\mathcal R_{m}}|[{\mathbf h}_{ij}]_{q}|^2$
is the sample average channel gain across the resource units in the set $\mathcal R_{m}$.
Then, $\frac{1}{N}\sum_{i=1}^N\mathrm{var}(\tilde{\mathbf d}_{i}){\leq}\Sigma^{(1)}
\triangleq\max\limits_{\mathbf z,\mathbf z'\in\mathcal Z}\Vert\mathbf z-\mathbf z'\Vert^2$
\\\centerline{$
\times \frac{1}{1-p_{\mathrm{tx}}}\Big[
 \frac{\sqrt{M}}{\sqrt{Q}}\sqrt{2(1+2\vartheta^2)}\Lambda^*
+\frac{\sqrt{1+\varpi^2}}{\sqrt{p_{\mathrm{tx}}}}\Lambda^*
+ \frac{\sqrt{M}}{\sqrt{Q}}\frac{N_0}{Ep_{\mathrm{tx}}}
\Big]^2.
$}
The value of the transmission probability $p_{\mathrm{tx}}$ minimizing this bound is the unique solution in $(0,1)$ of
\\\centerline{$
\sqrt{2(1{+}2\vartheta^2)}p_{\mathrm{tx}}^{3/2}{+}
\frac{\sqrt{Q}}{\sqrt{M}}\sqrt{1{+}\varpi^2}(2p_{\mathrm{tx}}{-}1){+}
\frac{N_0}{\Lambda^* E}\frac{3p_{\mathrm{tx}}{-}2}{\sqrt{p_{\mathrm{tx}}}}
=0.
$}
\end{lemma}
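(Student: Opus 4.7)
The plan is to bound $\mathrm{var}(\tilde{\mathbf d}_i)$ in three stages: (i) pull out a codebook-diameter prefactor $D^2 \triangleq \max_{\mathbf z, \mathbf z' \in \mathcal Z}\Vert \mathbf z - \mathbf z'\Vert^2$ using the convex-hull structure of $\mathbf w_i$; (ii) decompose the residual error into strata tied to the distinct randomness sources and control each via the triangle inequality of the $\Vert\cdot\Vert_{\mathbb E}$ norm; (iii) optimize the resulting bound over $p_{\mathrm{tx}}$. For stage (i), since $\mathbf w_i = \sum_{m'}[\mathbf p_i]_{m'}\mathbf z_{m'}$, I write $\mathbf z_m - \mathbf w_i = \sum_{m'}[\mathbf p_i]_{m'}(\mathbf z_m - \mathbf z_{m'})$ and apply Jensen's inequality to the convex combination in $m'$ to obtain $\Vert\tilde{\mathbf d}_i - \mathbb E[\tilde{\mathbf d}_i]\Vert_{\mathbb E}^2 \leq \max_{m'}\mathbb E\Vert \sum_m \epsilon_{im}(\mathbf z_m - \mathbf z_{m'})\Vert^2$ with $\epsilon_{im} \triangleq r_{im} - \mathbb E[r_{im}]$, so that each inner increment has norm at most $D$. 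This is tighter than a naive triangle bound, which would lose a factor of $M$.

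For stage (ii), I decompose $\epsilon_{im}$ via the martingale telescope $\epsilon_{im} = \epsilon_{im}^{(n)} + \epsilon_{im}^{(h)} + \epsilon_{im}^{(\chi,\bstheta,\varsigma)}$ obtained by conditioning on nested $\sigma$-algebras, corresponding to: (a) the additive noise together with the noise--signal cross terms in $|[\mathbf y_i]_q|^2$; (b) the in-block dispersion of $|[\mathbf h_{ij}]_q|^2$ around the sample mean $\hat\lambda_{ij}^{(m\oplus\varsigma)}$, together with the zero-mean cross terms $\mathrm{re}([\mathbf h_{ij}]_q[\mathbf h_{ij'}]_q^*)$ that vanish under the phase randomization of \secref{broadclass}; and (c) the dispersion of $\hat\lambda_{ij}^{(m\oplus\varsigma)}$ around $\Lambda_{ij}$ together with the Bernoulli fluctuations of $\chi_j$. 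By the tower property these strata are mutually uncorrelated, so by the triangle inequality of $\Vert\cdot\Vert_{\mathbb E}$ the total standard deviation is bounded by their sum. Computing each stratum using the moment machinery of Lemma \ref{ESP} and the definitions \eqref{vartheta}--\eqref{Lambda} reproduces the three summands inside the bracket of $\Sigma^{(1)}$: the $\sqrt{M}/\sqrt{Q}$ factors track the $1/R_m \approx M/Q$ averaging within each block; the $\sqrt{2(1+2\vartheta^2)}$ arises from splitting the diagonal ($j=j'$) and off-diagonal ($j\neq j'$) contributions; the $1/\sqrt{p_{\mathrm{tx}}}$ in the $\varpi$-term emerges from $\mathrm{sdv}(\chi_j) = \sqrt{p_{\mathrm{tx}}(1-p_{\mathrm{tx}})}$ partly cancelling the $p_{\mathrm{tx}}(1-p_{\mathrm{tx}})$ normalizer of $r_{im}$; and the global $1/(1-p_{\mathrm{tx}})$ prefactor is contributed by the $(1-\chi_i)$ indicator.

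For stage (iii), writing the bound as $\Sigma^{(1)} = D^2(1-p)^{-1}[A + Bp^{-1/2} + Cp^{-1}]^2$ with $A, B, C$ the three bracket coefficients, the first-order optimality condition reduces, after factoring out the strictly positive inner term $u(p) \triangleq A + Bp^{-1/2} + Cp^{-1}$, to $2(1-p)u'(p) + u(p) = 0$. Multiplying through by $p^{3/2}$ and substituting the explicit forms of $A, B, C$ collapses the equation to the cubic-in-$\sqrt{p}$ form stated in the lemma. Existence and uniqueness of a root in $(0,1)$ follow by evaluating the left-hand side at the endpoints (diverging to $-\infty$ as $p \to 0^+$ through the $-2C/\sqrt{p}$ contribution; strictly positive at $p=1$) combined with termwise monotonicity of each summand.

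The main obstacle I foresee is stage (ii): arranging the decomposition so that the three strata are genuinely uncorrelated and so that each stratum's $\Vert\cdot\Vert_{\mathbb E}$ norm is computed with the correct, tight constant to reproduce the square-summands of $\Sigma^{(1)}$. This requires conditioning in the right order (circular shift and half-duplex decisions outermost, then channels, then phases, then additive noise) and carefully exploiting the disjointness of $\mathcal R_1, \dots, \mathcal R_M$ so that same-block contributions across distinct transmitters $j \neq j'$ decorrelate after phase randomization.
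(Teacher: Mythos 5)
Your overall architecture matches the paper's: a codebook-diameter prefactor, a variance decomposition of $r_{im}$ by conditioning on the randomness sources, per-source moment bounds expressed through $\vartheta$, $\varpi$, $\Lambda^*$, and finally the first-order condition in $p_{\mathrm{tx}}$. Your stage (iii) is fully correct and identical in substance to the paper's: the condition $u(p)+2(1-p)u'(p)=0$ multiplied by $p^{3/2}$ does collapse to the stated equation, and your uniqueness argument via termwise monotonicity (each summand increasing, limit $-\infty$ at $0^+$, positive value $A+B+C$ at $1$) is actually cleaner than the paper's, which contains a sign slip ("decreasing" where it means increasing).

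Two points of divergence deserve comment. First, in stage (i) the paper simply applies Minkowski to get $\mathrm{sdv}(\tilde{\mathbf d}_i)\leq\sum_m\mathrm{sdv}(r_{im})\Vert\mathbf z_m-\mathbf w_i\Vert$ and then bounds $\Vert\mathbf z_m-\mathbf w_i\Vert$ by the codebook diameter; your claim that this "loses a factor of $M$" is not right, because $\mathrm{sdv}(r_{im})$ carries the weight $[\mathbf p_j]_m$ and $\sum_m[\mathbf p_j]_m=1$, so the sum over $m$ telescopes without any $M$ penalty (the noise contribution likewise sums to $\frac{\sqrt M}{\sqrt Q}\frac{N_0}{Ep_{\mathrm{tx}}}$ after the $1/M$ inside each term cancels). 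Your Jensen variant is valid but ends up needing exactly the same Minkowski step across $m$ afterward, since the $\epsilon_{im}$ share the channels, $\chi_j$, and $\varsigma$ and are therefore correlated across $m$. Second, and more substantively: stage (ii) is where essentially all of the work of this lemma lives, and you assert rather than execute it. The paper uses a two-level law of total variance, $\mathrm{var}(r_{im})=\mathbb E[\mathrm{var}(r_{im}\mid\varsigma,\chi,\mathbf h)]+\mathrm{var}(\mathbb E[r_{im}\mid\varsigma,\chi,\mathbf h])$, and crucially keeps the noise and the in-block channel dispersion \emph{together} inside the first conditional variance, arriving at a bound of the form $\frac{M}{Q(1-p_{\mathrm{tx}})}\bigl(\sqrt{2(1+2\vartheta^2)}\sum_{j\neq i}\Lambda_{ij}[\mathbf p_j]_m+\frac{N_0}{EMp_{\mathrm{tx}}}\bigr)^2$ whose square root delivers the first and third bracket terms jointly; the $\sqrt{2(1+2\vartheta^2)}$ constant emerges from a specific fourth-moment identity for the random phases, a Hölder/Cauchy--Schwarz aggregation $\mathbb E[(\sum_j\alpha_j)^2]\leq(\sum_j\sqrt{\mathbb E[\alpha_j^2]})^2$, and the even-allocation bound $1/R_{m'}\leq 2M/Q$. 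Your proposed three-way martingale stratification separates the noise from the $\vartheta$-dispersion, which would force the signal--noise cross term $2N_0\cdot(\text{signal})$ into a different stratum and does not obviously reproduce the same constants; "computing each stratum using the moment machinery reproduces the three summands" is precisely the claim that needs proof, and as written it is a gap rather than an argument.
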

 For Rayleigh fading channels discussed in \secref{sysmo},
 and assuming the channel gains are independent across $q{\in}\mathcal R_{m}$ (typically achieved by spacing apart  
 the resource units belonging to the same $\mathcal R_m$ by more than the channel's coherence bandwidth)
  we find $\vartheta{=}1$ and $\varpi{\approx}\sqrt{M/Q}$.
Algorithm~\ref{A1} outlines the steps of \proposed\ described in this section.
Each OFDM symbol, with a communication bandwidth $W_{\mathrm{tot}}$, has a duration $T_{\mathrm{ofdm}} = (\text{SC} + \text{CP})/W_{\mathrm{tot}}$ ($\text{SC}$ subcarriers; cyclic prefix length $\text{CP}$). With $O$ OFDM symbols required to transmit the $Q = O \cdot \text{SC}$-dimensional signal $\mathbf{x}_i$, the frame duration is
\ba{
T=O\cdot T_{\mathrm{ofdm}}=W_{\mathrm{tot}}^{-1}\cdot O\cdot(\mathrm{SC}+\mathrm{CP}),
}{frameduration}
irrespective of network size $N$. This highlights the device-scalability of \proposed\ over large wireless networks, unlike a TDMA scheme where $T$ grows linearly with $N$. 

\begin{algorithm} [t]
\caption{\proposed\ iteration}\label{A1}
    \begin{algorithmic}[1]
        \small
         \Procedure{}{at node $i$, given $\mathbf w_{i}$, circular shift $\varsigma$, stepsizes $\eta,\gamma$}
        \State \underline{\bf Computation} (in parallel with communication): compute an unbiased stochastic gradient $\mathbf g_{i}$ based on the local $f_i$ at $\mathbf w_i$;
         \State \underline{\bf Random transmission decision} select $\chi_i=1$ with probability $p_{\mathrm{tx}}$, $\chi_i=0$ otherwise;
        \State \underline{\bf Communication} {\bf -- if transmission ($\chi_i=1$):}
         \State $\rightarrow$ Compute the convex combination $\mathbf p_{i}$ (see Example \ref{ex1});
         \State $\rightarrow$ Generate random phase $\bstheta_i$ and transmit signal $\mathbf x_{i}$ via \eqref{xi}; 
         \State $\rightarrow$ Map  $\mathbf x_{i}$ to
         $O$ OFDM symbols, each with $\mathrm{SC}$ subcarriers;
         \State $\rightarrow$ Set $\tilde {\mathbf d}_{i}=\mathbf 0$; 
          \State \underline{\bf Communication} {\bf -- if reception ($\chi_i=0$):}
         \State $\rightarrow$ Receive $\mathbf y_{i}$ (see \eqref{yi});
         \State $\rightarrow$ Compute the $M$ energy signals
via \eqref{fdgnhdsf3}; 
\State $\rightarrow$ Estimate the disagreement signal $\tilde {\mathbf d}_{i}$ via \eqref{dik}; 
\State  \underline{\bf DGD update:}
$
\mathbf w_{i}\gets\Pi[\mathbf w_{i}+\gamma\tilde {\mathbf d}_{i}-\eta\mathbf g_i]
$ as in \eqref{updateeq}.
\EndProcedure\ repeat in the next iteration.
    \end{algorithmic}
\end{algorithm}

\section{Convergence Analysis}
\label{convanalysis}
In this section, we study the convergence of Algorithm \ref{A1} as it progresses over multiple iterations. Hence, we express the dependence of the
signals and stepsizes on the iteration index $k$.
Due to randomness of
noise, fading, random transmissions, phase and  circular subcarrier shifts at transmitters, Algorithm~\ref{A1}
 induces a stochastic process.
 We denote by $\mathcal F_k$ the $\sigma$-algebra generated by
 the signals up to frame $k-1$,
 including $\mathbf w_{ik},\ \forall i$.
In essence, when taking expectations conditioned on $\mathcal F_k$, we consider the randomness generated during the $k$th iteration, conditional on the signals available at the start of that iteration.

To analyze \proposed,
we first stack its updates over the network, and make the error terms explicit (\secref{equivrep}).
We then introduce  assumptions and definitions used in the analysis, commonly adopted in prior work (\secref{assdef}).
We present the main convergence result in \secref{mainconv},  then specialize it to constant  (\secref{conststep}).
and decreasing  (\secref{decstep}) stepsizes.

\subsection{Equivalent representation of \proposed}
\label{equivrep}
We use lowercase variables for node-specific signals ($\mathbf{a}_{i}$ for node $i$) and uppercase variables for their concatenation over the network ($Nd$-dimensional vector $\mathbf{A}=[\mathbf{a}_1^\top,\mathbf{a}_2^\top\dots,\mathbf{a}_N^\top]^\top$). Thus, $\mathbf{W}_k$, $\mathbf{G}_k$, and $\tilde{\mathbf{D}}_{k}$ concatenate the $\mathbf{w}_{ik}$, $\mathbf{g}_{ik}$, and $\tilde{\mathbf{d}}_{ik}$ signals at iteration $k$. Additionally, we define $\mathbf{W}^*=\mathbf{1}_N\otimes\mathbf{w}^*$ as the concatenation of $\mathbf w^*{=}\smash{\argmin_{\mathbf w\in\mathbb R^d}}
F(\mathbf{w})$. Let
\begin{align}
f(\mathbf W)=\sum_{i=1}^Nf_i(\mathbf w_i).
\label{globalf}
\end{align}
Noting that $\nabla f(\mathbf W)=[\nabla f_1(\mathbf w_1)^\top, f_2(\mathbf w_2)^\top,\dots,f_N(\mathbf w_N)^\top]^\top$,
we can then rewrite the update \eqref{updateeq} as 
\begin{align}\mathbf W_{k+1}=\Pi^N[\mathbf W_{k}+\gamma_k\tilde{\mathbf D}_{k}-\eta_k\mathbf G_{k}],\label{stackedeq}\end{align}
where $\mathbb E[\mathbf G_{k}|\mathcal F_k]{=}\nabla f(\mathbf W_k)$ (unbiased gradient estimate)
and $\Pi^N[\cdot]$ is a projection onto $\mathcal W^N$.
To facilitate the analysis, we define the errors
$\e_k^{(1)}$
in the disagreement signal estimation, due to fading, noise, random transmissions, phase and circular subcarrier shifts, and
$\e_k^{(2)}$ in the stochastic gradients. Namely,
$$
\e_k^{(1)}=\tilde{\mathbf D}_{k}
-\mathbb E[\tilde{\mathbf D}_{k}|\mathcal F_k],\ \ \ 
\e_k^{(2)}=\mathbf G_{k}-\nabla f(\mathbf W_{k}).
$$ 
By definition, $\mathbb E[\e_k^{(1)}|\mathcal F_k]{=}\mathbb E[\e_k^{(2)}|\mathcal F_k]{=}\mathbf 0$.
Letting $\LL=\boldsymbol{L}\otimes\mathbf I_d$,
we then rewrite \eqref{stackedeq} in the equivalent form $\mathbf W_{k+1}$ 
\ba{
{=}\Pi^N\Big[&\underbrace{(\mathbf I{-}\gamma_k\LL)\mathbf W_{k}{-}\eta_k\nabla f(\mathbf W_{k})}_{(a)}
+\underbrace{\gamma_k\e_k^{(1)}{-}\eta_k\e_k^{(2)}}_{(b)}\Big],
}{Wupdate}
where 
$\mathbb E[\tilde{\mathbf D}_k|\mathcal F_k]{=}{-}\LL\mathbf W_{k}$ is obtained by concatenating
$\mathbb E[\tilde{\mathbf d}_{ik}|\mathcal F_k]$, given by \eqref{Ed}, and using the fact that
$${\mathbf d}_{i}{=}\sum_{j=1}^N\ell_{ij}(\mathbf w_{i}-{\mathbf w}_{j})
{=}
\sum_{j=1}^N\ell_{ij}\mathbf w_{i}-\sum_{j=1}^N\ell_{ij}{\mathbf w}_{j}
{=}-\sum_{j=1}^N[\boldsymbol{L}]_{ij}{\mathbf w}_{j}
$$
since $\sum_{j=1}^N\ell_{ij}{=}0$ for Laplacian weights.
The term (a) in \eqref{Wupdate} specializes to the DGD update shown in \eqref{ddd} when  $\gamma_k{=}\gamma,\forall k$; the term (b) (referred to as "DGD noise" in the remainder), instead, captures the various sources of randomness.
Following an approach similar to \cite{Yuan2016} for the analysis of DGD (therein, with \emph{fixed} stepsize),
we interpret the term (a) as a gradient descent, with stepsize $\eta_k$,  based on the Lyapunov function
$$G_k(\mathbf W)\triangleq f(\mathbf W)+\frac{\gamma_k}{2\eta_k}{\mathbf W}^\top\cdot\LL\cdot{\mathbf W}.$$
Notably,  the quadratic term in $G_k$ enforces consensus in the network (in fact, it equals zero when $\mathbf w_i=\mathbf w_j,\forall i,j$).
Note that $G_k$ is time-varying due to $k$-dependent stepsizes, thus generalizing \cite{Yuan2016,8786146}
 that employ a fixed stepsize (hence a time-invariant $G$).
We can then express \eqref{Wupdate} compactly as
\ba{
\mathbf W_{k+1}=\Pi^N[\mathbf W_{k}
-\eta_k\nabla G_k(\mathbf W_{k})
+\gamma_k\e_k^{(1)}-\eta_k\e_k^{(2)}
].
}{Lyapnoisy}

\subsection{Assumptions and Definitions}
\label{assdef}
We study the convergence of \eqref{Lyapnoisy} under standard assumptions,
commonly adopted in prior work, e.g., \cite{9563232,9517780,8786146}.
\begin{assumption}
\label{fiassumption}
All $f_i(\mathbf w)$ are  $\mu$-strongly convex, $L$-smooth.
\end{assumption}
A direct consequence is that
 the global $F$  in \ref{global} and $f$ in \eqref{globalf} are also $\mu$-strongly convex and $L$-smooth.
\begin{assumption}
\label{distance}
$\mathbf w^*\in\mathrm{int}(\mathcal W)$, hence its distance from the boundary of $\mathcal W$ satisfies
$
\zeta\triangleq \min_{\mathbf w\in\mathrm{bd}(\mathcal W)}\Vert\mathbf w-\mathbf w^*\Vert>0.
$\footnote{The analysis may be relaxed to the case when $\mathbf w^*$ lies on the boundary of $\mathcal W$ ($\zeta=0$),
but it yields looser convergence results.}
\end{assumption}
 $\boldsymbol{L}$ is a Laplacian matrix (Definition \ref{lapdef}), hence
 it is semidefinite positive with eigenvalues 
$0{=}\rho_1{\leq}\dots\leq \rho_N$.
We  assume that the average channel gains define a connected graph, 
so that the  algebraic connectivity $\rho_2$ is strictly positive~\cite{10.5555/2613412}.
 \begin{assumption}[Algebraic connectivity of $\boldsymbol{L}$] 
 $
\rho_2>0.
$
\label{algconnofL}
\end{assumption}

Finally,  we make the following assumption on $\e_k^{(1)}$ and $\e_k^{(2)}$.
\begin{assumption}
\label{SGDassumption}
There exist $\Sigma^{(1)},\Sigma^{(2)}\geq 0$ such that
$$
\frac{1}{N}\mathbb E[\Vert\e_k^{(1)}\Vert^2|\mathcal F_k]\leq \Sigma^{(1)},\ \frac{1}{N}\mathbb E[\Vert\e_k^{(2)}\Vert^2|\mathcal F_k]\leq \Sigma^{(2)}.
$$
Closed-form expressions are given in Lemma \ref{L0} for $\Sigma^{(1)}$,\footnote{Note that
$\frac{1}{N}\mathbb E[\Vert\e_k^{(1)}\Vert^2|\mathcal F_k]=
\frac{1}{N}\sum_{i=1}^N\mathrm{var}(\tilde{\mathbf d}_{i}|\mathcal F_k)$.}
and in Appendix A for $\Sigma^{(2)}$, under minibatch gradients. 
\end{assumption}

\noindent Since $\mathcal W$ is bounded, its diameter, defined below, is finite.
\begin{defi}[diameter of $\mathcal W$]
\label{diameter}
$
\mathrm{dm}(\mathcal W)\triangleq\max\limits_{\mathbf w,\mathbf w'\in\mathcal W}\Vert\mathbf w-\mathbf w'\Vert.
$
\end{defi}
\noindent 
In Example \ref{ex1}, $\mathrm{dm}(\mathcal W){=}2r$.
Lastly, we define the gradient divergence at the global optimum $\mathbf w^*$ 
(similar to \cite{8664630}).
\begin{defi}[gradient divergence]
\label{gradbo}
$
\nabla^*{\triangleq}\max_i\Vert\nabla f_{i}(\mathbf w^*)\Vert.
$
\end{defi}

\subsection{Main convergence result}
\label{mainconv}
The paper \cite{8786146} considers \emph{fixed} stepsizes and lacks a projection operator. 
The paper \cite{9224135} considers a different class of functions (convex, non-smooth with bounded subgradients) and uses a projection, but its proof leverages the structure of the quantization error, under which $\mathbf W_{k}+\gamma_k\tilde{\mathbf D}_{k}\in\mathcal W^N$. This property does not hold in this paper, due to the potentially unbounded consensus error. 
Hence, their analyses cannot be readily extended to the setup of this paper, necessitating a new line of analysis, presented next.

We assume $k\geq \bar{\kappa}$ under a suitable iteration $\bar{\kappa}\geq 0$ (to be defined later). It represents
 a regime in which stepsizes become sufficiently small to satisfy the conditions stated in Theorem~\ref{Tmain}.\footnote{For $k{<}\bar{\kappa}$, one can simply bound
$\Vert\mathbf w_{ik}{-}\mathbf w^*\Vert{\leq}\mathrm{dm}(\mathcal W)$.}
 The main idea is to decompose the error between $\mathbf W_k$ and the global optimum $\mathbf W^*$ into: \{1\} the error between $\mathbf W_k$ and the \emph{noise-free dynamics} $\overline{\mathbf W}_k$ (obtained by setting  the term (b) in \eqref{Wupdate} equal to $\mathbf 0$), defined as
$\overline{\mathbf W}_{\bar{\kappa}}=\mathbf W_{\bar{\kappa}}$,
\ba{
&
\overline{\mathbf W}_{k+1}
=\Pi^N[\overline{\mathbf W}_{k}-\eta_k\nabla G_k(\overline{\mathbf W}_{k})],\ \forall k\geq\bar{\kappa};
}{errfree}
\{2\} the error between  $\overline{\mathbf W}_k$ and the  minimizer of the \emph{time-varying} Lyapunov function $G_k$, defined at iteration $k$ as
\ba{
\mathbf W_k^*=\argmin_{\mathbf W\in\mathcal W^N} G_k(\mathbf W);
}{penalized}
and \{3\} the error between $\mathbf W_k^*$ and $\mathbf W^*$.
Accordingly, 
\begin{align}
\label{overallerrrr}
&\Big(\smash[b]{\mathbb E\Big[\sum_i\Vert{\mathbf w}_{ik}-\mathbf w^*\Vert^2\Big]}\Big)^{1/2}
=
\Vert{\mathbf W}_{k}-\mathbf W^*\Vert_{\mathbb E}
\\&\nonumber
=
\Vert(\mathbf W_{k}-\overline{\mathbf W}_{k})+(\overline{\mathbf W}_{k}-{\mathbf W}_{k}^*)
+({\mathbf W}_{k}^*-\mathbf W^*)\Vert_{\mathbb E}\\&
\leq
\Vert{\mathbf W}_{k}-\overline{\mathbf W}_{k}\Vert_{\mathbb E}
+\Vert\overline{\mathbf W}_{k}-{\mathbf W}_{k}^*\Vert_{\mathbb E}
+\Vert{\mathbf W}_{k}^*-\mathbf W^*\Vert,
\label{int30}
\end{align}
where the last step follows from Minkowski inequality \cite[Lemma 14.10]{florescu2013handbook}.
These terms are individually bounded in Theorem~\ref{Tmain}.
A significant challenge in proving Theorem~\ref{Tmain}, in contrast to \cite{8786146}, stems from the time-varying behavior of $G_k$, caused by the use of time-varying stepsizes:
in step \{2\}, one must carefully bound the \emph{tracking error} associated with the changes in the $G_k$-minimizer $\mathbf W_k^*$ over $k$.
This issue is absent when using constant stepsizes as in \cite{8786146}.
 \begin{theorem}\label{Tmain} 
 Assume $\forall k{\geq}\bar{\kappa}$: {\bf C1}: $\eta_k(\mu{+}L){+}\gamma_k\rho_N{\leq}2$;\\ {\bf C2}: $\frac{\eta_{k}}{\gamma_{k}}{\leq}\frac{\zeta\cdot \mu\rho_2}{\sqrt{N}\nabla^* L}$;
{\bf C3}: $\frac{\gamma_k}{\eta_k}\leq \frac{\gamma_{k+1}}{\eta_{k+1}}$.
Then, 
\begin{align}
&\frac{1}{\sqrt{N}}\Vert{\mathbf W}_{k}-\overline{\mathbf W}_{k}\Vert_{\mathbb E}
{\leq}
\Big[\sum_{t=\bar{\kappa}}^{k-1}P_{tk}^2\Big(\gamma_t^2\Sigma^{(1)}+\eta_t^2\Sigma^{(2)}\Big)\Big]^{\frac{1}{2}}\!\!\!\!,\!\!\label{L1}
\\
&\label{L2}
\frac{1}{\sqrt{N}}\Vert\overline{\mathbf W}_{k}-{\mathbf W}_{k}^*\Vert_{\mathbb E}{\leq}
\mathrm{dm}(\mathcal W)\cdot P_{(\bar{\kappa}-1)k}\nonumber\\&\quad
+\frac{\nabla^* L}{\mu\rho_2}
\sum_{t=\bar{\kappa}}^{k-1}P_{tk}\Big(1+\frac{L^2}{\mu\rho_2}\frac{\eta_{t}}{\gamma_{t}}\Big)\Big(\frac{\eta_{t}}{\gamma_{t}}-\frac{\eta_{t+1}}{\gamma_{t+1}}\Big)
,
\\&\label{L4}
\frac{1}{\sqrt{N}}\Vert\mathbf W_{k}^*-\mathbf W^*\Vert\leq \frac{\nabla^* L}{\mu\rho_2}
\frac{\eta_{k}}{\gamma_{k}}.
\\&
\text{where  }P_{tk}\triangleq\Pi_{j=t+1}^{k-1}(1-\mu\eta_j),\ \forall t\leq k-1.
\end{align}
 \end{theorem}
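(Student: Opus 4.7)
The three claims bound the three terms of the decomposition~\eqref{int30}: (i)~the deviation of the actual iterates from their noise-free shadow $\overline{\mathbf W}_k$; (ii)~the tracking error of that shadow against the time-varying Lyapunov minimizer $\mathbf W_k^*$; and (iii)~the penalty bias $\mathbf W_k^*{-}\mathbf W^*$. The common backbone is that $G_k$ is $\mu$-strongly convex with Lipschitz gradient $L{+}(\gamma_k/\eta_k)\rho_N$; under \textbf{C1}, the projected gradient map $\mathbf W{\mapsto}\Pi^N[\mathbf W{-}\eta_k\nabla G_k(\mathbf W)]$ is a $(1{-}\mu\eta_k)$-contraction on $\mathcal W^N$. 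Two structural facts are crucial: $\mathbf W^*{\in}\ker(\LL)$ (so the quadratic penalty vanishes at $\mathbf W^*$), and $\nabla f(\mathbf W^*){\perp}\ker(\LL)$ (since $\nabla F(\mathbf w^*){=}\mathbf 0$).

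\textbf{Proof of~\eqref{L1}.} Subtracting~\eqref{Lyapnoisy} from~\eqref{errfree}, using non-expansivity of $\Pi^N$ and the $(1{-}\mu\eta_k)$-contraction of $I{-}\eta_k\nabla G_k$, I would square, take conditional expectation given $\mathcal F_k$, and exploit $\mathbb E[\e_k^{(j)}|\mathcal F_k]{=}\mathbf 0$ (so the cross terms vanish) together with (conditional) independence of the two noise sources and Assumption~\ref{SGDassumption}, to obtain the one-step recursion $\|\mathbf W_{k+1}{-}\overline{\mathbf W}_{k+1}\|_{\mathbb E}^2\leq(1{-}\mu\eta_k)^2\|\mathbf W_k{-}\overline{\mathbf W}_k\|_{\mathbb E}^2{+}N(\gamma_k^2\Sigma^{(1)}{+}\eta_k^2\Sigma^{(2)})$. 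Iterating this from $\overline{\mathbf W}_{\bar\kappa}{=}\mathbf W_{\bar\kappa}$ and taking the square root after dividing by $N$ delivers~\eqref{L1}.

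\textbf{Proof of~\eqref{L4}.} Starting from the variational inequality $\nabla G_k(\mathbf W_k^*)^\top(\mathbf W^*{-}\mathbf W_k^*){\geq}0$, $\mu$-strong convexity of $f$, and $\LL\mathbf W^*{=}\mathbf 0$, I would derive $\mu\|\mathbf W^*{-}\mathbf W_k^*\|^2{+}(\gamma_k/\eta_k)(\mathbf W_k^*)^\top\LL\mathbf W_k^*\leq\nabla f(\mathbf W^*)^\top(\mathbf W^*{-}\mathbf W_k^*)$. Decomposing $\mathbf W^*{-}\mathbf W_k^*{=}\mathbf U{+}\mathbf V$ along $\ker(\LL){\oplus}\ker(\LL)^\perp$, orthogonality $\nabla f(\mathbf W^*){\perp}\ker(\LL)$ reduces the right-hand side to $\leq\sqrt N\nabla^*\|\mathbf V\|$, yielding via $\rho_2$ the perpendicular bound $\|\mathbf V\|\leq(\sqrt N\nabla^*/\rho_2)(\eta_k/\gamma_k)$. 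For the consensus component $\mathbf U$, the plan is to use $L$-smoothness to Taylor-expand $\nabla f_i([\mathbf W_k^*]_i)$ about $\bar{\mathbf w}_k^*{=}\tfrac 1N\sum_i[\mathbf W_k^*]_i$ and invoke the interior KKT identity $\sum_i\nabla f_i([\mathbf W_k^*]_i){=}\mathbf 0$---valid under~\textbf{C2}, which combined with the perpendicular bound places each $[\mathbf W_k^*]_i$ within the ball of radius $\zeta$ around $\mathbf w^*$, hence inside $\mathrm{int}(\mathcal W)$ by Assumption~\ref{distance}. This yields $\|\nabla F(\bar{\mathbf w}_k^*)\|\leq(L/\sqrt N)\|\mathbf V\|$, and $\mu$-strong convexity of $F$ then gives $\|\mathbf U\|\leq(L/\mu)\|\mathbf V\|$. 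Assembling both pieces produces~\eqref{L4}. A short bootstrap---first a perpendicular-only bound from the VI, then certification of interior via~\textbf{C2}, then the refinement above---avoids circularity.

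\textbf{Proof of~\eqref{L2}: the main obstacle.} Because~\textbf{C1} makes $\mathbf W_k^*$ a fixed point of $\Pi^N[\mathbf W{-}\eta_k\nabla G_k(\mathbf W)]$, subtracting from~\eqref{errfree} and applying the contraction yields the one-step recursion $\|\overline{\mathbf W}_{k+1}{-}\mathbf W_{k+1}^*\|\leq(1{-}\mu\eta_k)\|\overline{\mathbf W}_k{-}\mathbf W_k^*\|{+}\|\mathbf W_k^*{-}\mathbf W_{k+1}^*\|$, which telescopes into~\eqref{L2} once the initial-condition term $\mathrm{dm}(\mathcal W)P_{(\bar\kappa-1)k}$ is accounted for (using $\mathbf W_{\bar\kappa},\mathbf W_{\bar\kappa}^*{\in}\mathcal W^N$). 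The hard part---absent from the fixed-stepsize analyses of~\cite{Yuan2016,8786146}---is a \emph{sensitivity-type} bound on the tracking increment $\|\mathbf W_t^*{-}\mathbf W_{t+1}^*\|/\sqrt N$ that is \emph{linear} (not square-root) in $\eta_t/\gamma_t{-}\eta_{t+1}/\gamma_{t+1}$. The plan is to subtract the interior KKT identities at $\mathbf W_t^*$ and $\mathbf W_{t+1}^*$ (valid under~\textbf{C2}), test against $\mathbf W_t^*{-}\mathbf W_{t+1}^*$, and use $\mu$-strong convexity of $f$ together with positive semidefiniteness of $\LL$, splitting the displacement along $\ker(\LL){\oplus}\ker(\LL)^\perp$ as in the proof of~\eqref{L4}; the factor $(1{+}(L^2/(\mu\rho_2))\eta_t/\gamma_t)$ then emerges from $\|\nabla f(\mathbf W_t^*)\|\leq\sqrt N\nabla^*{+}L\|\mathbf W_t^*{-}\mathbf W^*\|$ combined with~\eqref{L4}, while the difference $\eta_t/\gamma_t{-}\eta_{t+1}/\gamma_{t+1}$ materializes through the perturbation of $\alpha_t{=}\gamma_t/\eta_t$ between consecutive iterations. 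Condition~\textbf{C3} ensures this increment is nonnegative and fixes the direction of the sensitivity. This tracking-increment bound is the principal technical challenge of the theorem.
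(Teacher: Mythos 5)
Your overall architecture coincides with the paper's: the proof of \eqref{L1} (non-expansivity of $\Pi^N$, the $(1-\mu\eta_k)$-contraction of $\mathbf W\mapsto\mathbf W-\eta_k\nabla G_k(\mathbf W)$ under \textbf{C1}, vanishing cross terms, and induction) is the paper's argument verbatim, and your telescoping of \eqref{L2} via the fixed-point property of $\mathbf W_k^*$ plus a sensitivity bound on $\Vert\mathbf W_{k+1}^*-\mathbf W_k^*\Vert$ is exactly the paper's Appendix B.III. Where you genuinely diverge is \eqref{L4}: the paper works with the \emph{unconstrained} minimizer $\hat{\mathbf W}_k$, writes $\hat{\mathbf W}_k-\mathbf W^*=-\mathbf B^{-1}\nabla f(\mathbf W^*)$ with $\mathbf B=\mathbf A+\tfrac{\gamma_k}{\eta_k}\LL$ via a mean-Hessian identity, and proves the operator bound $\Vert\mathbf B^{-1}\mathbf v\Vert\le\tfrac{L}{\mu\rho_2}\tfrac{\eta_k}{\gamma_k}\Vert\mathbf v\Vert$ for $\mathbf v\perp(\mathbf 1_N\otimes\mathbf I_d)$ by a Schur-complement computation in the Laplacian eigenbasis (its Lemma \ref{L5}); only afterwards does it verify $\hat{\mathbf W}_k\in\mathcal W^N$ under \textbf{C2}, so that $\hat{\mathbf W}_k=\mathbf W_k^*$. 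Your route via the variational inequality and the $\ker(\LL)\oplus\ker(\LL)^\perp$ splitting is a legitimate alternative idea, but it has a concrete quantitative defect.

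The gap: your assembly gives $\Vert\mathbf V\Vert\le\tfrac{\sqrt N\nabla^*}{\rho_2}\tfrac{\eta_k}{\gamma_k}$ and $\Vert\mathbf U\Vert\le\tfrac{L}{\mu}\Vert\mathbf V\Vert$, hence $\Vert\mathbf W_k^*-\mathbf W^*\Vert\le\sqrt{1+L^2/\mu^2}\,\tfrac{\sqrt N\nabla^*}{\rho_2}\tfrac{\eta_k}{\gamma_k}$, which strictly exceeds the claimed $\tfrac{L}{\mu}\tfrac{\sqrt N\nabla^*}{\rho_2}\tfrac{\eta_k}{\gamma_k}$ (by up to $\sqrt2$). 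This is not merely cosmetic: under \textbf{C2} exactly as stated, your bound yields $\Vert\mathbf W_k^*-\mathbf W^*\Vert\le\sqrt{1+\mu^2/L^2}\,\zeta>\zeta$, so the interiority certification fails, and with it the validity of the ``interior KKT identity'' $\sum_i\nabla f_i([\mathbf W_k^*]_i)=\mathbf 0$ on which your bound for $\mathbf U$ depends. Your proposed bootstrap does not repair this: the perpendicular-only bound controls the disagreement component $\mathbf V$ but says nothing about the consensus component $\mathbf U$, so it cannot by itself place $[\mathbf W_k^*]_i$ inside the $\zeta$-ball; yet $\mathbf U$ is precisely what you need interiority to bound. (The clean fix is the paper's: run the whole argument on the unconstrained minimizer, where the first-order equality holds with no interiority assumption, and only then check membership in $\mathcal W^N$ --- but even then your constant remains $\sqrt{1+L^2/\mu^2}$ and \textbf{C2} would have to be tightened.) The same inflation propagates into \eqref{L2}: the sensitivity bound on $\Vert\mathbf W_{k+1}^*-\mathbf W_k^*\Vert$, which you correctly identify as the crux, is only sketched, and the paper obtains its linear dependence on $\tfrac{\eta_t}{\gamma_t}-\tfrac{\eta_{t+1}}{\gamma_{t+1}}$ with the factor $\bigl(1+\tfrac{L^2}{\mu\rho_2}\tfrac{\eta_t}{\gamma_t}\bigr)$ precisely from the operator bound of Lemma \ref{L5} applied to $\mathbf B^{-1}\LL\mathbf W_{k+1}^*$ together with $\Vert\LL\mathbf W_{k+1}^*\Vert\le\tfrac{\eta}{\gamma}\sqrt N\nabla^*\bigl(1+\tfrac{L^2}{\mu\rho_2}\tfrac{\eta}{\gamma}\bigr)$ (its Lemma \ref{L3}); testing the difference of optimality conditions against the displacement, as you propose, tends to produce a square-root rather than linear dependence on $\gamma_t/\eta_t$ for the parallel component unless the same matrix-inverse machinery is supplied. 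You have the right skeleton and the right sources for each factor, but the missing operator-norm lemma is the load-bearing piece.
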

 \begin{proof}
 See Appendix B.
\end{proof}
The term \eqref{L1} captures the impact of the DGD noise through the variance terms $\Sigma^{(1)}$ and $\Sigma^{(2)}$. Eq. \eqref{L2} is composed of two terms: the first ($\mathrm{dm}(\mathcal W)P_{(\bar{\kappa}-1)k}$) accounts for the initial error 
 at time $\bar{\kappa}$; the second
is a direct consequence of time-varying stepsizes (in fact, it is zero for constant ones); as highlighted previously, it
represents the error accumulation due to tracking the changes of  the
$G_k$-minimizer ${\mathbf W}_{k}^*$ over $k$.
Eq. \eqref{L4} shows that the minimizer ${\mathbf W}_{k}^*$ of $G_k$ approximates $\mathbf W^*$  well when $\eta_k/\gamma_k$ is small.
We now specialize Theorem~\ref{Tmain} to the case of constant stepsizes (\secref{conststep}),
to gain insights on the design of decreasing stepsizes in \secref{decstep}.

\subsection{Constant stepsizes}
\label{conststep}
Let $\eta_k=\eta>0$, $\gamma_k=\gamma>0,\forall k$. For the sake of exposition, further assume  $\Sigma^{(2)}=0$,
and consider a target timeframe $K$ when the algorithm stops.
To satisfy {\bf C1}-{\bf C2} of Theorem \ref{Tmain}, we need
$\eta(\mu+L)+\gamma\rho_N\leq 2$ 
and $\frac{\eta}{\gamma}\leq \frac{\zeta\cdot \mu\rho_2}{\sqrt{N}\nabla^* L}$  
 with $\bar{\kappa}=0$. 
Then, 
at algorithm termination
 \eqref{L1}-\eqref{L4}  specialize as\begin{align}
\label{const1}
&\frac{1}{\sqrt{N}}\Vert{\mathbf W}_{K}-\overline{\mathbf W}_{K}\Vert_{\mathbb E}
\leq
\sqrt{\Sigma^{(1)}}\frac{\gamma}{\sqrt{\eta\mu}}
,
\\
\label{const2}
&\frac{1}{\sqrt{N}}\Vert\overline{\mathbf W}_{K}{-}\mathbf W_{K}^*\Vert_{\mathbb E}
{\leq}
\mathrm{dm}(\mathcal W)(1{-}\mu\eta)^k
{\leq}\mathrm{dm}(\mathcal W)e^{-\mu\eta K},\!\!\!
\\
\label{const3}
&\frac{1}{\sqrt{N}}\Vert\mathbf W_{k}^*{-}\mathbf W^*\Vert\leq\frac{ \nabla^*L}{\mu\rho_2}
\frac{\eta}{\gamma}.
\end{align}
To make these errors small, $\gamma/\sqrt{\eta}$, $\eta/\gamma$ and $e^{-\mu\eta K}$ need all be small,
yielding a trade-off between $\gamma$ and $\eta$.
 To make $e^{-\mu\eta K}{\to}0$ as $K{\to}\infty$ we require $\eta{\propto}K^{-(1-\epsilon)}$ for small $\epsilon>0$. 
 With such choice of $\eta$, the trade-off between
 \eqref{const1} and \eqref{const3} is optimized by
 making $\gamma/\sqrt{\eta}\propto \eta/\gamma$, yielding
 $\gamma{\propto}K^{-3/4(1-\epsilon)}$.
Under this choice, the three error terms
 \eqref{const1}-\eqref{const3}, hence the overall error \eqref{overallerrrr}, become of order $\mathcal O(K^{-1/4(1-\epsilon)})$
 for target $K$ iterations.
 A similar convergence rate is shown in \cite{8786146} under a different setup:\footnote{Note that $\alpha$, $\epsilon$, $\delta$, $T$ used in \cite[Theorem 1]{8786146} map as $\alpha\mapsto\eta/\gamma$, $\epsilon\mapsto\gamma$, $\delta\mapsto(1-\epsilon)/2$, $T\mapsto K$ in this paper} it assumes quantization, error-free communications, constant stepsizes and no projection.
 This analysis unveils some drawbacks of using constant stepsizes: 1) faster convergence is achieved with smaller $\epsilon$ (approaching $\mathcal O(K^{-1/4})$),
at the cost of larger \eqref{const2}; 
2) tuning of $\eta$ and $\gamma$ requires prior knowledge of the target timeframe $K$, as well as of $\rho_2$, $\rho_N$, $\zeta$, $\nabla^*$ to satisfy {\bf C1}-{\bf C2} of
Theorem~\ref{Tmain};
3) the stepsize conditions needed to ensure these convergence properties may be overly stringent to meet in practical scenarios.
 These limitations are overcome via decreasing stepsizes.

\subsection{Decreasing stepsizes}
\label{decstep}
The constant stepsize analysis suggests that 
$\gamma_k{\propto}k^{-3/4}$ and $\eta_k{\propto}k^{-1}$.
Accordingly, we choose $\gamma_k{=}\gamma_0(1{+}\delta k)^{-3/4}$ and $\eta_k{=}\eta_0(1{+}\delta k)^{-1}$
with initial stepsizes $\gamma_0,\eta_0$. The parameter
 $\delta>0$, termed \emph{decay rate}, controls how quickly $\eta_k$, $\gamma_k$ decrease over time.
With this choice: 1) $\gamma_k/\eta_k$ is non-decreasing in $k$ ({\bf C3} of Theorem~\ref{Tmain});
2) since $\eta_k,\gamma_k,\eta_k/\gamma_k\to 0$ for $k\to\infty$,
there exists $\bar{\kappa}{\geq}0$ satisfying {\bf C1}-{\bf C2} of Theorem~\ref{Tmain}  $\forall k{\geq}\bar{\kappa}$.
The following result specializes Theorem \ref{Tmain} to this choice.
\begin{theorem}
\label{T1}
Let $\gamma_k=\gamma_0(1+\delta k)^{-3/4}$, $\eta_k=\eta_0(1+\delta k)^{-1}$ with 
$\delta\leq\frac{4}{5}\mu\eta_0$. 
Then, 
 $\forall k\geq \bar{\kappa}$,
\begin{align}
\label{thm1_1}
&\frac{1}{\sqrt{N}}\Vert{\mathbf W}_{k}-\overline{\mathbf W}_{k}\Vert_{\mathbb E}
\leq
\frac{\sqrt{5}e}{2\sqrt{\mu}}\Big[
\frac{\gamma_0}{\sqrt{\eta_0}}\sqrt{\Sigma^{(1)}}
\nonumber\\&\qquad
+
\sqrt{\eta_0\Sigma^{(2)}}
(1+\delta k)^{-1/4}\Big](1+\delta k)^{-1/4},
\\\label{thm1_2}
&\frac{1}{\sqrt{N}}\Vert\overline{\mathbf W}_{k}-\mathbf W_{k}^*\Vert_{\mathbb E}
\leq
\mathrm{dm}(\mathcal W)\Big(1+\frac{\delta(k-\bar{\kappa})}{1+\delta\bar{\kappa}}\Big)^{-5/4}
\nonumber
\\&
+\frac{\nabla^* L}{\mu\rho_2}\frac{e}{4}
\frac{\eta_0}{\gamma_0}
\Big(
 1
+\frac{L^2}{\mu\rho_2}\frac{\eta_0}{\gamma_0}
(1+\delta k)^{-1/4}
 \Big)(1+\delta k)^{-1/4},
\\ \label{thm1_3}
&\frac{1}{\sqrt{N}}\Vert\mathbf W_{k}^*-{\mathbf W}^*\Vert\leq \frac{\nabla^* L}{\mu\rho_2}
\frac{\eta_0}{\gamma_0}(1+\delta k)^{-1/4}.
\end{align}
\end{theorem}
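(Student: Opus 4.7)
The result specializes Theorem~\ref{Tmain} to the choices $\gamma_k = \gamma_0(1+\delta k)^{-3/4}$ and $\eta_k = \eta_0(1+\delta k)^{-1}$. The plan is threefold: (i) verify that conditions \textbf{C1}--\textbf{C3} of Theorem~\ref{Tmain} hold for $k\geq\bar{\kappa}$; (ii) derive \eqref{thm1_3} by direct substitution; and (iii) evaluate the sums appearing in \eqref{L1}--\eqref{L2} under the prescribed stepsizes. Condition \textbf{C3} follows immediately from $\gamma_k/\eta_k = (\gamma_0/\eta_0)(1+\delta k)^{1/4}$; since $\eta_k$, $\gamma_k$, and $\eta_k/\gamma_k$ all vanish as $k\to\infty$, there exists a finite $\bar{\kappa}$ beyond which \textbf{C1}--\textbf{C2} are satisfied. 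Bound \eqref{thm1_3} is then obtained by plugging $\eta_k/\gamma_k = (\eta_0/\gamma_0)(1+\delta k)^{-1/4}$ into \eqref{L4}.

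The core of the proof is a sharp estimate of the product $P_{tk}$. I would apply $1-x\leq e^{-x}$ followed by the integral bound $\sum_{j=t+1}^{k-1}(1+\delta j)^{-1}\geq \delta^{-1}\log[(1+\delta k)/(1+\delta(t+1))]$ to obtain
\[ P_{tk} \leq \left(\frac{1+\delta(t+1)}{1+\delta k}\right)^{\mu\eta_0/\delta} \leq \left(\frac{1+\delta(t+1)}{1+\delta k}\right)^{5/4}, \]
where the last step invokes $\mu\eta_0/\delta \geq 5/4$ from the hypothesis $\delta\leq \tfrac{4}{5}\mu\eta_0$. For \eqref{thm1_1}, the cleanest route is a recursion: the variance sum $S_k \triangleq \sum_{t=\bar{\kappa}}^{k-1}P_{tk}^2(\gamma_t^2\Sigma^{(1)} + \eta_t^2\Sigma^{(2)})$ obeys $S_{k+1}=(1-\mu\eta_k)^2 S_k+\gamma_k^2\Sigma^{(1)}+\eta_k^2\Sigma^{(2)}$, so positing a supersolution $S_k \leq C_1(1+\delta k)^{-1/2} + C_2(1+\delta k)^{-1}$ and matching coefficients after a Taylor expansion of $(1-\mu\eta_0/u)^2(1+\delta/u)^{1/2}$ to order $1/u$ yields $C_1, C_2$ in closed form. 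The $\eta_t^2\Sigma^{(2)}$ forcing term decays one order faster than $\gamma_t^2\Sigma^{(1)}$, producing the extra $(1+\delta k)^{-1/4}$ factor on the $\sqrt{\eta_0\Sigma^{(2)}}$ contribution after taking the square root.

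For \eqref{thm1_2}, the initial-condition piece $\mathrm{dm}(\mathcal{W})\cdot P_{(\bar{\kappa}-1)k}$ follows from the $P_{tk}$ estimate combined with the algebraic identity $(1+\delta\bar{\kappa})/(1+\delta k) = [1 + \delta(k-\bar{\kappa})/(1+\delta\bar{\kappa})]^{-1}$. For the tracking-error sum, the mean value theorem applied to $x\mapsto(1+\delta x)^{-1/4}$ gives $\eta_t/\gamma_t - \eta_{t+1}/\gamma_{t+1} \leq (\eta_0/\gamma_0)(\delta/4)(1+\delta t)^{-5/4}$, and combining this with the $5/4$-th-power bound on $P_{tk}$ reduces both sums (with and without the extra $\eta_t/\gamma_t$ factor) to essentially a Riemann sum over $[1+\delta\bar{\kappa},1+\delta k]$; evaluating it, together with the elementary inequality $(1+\delta/(1+\delta t))^{5/4}\leq e^{5\delta/4}\leq e$ coming from $\delta\leq\tfrac{4}{5}\mu\eta_0\leq \tfrac{4}{5}$, yields the $e/4$ factor.

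The main obstacle is controlling the constants uniformly in $\delta\in(0,\tfrac{4}{5}\mu\eta_0]$: naive integral approximations of the discrete sums leak factors of $1/\delta$ that would blow up as $\delta\to 0$, so one must either invoke the recursion at each step (as above for \eqref{thm1_1}) or carefully retain the higher-order terms in the Taylor expansions of $(1-\mu\eta_0/u)^2$ and $(1+\delta/u)^{\pm 1/2}$. The specific constants $\sqrt{5}\,e/2$ in \eqref{thm1_1} and $e/4$ in \eqref{thm1_2} trace precisely to this bookkeeping -- the $\sqrt{5}$ to the slack in $\mu\eta_0/\delta\geq 5/4$ and the $e$ to exponentiating the residual $\delta$-dependent term. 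A secondary subtlety is the need to argue that $\mathbf{W}_{\bar{\kappa}}$ is feasible (lying in $\mathcal{W}^N$, so $\Vert\mathbf{W}_{\bar{\kappa}}-\mathbf{W}_{\bar{\kappa}}^*\Vert\leq \sqrt{N}\mathrm{dm}(\mathcal{W})$), which follows from the projection in \eqref{updateeq} and legitimises the initial-condition bound appearing as the first term of \eqref{thm1_2}.
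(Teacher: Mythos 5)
Your overall architecture matches the paper's Appendix C: verify \textbf{C1}--\textbf{C3} hold for $k\geq\bar\kappa$, bound $P_{tk}$ via $1-x\leq e^{-x}$ and an integral comparison, control $\eta_t/\gamma_t-\eta_{t+1}/\gamma_{t+1}$ by a first-order bound (you use the mean value theorem, the paper uses convexity of $x^{-1/4}$; the resulting estimate $\frac{\eta_0}{\gamma_0}\frac{\delta}{4}(1+\delta t)^{-5/4}$ is the same), and obtain \eqref{thm1_3} by substitution into \eqref{L4}. The one genuinely different ingredient is your recursion/supersolution treatment of the variance sum for \eqref{thm1_1}; the paper instead keeps the exponent $\mu\eta_0/\delta$ in the $P_{tk}$ bound and compares $S_k^n(\upsilon)=\sum_t P_{tk}^n(1+\delta t)^{-\upsilon}$ with $\int(1+\delta x)^{n\mu\eta_0/\delta-\upsilon}\mathrm dx$, where the $1/\delta$ from the antiderivative is cancelled by the exponent $n\mu\eta_0/\delta-\upsilon+1\geq\frac45\mu\eta_0/\delta$, yielding the uniform constant $5e^n/(4\mu\eta_0)$ — this is exactly how the paper dodges the $1/\delta$ leakage you flag, and it is incompatible with your premature reduction of the exponent to $5/4$. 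Your recursion can in principle deliver the same (or tighter) constants, but you stop at ``matching coefficients after a Taylor expansion,'' so it is not verified that the resulting $C_1,C_2$ are dominated by the stated $\frac{5e^2}{4\mu\eta_0}\gamma_0^2\Sigma^{(1)}$ and $\frac{5e^2}{4\mu\eta_0}\eta_0^2\Sigma^{(2)}$; since the theorem asserts specific constants, that verification is part of the proof.

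There is also one step that fails as written: you justify $\bigl(1+\delta/(1+\delta t)\bigr)^{5/4}\leq e^{5\delta/4}\leq e$ by the chain $\delta\leq\frac45\mu\eta_0\leq\frac45$, which presumes $\mu\eta_0\leq 1$. The theorem places no such restriction on $\eta_0$ — the paper explicitly stresses that the result holds for arbitrary $\eta_0,\gamma_0>0$, with largeness of $\eta_0$ absorbed into $\bar\kappa$ — so $\mu\eta_0$ can exceed $1$ and this bound breaks. The paper's fix is to keep the full exponent and bound $\bigl(1+\frac{\delta}{1+\delta\bar\kappa}\bigr)^{\mu\eta_0/\delta}\leq e^{\mu\eta_0/(1+\delta\bar\kappa)}=e^{\mu\eta_{\bar\kappa}}\leq e$, using $\eta_{\bar\kappa}\leq 2/(\mu+L)\leq 1/\mu$, which is guaranteed by \textbf{C1} at $k=\bar\kappa$: it is the \emph{decayed} stepsize, not the initial one, that is small. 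Your reduced-exponent version can be repaired along the same lines, since $\frac54\cdot\frac{\delta}{1+\delta t}\leq\frac54\,\delta\eta_{\bar\kappa}/\eta_0\leq\mu\eta_{\bar\kappa}\leq 1$ for $t\geq\bar\kappa$, but the justification must run through $\eta_{\bar\kappa}$ rather than through a bound on $\delta$ alone. The initial-condition piece $\mathrm{dm}(\mathcal W)\,P_{(\bar\kappa-1)k}\leq\mathrm{dm}(\mathcal W)\bigl(1+\delta(k-\bar\kappa)/(1+\delta\bar\kappa)\bigr)^{-5/4}$, your feasibility remark on $\mathbf W_{\bar\kappa}$, and \eqref{thm1_3} are fine as stated.
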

\begin{proof}
See Appendix C.
\end{proof}
Notably, this result holds for any initial stepsizes $\eta_0,\gamma_0{>}0$, unlike the constant stepsize case which demands strict adherence to conditions {\bf C1}-{\bf C2}. This flexibility allows for numerical optimization of $\eta_0$ and $\gamma_0$ while maintaining the convergence guarantees of Theorem \ref{T1}, highlighting a trade-off in their choice. Tuning $\gamma_0$ reflects a delicate balance between facilitating information propagation (favored by larger $\gamma_0$) and mitigating error propagation (smaller $\gamma_0$).
 As to the choice of $\delta$, as it increases, both $\eta_k$ and $\gamma_k$ get smaller, reducing $\bar{\kappa}$--the iteration index meeting the convergence conditions of Theorem~\ref{Tmain}. Thus, the regime $k{\geq}\bar{\kappa}$ is attained more rapidly. At the same time, terms like $(1{+}\delta k)^{-1/4}$ and $(1{+}\delta(k{-}\bar{\kappa})/(1+\delta\bar{\kappa}))^{-5/4}$ get smaller. 
We conclude that $\delta{=}\frac{4}{5}\mu\eta_0$ minimizes \eqref{thm1_1}-\eqref{thm1_3}, further motivating the use of decreasing vs constant ($\delta{\to}0$) stepsizes. This design insight is corroborated numerically in Fig.~\ref{fig:params} of the supplemental document.
   
Combining these results into \eqref{int30}, it readily follows that 
\begin{align}
\label{convk}
\smash[b]{\mathbb E\Big[\frac{1}{N}\sum_i\Vert{\mathbf w}_{ik}-\mathbf w^*\Vert^2\Big]}=\mathcal O(1/\sqrt{k}).
\end{align}
 To get some intuition behind this result, consider \eqref{Wupdate} with fixed consensus stepsize and no projection:
 \begin{align}
\label{eqx_NF_nogamma}
\mathbf W_{k+1}=(\mathbf I{-}\gamma\LL){\mathbf W}_{k}-\eta_k\nabla f(\mathbf W_{k})+\e_k^{(tot)},
\end{align}
where $\e_k^{(tot)}{\triangleq}\gamma\e_k^{(1)}{-}\eta_k\e_k^{(2)}$ is the overall DGD noise.
These updates resemble the DGD algorithm \eqref{ddd}, where $\mathbb E[\Vert\mathbf W_{k}^*-{\mathbf W}^*\Vert^2]{=}\mathcal O(1/k)$ with stepsize  $\eta_k{\propto}1/k$,
provided that $\mathrm{var}(\e_k^{(tot)}){\propto}k^{-2}$.
This condition holds when stochastic gradients are the sole source of DGD noise, since
 $\Vert-\eta_k\e_k^{(2)}\Vert_{\mathbb E}^2{\propto}\eta_k^2{\propto}k^{-2}$, but it fails in the presence of consensus error, which becomes dominant. 
 To mitigate this error, a decreasing consensus stepsize $\gamma_k{\propto}k^{-3/4}$ is required, at the expense of slower convergence ($\mathcal O(1/\sqrt{k})$).
 
  \begin{figure}
\floatbox[{\capbeside\thisfloatsetup{capbesideposition={left,center},capbesidewidth=4.7cm}}]{figure}[0.7\FBwidth]
{\caption{Example of \emph{spatially-dependent} deployment scenario with $N=40$ nodes.
Each node holds data from only one class, indicated by the label indices '0' to '9' (4 nodes per class). For instance,
node $\star$ holds data from class '3'.
With 3 reflectors, there are 4 signal paths, shown in the figure between a generic transmitter ($\star$) and receiver ($\blacktriangle$) pair.}\label{fig:simresdep}}
{\includegraphics[width=\linewidth]{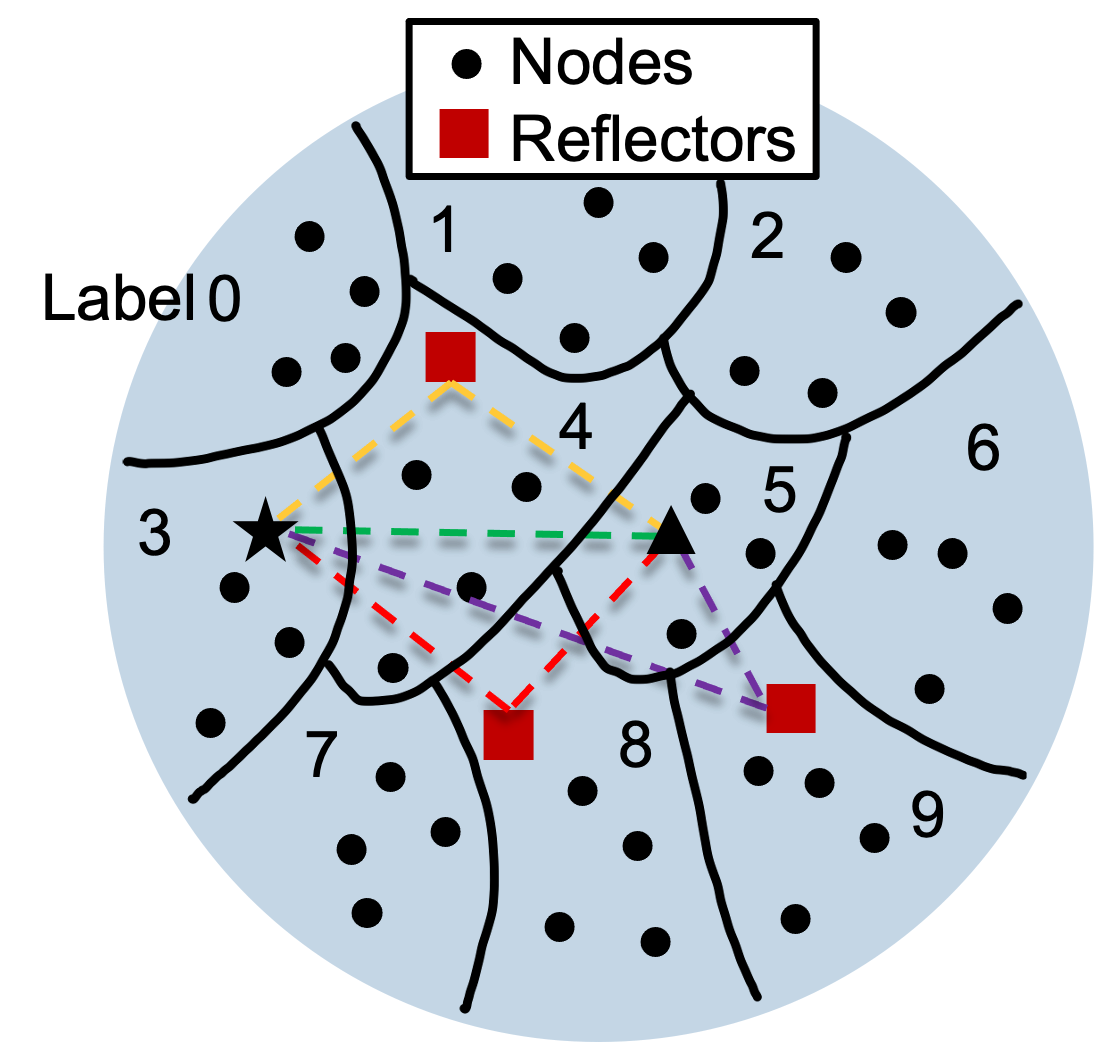}}
\end{figure}

\begin{figure*}
     \centering
               \hfill
          \begin{subfigure}[b]{0.25\linewidth}
        \includegraphics[width = \linewidth,trim=10 0 25 20, clip]{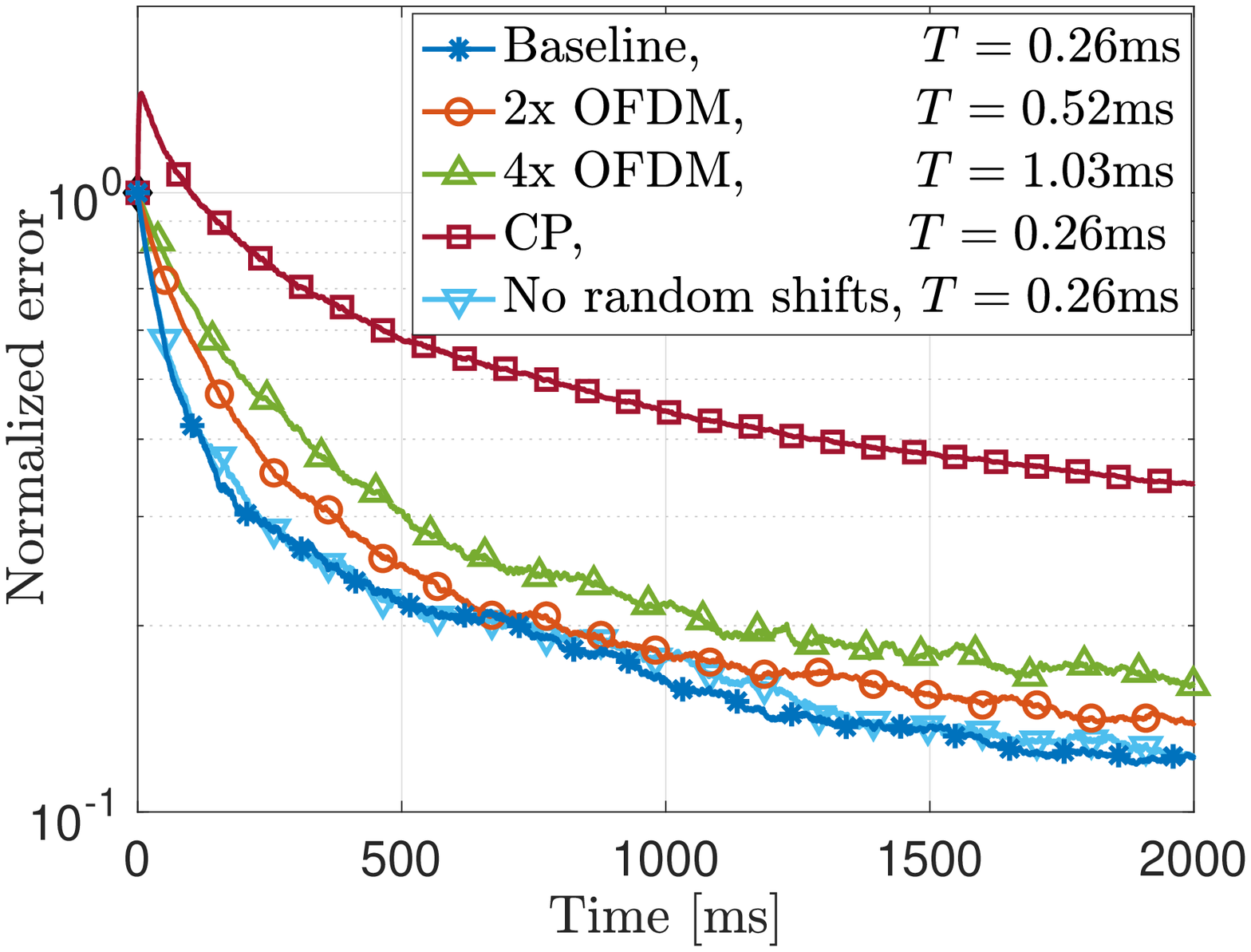}
	    \caption{} 
     \end{subfigure}
     \hfill
     \begin{subfigure}[b]{0.24\linewidth}
         \includegraphics[width = \linewidth,trim=30 0 25 20, clip]{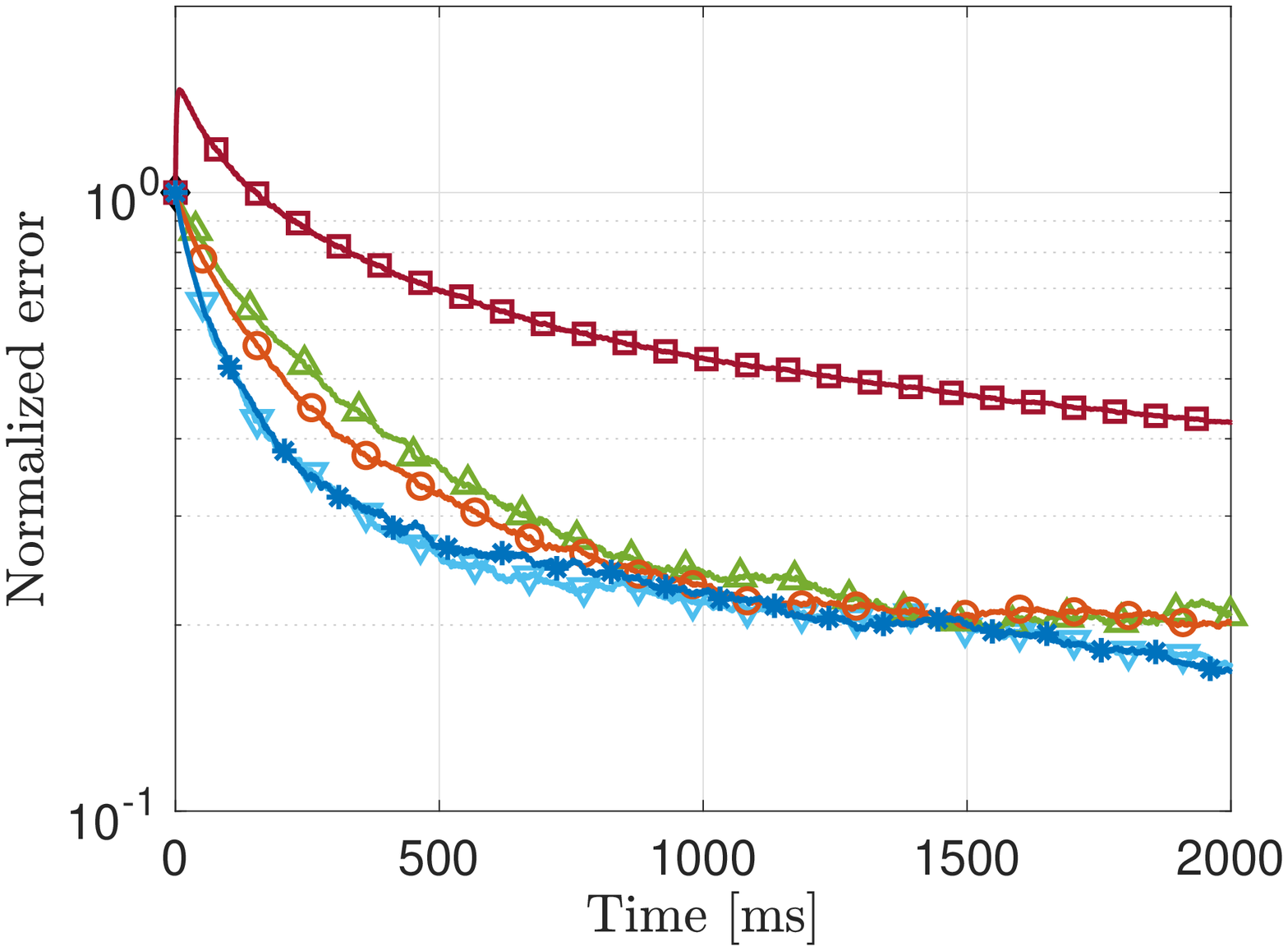}
	    \caption{} 
     \end{subfigure}
     \begin{subfigure}[b]{0.24\linewidth}
        \includegraphics[width = \linewidth,trim=30 0 25 20, clip]{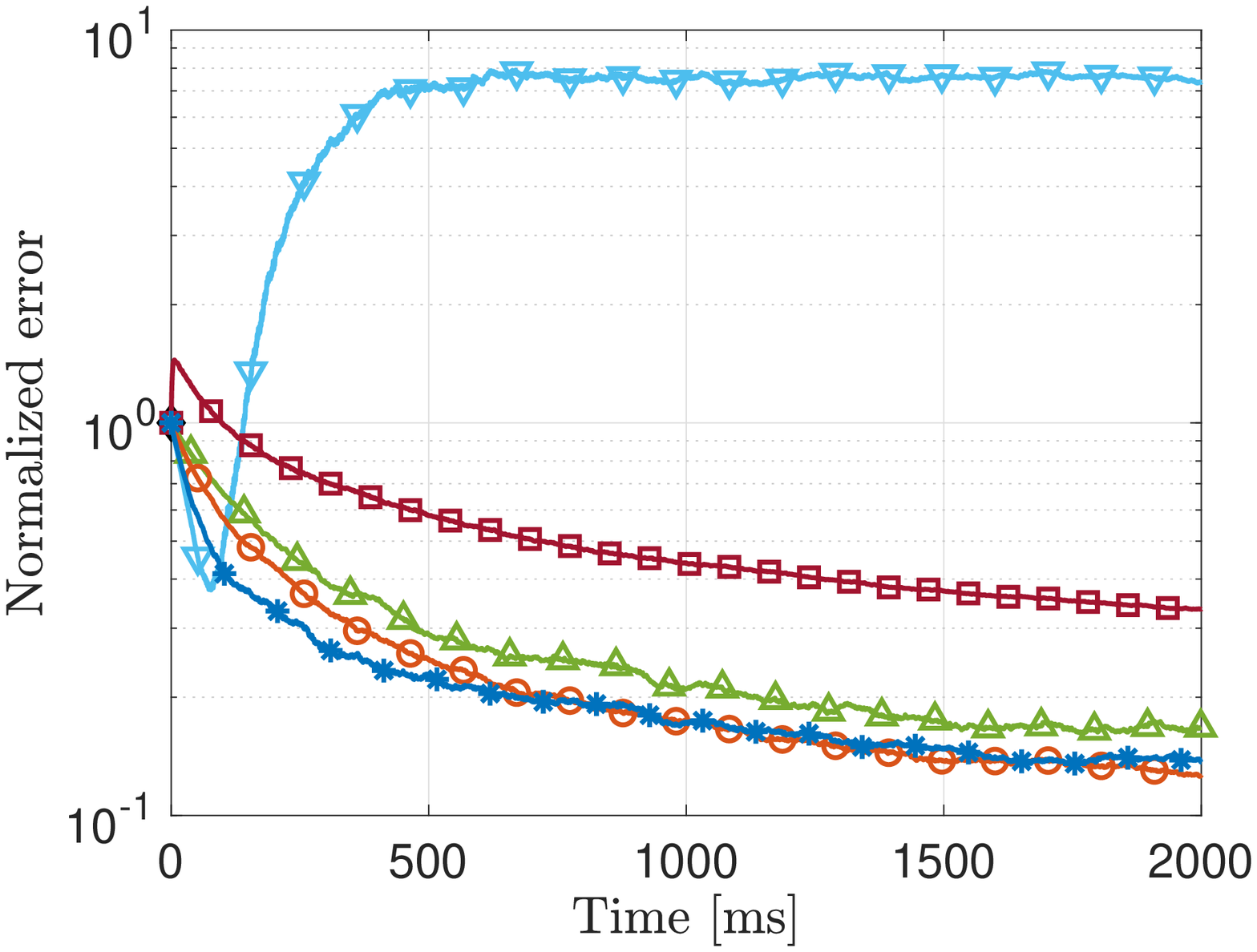}
	    \caption{} 
     \end{subfigure}
     \hfill
     \begin{subfigure}[b]{0.24\linewidth}
         \includegraphics[width = \linewidth,trim=30 0 25 20, clip]{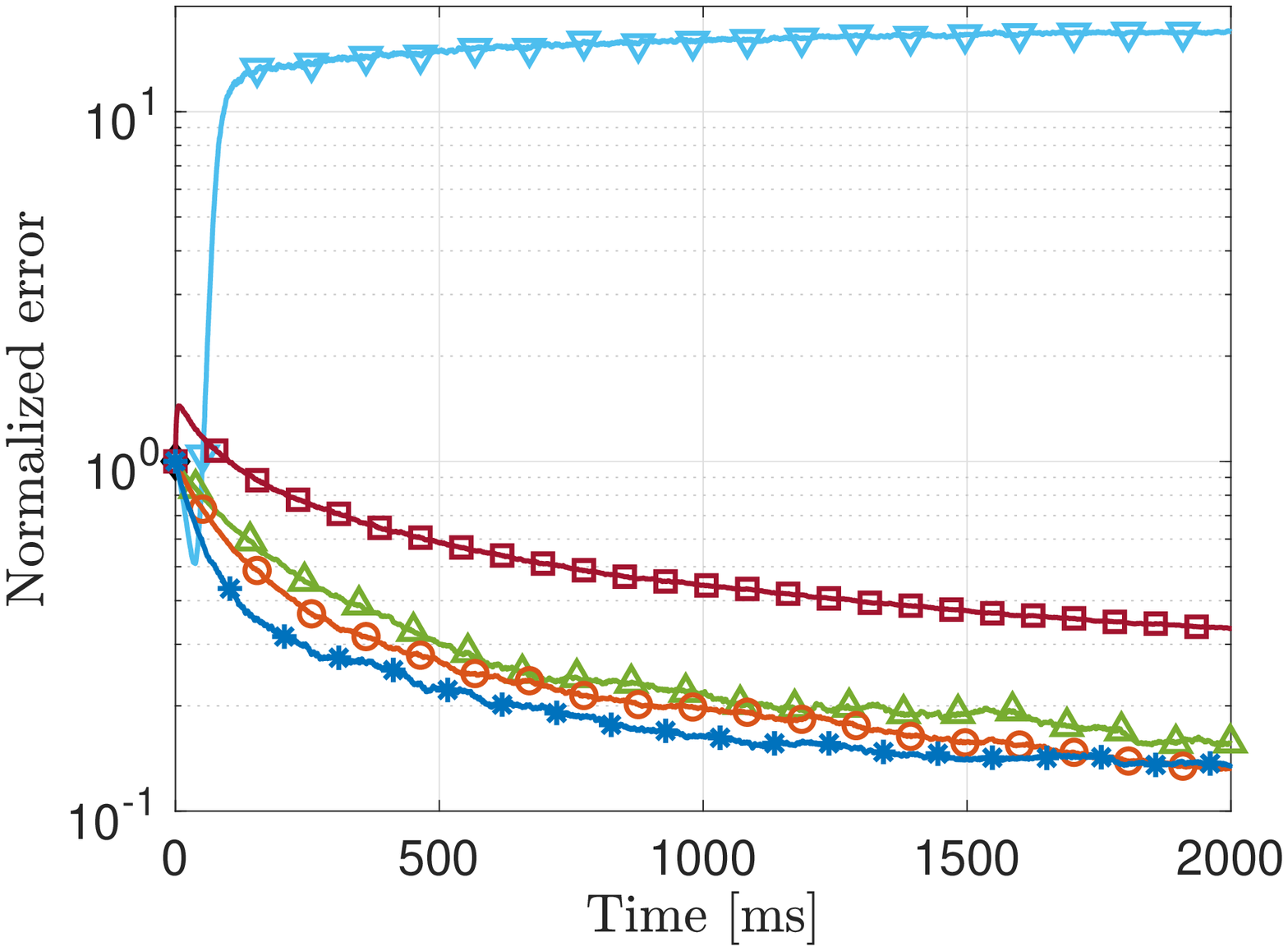}
	    \caption{} 
     \end{subfigure}
          \hfill
\caption{Normalized error vs time, for different configurations of \proposed\ and four different scenarios.
The common legend is shown in the left figure, and shows the frame duration of each configuration.
(a) Spatially-i.i.d. labels and i.i.d. channels;
(b) Spatially-dependent labels and i.i.d. channels;
(c) Spatially-i.i.d. labels and block-fading channels (2ms coherence time);
(d) Spatially-i.i.d. labels and static channels. 
\vspace{-5mm}
 } \label{fig:propcomp}
\end{figure*}
 
\section{Numerical Results}
\label{numres}
We solve a classification task on fashion-MNIST \cite{fmnist}, a dataset of grayscale images of fashion items from 10 classes. It is a variation of the popular MNIST dataset, used in recent papers such as
\cite{10360352,10384479}.

\underline{Network deployment}:
 $N$ nodes ($200$, unless otherwise stated) are spread uniformly at random over a circular area of $2$km radius. They communicate over a $W_{\mathrm{tot}}{=}$5MHz bandwidth,  $f_c{=}3$GHz carrier frequency,
 with $P_{tx}{=}20$dBm power  ($E=P_{tx}/W_{\mathrm{tot}}$).
The noise power spectral density at the receivers is $N_0{=}-173$dBmW/Hz.
We use OFDM signaling with $\mathrm{SC}{=}512$ subcarriers and a cyclic prefix of length $\mathrm{CP}{=}133$, accommodating propagation delays up to $26.6\mu$s. Each OFDM symbol has duration $T_{\mathrm{ofdm}}{=}(\mathrm{SC}{+}\mathrm{CP})/W_{\mathrm{tot}}{=}129\mu$s.

 \underline{Channel model}: 
We generate spatially consistent channels by randomly placing 3 reflectors (see Fig. \ref{fig:simresdep}).
The channel between transmitter $j$ and receiver $i$ in the $m$th subcarrier is
\\\centerline{$
[\mathbf h_{ij}]_m=\sum_{p=0}^3\sqrt{\alpha_{ijp}} \phi_{ijp} e^{-\mathrm{j}2\pi\tau_{ijp}W_{\mathrm{tot}}m/\mathrm{SC}},
$}
where $p{=}0$ is the LOS path and $p{\neq}0$ are the reflected paths;
$\tau_{ijp}{=}\frac{d_{ijp}}{c}$ is the path propagation delay, 
$d_{ijp}$ is its distance traveled, $c$ is the speed of light;
  $\alpha_{ijp}{=}(\frac{c}{4\pi f_cd_{ijp}})^2$ is the path gain, based on Friis' free space equation; 
 $\phi_{ijp}$ is the fading coefficient:
$|\phi_{ijp}|{=}1$  with
uniform phase for the LOS path,
 $\phi_{ijp}\sim\mathcal{CN}(0,1)$ (Rayleigh fading) for reflected paths.

\underline{Data deployment}:
The global dataset is constituted of 1000 low-resolution images (100 from each class), distributed across the $N$ nodes.
To emulate data heterogeneity,
each node has a local dataset with $1000/N$ images 
\emph{from a single class}, 
$c_i{\in}\{0,\dots,9\}$ for node $i$, 
so that $N/10$ nodes have label (fashion item) '0', $N/10$ have label '1', and so on.
Cooperation  is then necessary to solve the classification task, due to lack of examples of other classes at any single node.
Each 7x7 pixels image is converted into a 50-dimensional feature $\mathbf f{\in}\mathbb R^{50}$ (including a fixed offset) and normalized to $\Vert\mathbf f\Vert_2=1$.

 Furthermore, we consider two different spatial data distributions. In the \emph{spatially-i.i.d.} case, the nodes' labels are assigned in an i.i.d. fashion to each node, irrespective of their location.
In the \emph{spatially-dependent} case, shown in Fig. \ref{fig:simresdep}, the circular region is divided into
10 subregions, each containing an equal number of nodes with the same label, so that the label of a certain node depends on its position in the network.
 
 \underline{Optimization problem}:
 We solve the task via regularized cross-entropy loss minimization, with loss function
  \\\centerline{$\phi(c,\mathbf f;\mathbf w)=\frac{\mu}{2}\Vert\mathbf w\Vert_2^2
-\ln\Big(\frac{\exp\{\mathbf f^\top\mathbf w^{(c)}\}}{\sum_{j=0}^9 \exp\{\mathbf f^\top\mathbf w^{(j)}\}}\Big)$}
 for feature $\mathbf f$ with label $c$,
where: $\mathbf w^\top{=}[\mathbf w^{(1)\top},\dots,\mathbf w^{(9)\top}]{\in}\mathbb R^d$ is a $d{=}450$-dimensional parameter vector, with $\mathbf w^{(c)}{\in}\mathbb R^{50}$,  $\mathbf w^{(0)}{=}\mathbf 0$, 
 $\mu{=}0.001{>}0$. 
 Hence, $f_i(\mathbf w){=}\frac{1}{|\mathcal D_i|}\sum_{\mathbf f\in\mathcal D_i}\phi(c_i{,}\mathbf f;\mathbf w)$. 
All $f_i(\mathbf w)$ and the global function $F(\mathbf w)$
are $\mu$-strongly-convex and $L{=}\mu{+}2$-smooth.
$\mathcal W$ is the $d$-dimensional sphere centered at  $\mathbf 0$ 
 with radius
$r=\frac{1}{\mu}\Vert\nabla F(\mathbf 0)\Vert$ (Example \ref{ex1}).

\underline{Computational model}:
Each gradient computation $\nabla\phi(\cdot)$ on a single datapoint
takes $T_{\mathrm{gr}}{=}30 \mu$s.\footnote{Estimated based on a 2.4 GHz 8-Core Intel Core i9 processor.} Hence, 
computing the gradient over a minibatch $\mathcal B_{i}{\subseteq} \mathcal D_i$ takes 
$|\mathcal B_{i}|T_{\mathrm{gr}}$.
Since gradient computations are done in parallel with communications, we set the minibatch size as
$|\mathcal B_{i}|=\min\{\lfloor T/T_{\mathrm{gr}}\rfloor,| \mathcal D_i|\}$, consistent with the frame duration $T$ specific to each algorithm.

\begin{figure*}
     \centering
               \hfill
          \begin{subfigure}[b]{0.275\linewidth}
        \includegraphics[width = \linewidth,trim=10 0 30 20, clip]{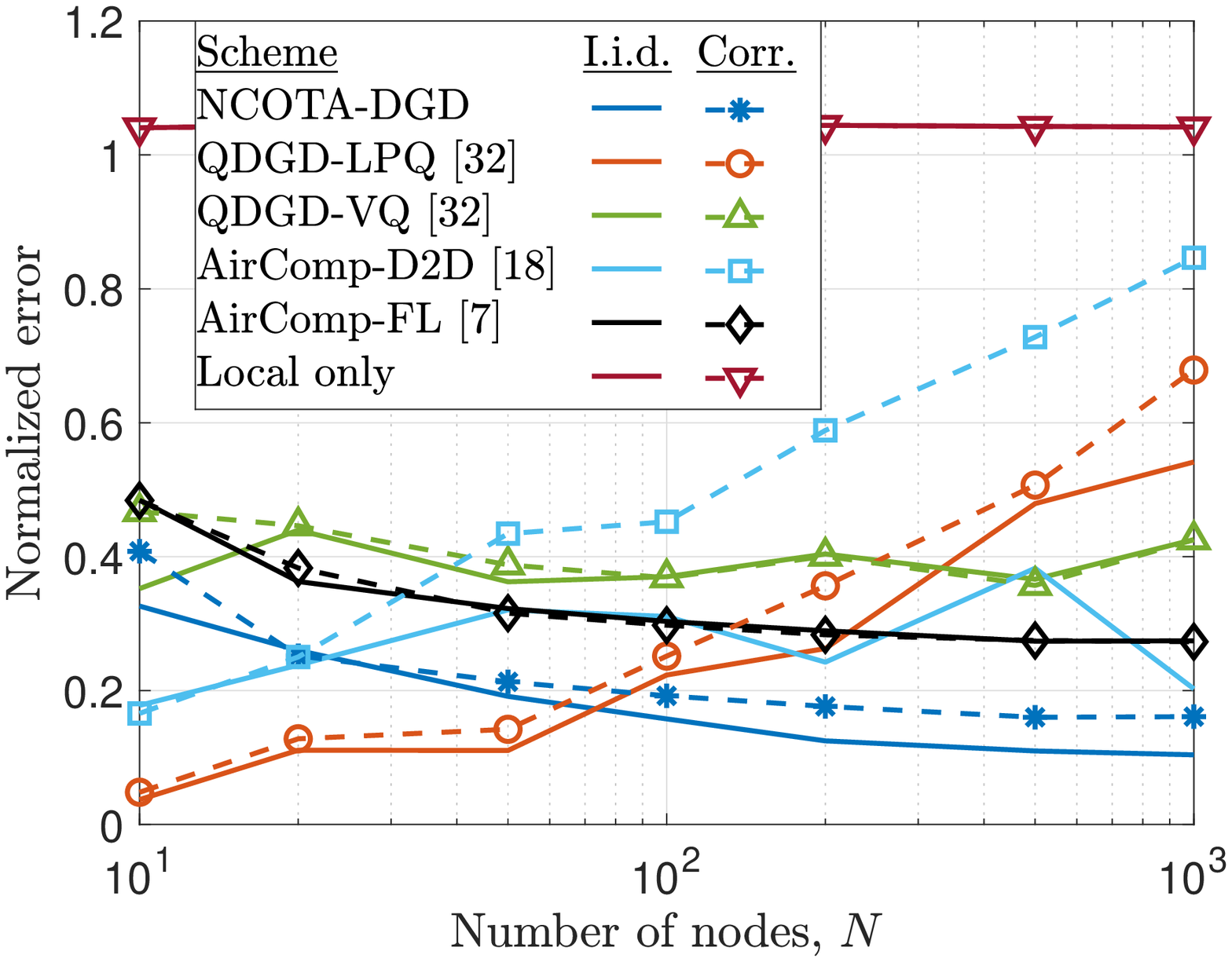}
        	    \vspace{-5mm}
	    \caption{\vspace{-5mm}} 
     \end{subfigure}
     \hfill
     \begin{subfigure}[b]{0.27\linewidth}
         \includegraphics[width = \linewidth,trim=10 0 30 20, clip]{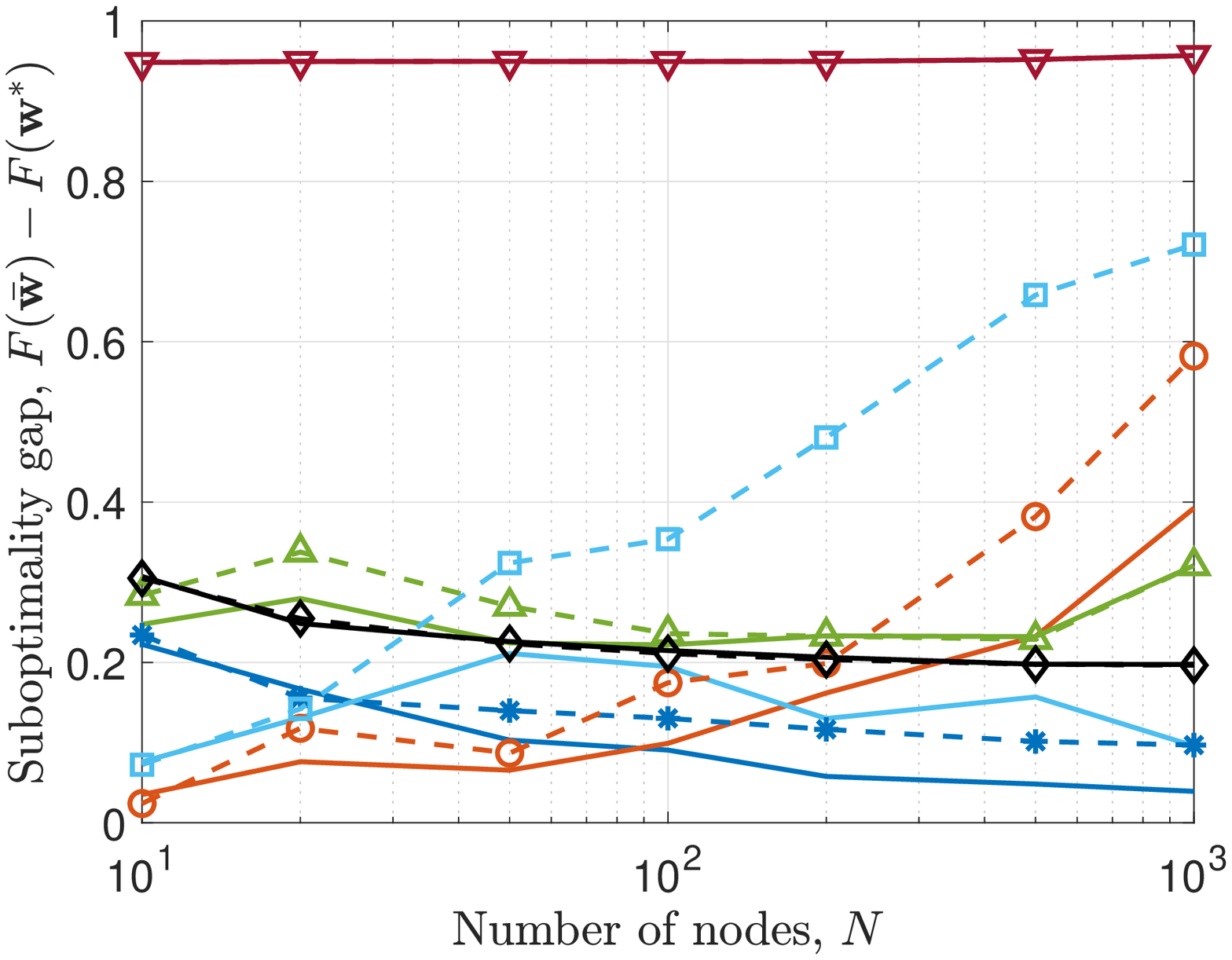}
         \vspace{-5mm}
	    \caption{\vspace{-5mm}} 
     \end{subfigure}
     \begin{subfigure}[b]{0.27\linewidth}
        \includegraphics[width = \linewidth,trim=10 0 30 20, clip]{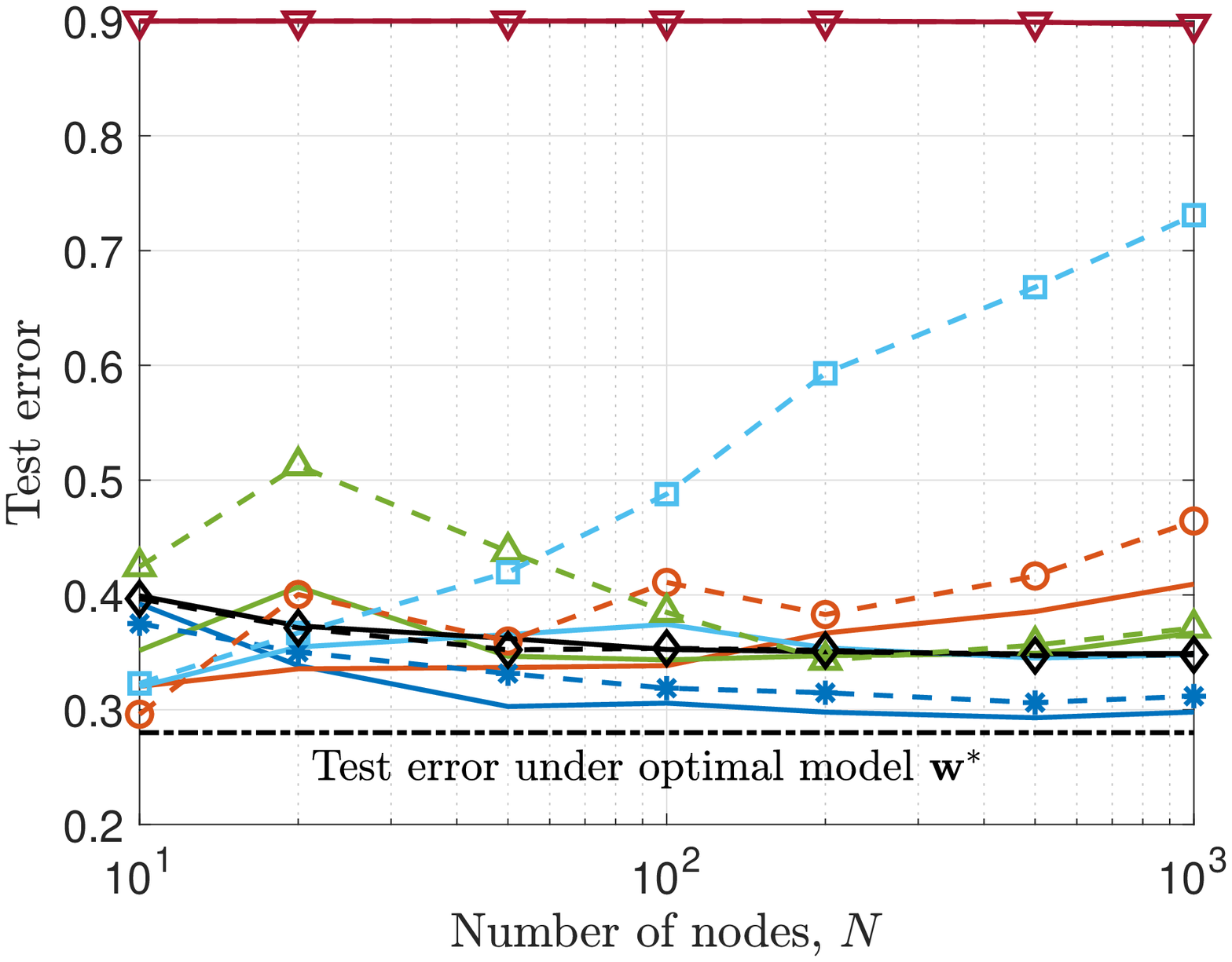}
       \vspace{-5mm}
	    \caption{\vspace{-5mm}} 
     \end{subfigure}
     \begin{subfigure}[b]{0.165\linewidth}
        \includegraphics[width = \linewidth,trim=0 0 15 20, clip]{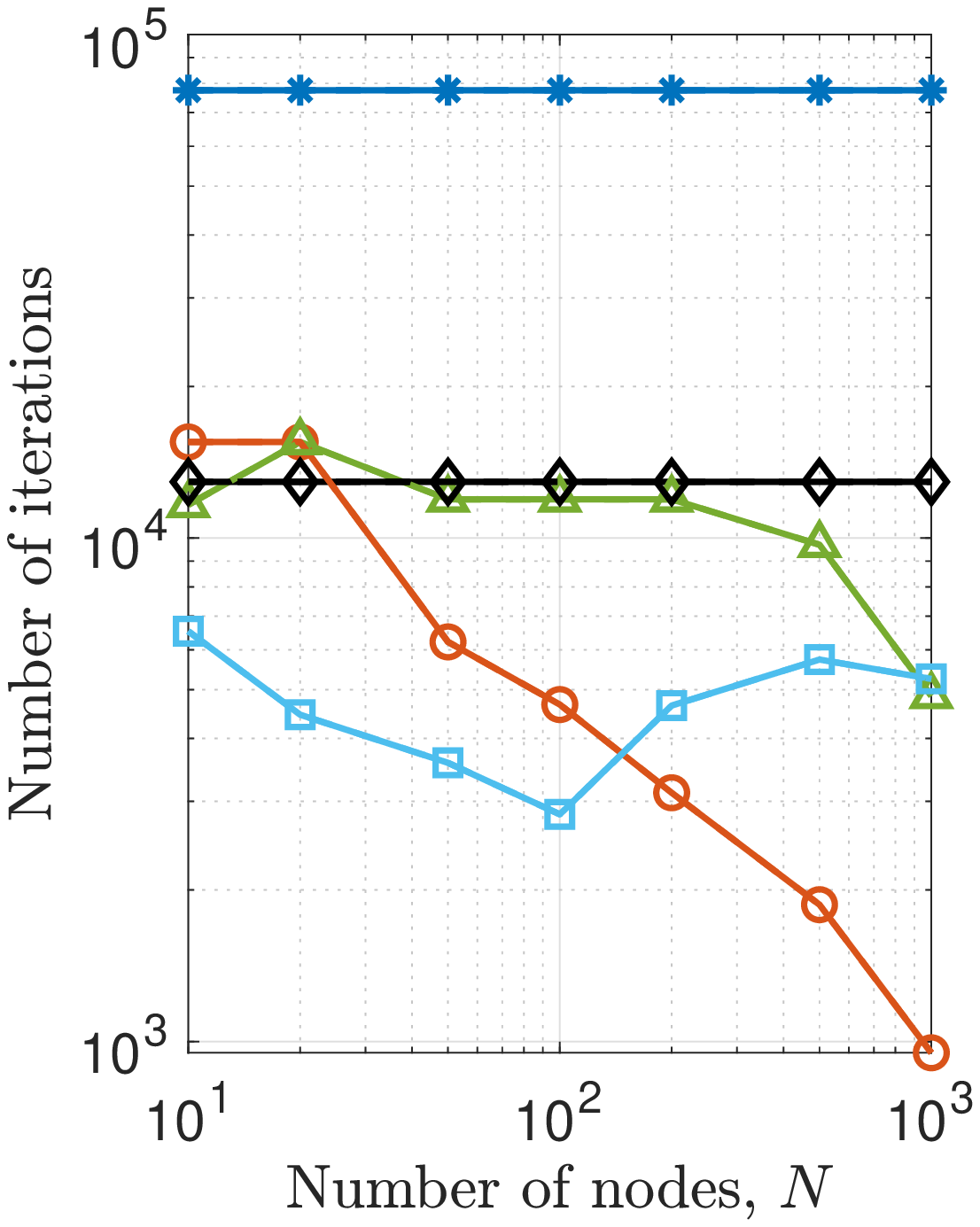}
        	 \vspace{-5mm}
	    \caption{\vspace{-5mm}} 
     \end{subfigure}     
          \hfill
\caption{Normalized error (a), suboptimality gap (b), test error (c) and number of iterations (d) vs number of nodes $N$, after 2000ms of execution time,
under spatially-i.i.d. (solid lines) and -dependent (dashed lines with markers) label scenarios, with i.i.d. channels over frames.
Common legend shown in figure (a).
\vspace{-6mm}
 } \label{fig:SoAvsN}
\end{figure*}

\subsection{Comparison of variants of \proposed}
\label{compproposed}
We compare different configurations of \proposed:
\\
$\bullet$ \underline{\emph{Baseline}} uses 
the CP0 codebook of Example \ref{ex1} with $M{=}2d{+}1{=}901$ codewords,
requiring two OFDM symbols ($Q{=}1024$), and yielding a frame of duration
$T{=}2T_{\mathrm{ofdm}}{=}258\mu$s. 
It follows Algorithm~\ref{A1} with parameters $\eta_0{=}\frac{2}{\mu+L}$, $\gamma_0{=}\frac{0.05}{\rho_2}$
 and $\delta{=}\frac{4}{5}\mu\eta_0$, in line with Theorem \ref{T1}
 and as motivated in the discussion following Theorem \ref{T1}.
The transmission probability $p_{\mathrm{tx}}$ is set as in Lemma \ref{L0}, with $\vartheta{=}1$ and $\varpi{\approx}\sqrt{M/Q}$.
 This choice of parameters is corroborated by numerical evaluations in the supplemental document (Fig. \ref{fig:params}).
\\
$\bullet$ \underline{\emph{2xOFDM}} and \underline{\emph{4xOFDM}} are the same as \emph{Baseline}, but they use 
$\times 2$ and $\times 4$ as many OFDM symbols, yielding $Q{=}2048$ ($T{=}516\mu$s) and $Q{=}4096$ ($T{=}1032\mu$s), respectively.
From Lemma~\ref{L0}, larger $Q$ yields smaller  noise in the consensus estimation, at the cost of longer frame duration $T$.
\\
$\bullet$ \underline{\emph{CP}} is the same as \emph{Baseline}, but 
it employs the conventional cross-polytope codebook of \cite{9740125} with $\phi{=}1{-}\frac{1}{\sqrt{d}r}\Vert\mathbf w_{i}\Vert_1$.
\\
$\bullet$ \underline{\emph{No random shifts}} is the same as \emph{Baseline}, but
it does not apply the random phase and coordinated circular subcarrier shifts on the transmitted signal, described in \secref{broadclass}.

 All variants are initialized as $\mathbf w_{i}{=}\mathbf 0,\forall i$.
We evaluate them on four different scenarios as described in the caption of Fig. \ref{fig:propcomp}, differing in the
data deployments (spatially-i.i.d. or -dependent) and channel fading properties: i.i.d., in which channels are i.i.d. over frames;
block-fading, in which they vary every 2ms (coherence time); and static, in which they remain fixed over the entire simulated interval.

In Fig. \ref{fig:propcomp}, we plot the \emph{normalized error}, 
$\frac{\hat{\mathbb E}[\sum_i\Vert{\mathbf w}_{i}{-}\mathbf w^*\Vert^2]}{N\Vert\mathbf w^*\Vert^2}$, versus the execution time ($kT$ after $k$ iterations). Here, $\hat{\mathbb E}$ denotes a sample average across 20
algorithm trajectories generated through independent realizations of network deployment, channels, AWGN noise,
and randomness used by the algorithm (transmission decisions, phase and circular subcarrier shifts, minibatch gradient selections).
We notice that \emph{Baseline} consistently performs the best across all different scenarios.
It has shorter frame duration but higher variance of the 
disagreement signal estimation (Lemma \ref{L0}) than
 {\emph{2xOFDM}} and {\emph{4xOFDM}}. This demonstrates that it is preferable to perform more frequent, albeit noisier, iterations of DGD.
 The CP0 codebook of \emph{Baseline} has lower variance  of the disagreement signal estimation than \emph{CP}, yielding better performance.\footnote{Analysis of this behavior is left for future work; it requires a tighter bound than Lemma \ref{L0} (valid under any codebook), by exploiting properties of CP$\phi$.}
\emph{No random shifts} diverges with non-i.i.d. channels (Figs. \ref{fig:propcomp}.c-d),
due   to a bias term in the disagreement signal estimation (see \eqref{biasissue}), which accumulates over time. In contrast, all other schemes employing the random phase and circular subcarrier shifts developed in \secref{broadclass} are not affected by different channel fading properties.
Fig. \ref{fig:propcomp}.b depicts a slight degradation in performance for all schemes in the spatially-dependent label scenario.
To explain this behavior, consider the receiving node labeled as $\blacktriangle$ in Fig. \ref{fig:simresdep}, carrying label '5': it receives strong signals from the nearby nodes carrying the same label '5' (less informative to $\blacktriangle$), but weaker signals from the nodes with label '0' (more informative to $\blacktriangle$). This results in slower propagation of information across the network.

\subsection{Comparison with state-of-the-art (SoA) schemes}
\label{compSoA}
In the remainder, we use the \emph{Baseline} configuration of \proposed, described in \secref{compproposed}, due to its superior performance observed in Fig. \ref{fig:propcomp}.
We compare it with adaptations of SoA works to the setting of this paper:
\\
$\bullet$ \underline{\emph{QDGD-LPQ}}/\underline{\emph{-VQ}}, adaptation of \cite{8786146},
which investigates the design and convergence analysis of Quantized-DGD under fixed learning and consensus stepsizes. Yet,
it assumes error-free communications, hence is not tailored to wireless systems affected by fading and interference.
To adapt \cite{8786146} to our setting,
nodes transmit over orthogonal channels via OFDMA, i.e., multiple nodes may transmit simultaneously on the same OFDM symbol, across orthogonal subcarriers. Node $i$ quantizes 
$\mathbf w_{i}$ using $B$ bits, and broadcasts the payload over the wireless channel on the assigned subcarriers $\mathcal S_i$.
Upon receiving the signal broadcast by node $j$, node $i$ decodes the payload correctly (denoted by the success indicator $\chi_{ij}{=}1$) if and only if
$
B{<}\sum_{s\in\mathcal S_j}\log_2(1{+}\frac{E}{N_0}\frac{\mathrm{SC}}{|\mathcal S_j|}|[\mathbf h_{ij}]_s|^2),
$
where $\mathcal S_j$ are the resource units allocated to node $j$. Otherwise, an outage occurs ($\chi_{ij}{=}0$).
Let $N_{rx,i}$ be the number of packets successfully received,
 and $\hat{\mathbf w}_{j}$ the reconstruction of ${\mathbf w}_{j}$ induced by the quantization operator.
 Node $i$ then follows the updates \eqref{updateeq}, with
$\tilde{\mathbf d}_{i}{=}\frac{1}{N_{rx,i}}\sum_{j}\chi_{ij}(\hat{\mathbf w}_{j}{-}{\mathbf w}_{i})$.
\\\indent
We consider two quantization schemes: the low precision quantizer (LPQ, see \cite[Example 2]{8786146}) and
the cross-polytope vector quantization (VQ) technique \cite{9740125}.
LPQ normalizes $\mathbf w_i$ by  $\Vert\mathbf w_{i}\Vert_\infty$, and then
quantizes each component of $\mathbf w_{i}/\Vert\mathbf w_{i}\Vert_\infty\in[-1,1]^d$ with $b$ bits using dithered quantization.
The overall payload is $B{=}64{+}b\cdot d$ bits,
including the magnitude $\Vert\mathbf w_{i}\Vert_\infty$ encoded with machine precision (64 bits).
VQ is used  with $\mathrm{REP}$ repetitions to reduce the variance of the quantization noise.
Since the cross-polytope codebook contains $2d$ codewords,
 the overall payload including repetitions is $B{=}\mathrm{REP}\lceil\log_2(2d)\rceil$.
 The free parameters (number of subcarriers $\mathrm{SC}_n$ per node, bits for LPQ, repetitions for VQ) are optimized numerically for best performance.
 \emph{The constant consensus and learning stepsizes are also optimized numerically,  hence may not satisfy the stringent conditions of \cite{8786146}.}
Since one OFDM symbol fits $\frac{\mathrm{SC}}{\mathrm{SC}_n}$ transmissions,
the frame duration is $T{=}N\cdot \frac{\mathrm{SC}_n}{\mathrm{SC}}\cdot T_{\mathrm{ofdm}}$.
\\$\bullet$ \underline{\emph{AirComp-D2D}}, inspired by \cite{9563232}, leverages AirComp to efficiently solve DGD over device-to-device mesh networks.
 It organizes the network into non-interfering, star-based sub-networks via graph-coloring, scheduling multiple such sub-networks simultaneously to enhance spectral efficiency. Each sub-network operates in paired slots: in the first, the central node receives coherently aggregated models from neighbors via AirComp with channel inversion; in the second, it broadcasts its own  model to its neighbors. 
   This process continues until all D2D links have been activated. 
  We generate the mesh network based on a maximum distance criterium, i.e. nodes $i,j$ are connected if and only if they are less than
$\mathrm{dist}_{\max}$ apart. We use the Metropolis-Hastings rule to design the mixing weights $\omega_{ij}$ \cite{6854643}.
While  \cite{9563232} assumes Rayleigh flat-fading channels with no interference between non-neighboring devices, our adaptation extends to frequency-selective channels and accounts for interference.
Note that \emph{AirComp-D2D} requires CSI  for power control, relies on \emph{instantaneous} channel reciprocity for channel inversion, 
and requires centralized knowledge of the graph topology 
for graph-coloring. In contrast, \proposed\ does not require any such knowledge,
 and relies only on \emph{average} (vs \emph{instantaneous}) reciprocity, $\Lambda_{ij}{=}\Lambda_{ji}$.
The free parameters ($\mathrm{dist}_{\max}$, stepsizes) are optimized numerically for best performance.
We refer to \cite{9563232} for further details. 
\\
$\bullet$ \underline{\emph{AirComp-FL}}, based on \cite{8870236}, solves \ref{global} via FL across a star topology, with one node acting as the parameter server (PS, the star center). 
The process unfolds in three stages: first, the PS sends the current model to the $N-1$ edge devices; then, each device updates this model
via \emph{local gradient descent}; finally, all devices simultaneously send their updates to the PS using AirComp with channel inversion,
so that the latter updates the model for the next round. Like \cite{9563232}, this approach relies on instantaneous channel reciprocity and CSI, 
obtained by broadcasting an OFDM pilot symbol in the downlink.
\\
$\bullet$ \underline{\emph{Local only}}: Nodes optimize independently using their datasets, without inter-node communication. This scheme illustrates the need for communication to solve the ML task: it results in a 90\% test error, as shown in Fig. \ref{fig:SoAvsN}.c.

In Fig. \ref{fig:SoAvsN}, we compare these schemes versus the number of nodes $N$, after $T_{sim}{=}$2000ms of execution time.
Note that the number of iterations completed by a certain scheme is $\sim T_{sim}/T$, function of its frame duration $T$, shown in Fig. \ref{fig:SoAvsN}.d.
In addition to the normalized error (a), we also evaluate the \emph{suboptimality gap} of the average model
$\bar{\mathbf w}=\frac{1}{N}\sum_{i}\mathbf w_{i}$, $F(\bar{\mathbf w})-F(\mathbf w^*)$ (b)
and the \emph{test error} $\mathrm{TEST}(\bar{\mathbf w})$, computed on a test set of  1000 examples (100 for each class).
For a parameter vector $\mathbf w^\top{=}[\mathbf w^{(1)\top},\dots,\mathbf w^{(9)\top}]\in\mathbb R^d$ and $\mathbf w^{(0)}{=}\mathbf 0$, we predict
 the class associated to feature vector $\mathbf d$ as $\arg\max_{c}\bar{\mathbf w}^{(c)\top}\mathbf f$.
 The results are further averaged over 20
algorithm trajectories generated through independent realizations of network deployment, channels, AWGN noise,
and randomness used by each algorithm.

Overall, we observe that smaller normalized error typically translates to smaller suboptimality gap and test error.
\proposed\ performs the best when $N{\geq}100$, under both spatially-i.i.d. and -dependent label scenarios.
In fact, it executes the most iterations (Fig. \ref{fig:SoAvsN}.d) within the execution time $T_{sim}$, revealing its scalability to dense network deployments; in contrast, all other schemes, with the exception of \emph{AirComp-FL} and \emph{AirComp-D2D}, execute less iterations as $N$ increases, due to the increased scheduling overhead. For smaller networks ($N<100$), 
\emph{QDGD-LPQ} performs the best: in this regime, it schedules all devices within a small frame duration, yet it has 
better noise control through the use of digital transmissions, and avoids interference via OFDMA scheduling.

\emph{AirComp-D2D} performs well across $N$ in the spatially-i.i.d. label scenario (between 3k and 6k iterations across all $N$, see Fig.  \ref{fig:SoAvsN}.d), thanks to its efficient scheduling and the use of AirComp allowing simultaneous transmissions \cite{9563232}. Yet, it performs poorly in the spatially-dependent label scenario, due to the slower  propagation of information across the network, noted in the evaluations of 
\secref{compproposed}. With \emph{AirComp-D2D}, information propagation is further hampered by the D2D connectivity structure, so that it takes several iterations for the information generated by the nodes at the edge to propagate through the mesh network to the rest of the network.
In contrast, \proposed\ does not rely on a predetermined D2D mesh network: it exploits the channel propagation conditions in the consensus phase,
resulting in a much smaller performance degradation.

\emph{AirComp-FL} also exhibits scalability with respect to $N$ through AirComp. Yet, its effectiveness is limited by channel estimation errors and the star topology's vulnerability to path loss, especially affecting edge devices. This results in a communication bottleneck: uplink transmissions must accommodate the worst channel conditions to satisfy power constraints during channel inversion;
in the downlink, transmission rates are reduced
to guarantee reception by the edge devices. Conversely, \proposed\ is topology-agnostic and 
leverages the local connectivity structure, achieving more robust performance for edge devices.


\vspace{-4mm}
   \section{Conclusions}
\label{conclu}
   This paper presents a novel DGD algorithm tailored to wireless systems, addressing the challenges posed by noise, fading, and limited bandwidth  without requiring inter-agent coordination, topology, or channel state information. Our approach, centered around a Non-Coherent Over-The-Air (NCOTA) consensus mechanism, leverages a noisy energy superposition property of wireless channels, allowing simultaneous, uncoordinated transmissions. This novel method achieves efficient consensus estimation without explicit mixing weights, exploiting path loss and adapting to a variety of fading and channel conditions. 
     We prove that the error of \proposed\ vanishes 
  with rate $\mathcal O(1/\sqrt{k})$ after $k$ iterations, using suitably tuned consensus and learning stepsizes.
Our results indicate a promising direction for decentralized optimization in wireless environments, showcasing faster convergence and operational efficiency,
especially in dense networks. 
There are still open challenges in designing communication-efficient decentralized learning algorithms for wireless systems. These challenges include  non-convex objectives and high model dimensionality, both inherent to deep neural network architectures. This work paves the way for further research in these areas.


\emph{Acknowledgments}: Special thanks to Prof. Gesualdo Scutari for valuable discussions  during manuscript preparation.

\appendices
\renewcommand\thesubsection{\thesection.\Roman{subsection}}

\def\thesubsectiondis{Appendix B.\Roman{subsection}:} 

\section*{Appendix A: Bound on the SGD variance}
\begin{lemma}
\label{Lsgd}
Assume $|\mathcal D_i|{=}D{\geq}2,\ \forall i$ and minibatch gradients with minibatch size $|\mathcal B_{ik}|{=}B{\in}\{1,\dots, D\},\forall i,k$.
Assume that: \{1\} the  loss function $\phi(\boldsymbol{\xi};\mathbf w)$  is $\mu$ strongly-convex and $L$ smooth with respect to $\mathbf w$ (implying Assumption \ref{fiassumption});
\{2\} $\Vert\nabla\phi(\boldsymbol{\xi};\mathbf w^*)\Vert{\leq}\nabla^*,\forall\boldsymbol{\xi}{\in}\mathcal D_i,\forall i$ (implying Def. \ref{gradbo}).
 Then,
$$
\frac{1}{N}\mathbb E[\Vert\e_k^{(2)}\Vert^2|\mathcal F_k]\leq
 \frac{D-B}{B(D-1)}(\nabla^*+L\cdot\mathrm{dm}(\mathcal W))^2\triangleq \Sigma^{(2)}.
$$
\end{lemma}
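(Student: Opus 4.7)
The plan is to decompose the error across nodes, apply the standard finite-population variance formula for sampling without replacement on each node's local dataset, and then bound the individual gradient norms using $L$-smoothness together with the boundedness of $\mathcal W$.

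First, since $\e_k^{(2)}$ is the stacking of the per-node errors $\mathbf g_{ik} - \nabla f_i(\mathbf w_{ik})$ and the minibatches $\mathcal B_{ik}$ are drawn independently across nodes, conditioning on $\mathcal F_k$ (which fixes each $\mathbf w_{ik}$) yields
\begin{align*}
\frac{1}{N}\mathbb E[\Vert\e_k^{(2)}\Vert^2|\mathcal F_k]
=\frac{1}{N}\sum_{i=1}^N \mathbb E\bigl[\Vert\mathbf g_{ik}-\nabla f_i(\mathbf w_{ik})\Vert^2\,\big|\,\mathcal F_k\bigr].
\end{align*}
It then suffices to bound each summand uniformly by $\frac{D-B}{B(D-1)}(\nabla^*+L\cdot\mathrm{dm}(\mathcal W))^2$.

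Next, since $\mathbf g_{ik}=\frac{1}{B}\sum_{\boldsymbol{\xi}\in\mathcal B_{ik}}\nabla\phi(\boldsymbol{\xi};\mathbf w_{ik})$ is the sample mean of $B$ gradients drawn uniformly without replacement from the finite ``population'' $\{\nabla\phi(\boldsymbol{\xi};\mathbf w_{ik}):\boldsymbol{\xi}\in\mathcal D_i\}$ of size $D$ (whose population mean is exactly $\nabla f_i(\mathbf w_{ik})$), I would invoke the classical sampling-without-replacement variance identity. Applied component-wise and summed over the $d$ coordinates of the gradient, this gives
\begin{align*}
\mathbb E[\Vert\mathbf g_{ik}-\nabla f_i(\mathbf w_{ik})\Vert^2|\mathcal F_k]
=\frac{D-B}{B(D-1)}\cdot\frac{1}{D}\sum_{\boldsymbol{\xi}\in\mathcal D_i}\Vert\nabla\phi(\boldsymbol{\xi};\mathbf w_{ik})-\nabla f_i(\mathbf w_{ik})\Vert^2.
\end{align*}
Upper-bounding the population variance by the (uncentered) second moment of the population then yields
\begin{align*}
\mathbb E[\Vert\mathbf g_{ik}-\nabla f_i(\mathbf w_{ik})\Vert^2|\mathcal F_k]
\leq\frac{D-B}{B(D-1)}\cdot\frac{1}{D}\sum_{\boldsymbol{\xi}\in\mathcal D_i}\Vert\nabla\phi(\boldsymbol{\xi};\mathbf w_{ik})\Vert^2.
\end{align*}

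Finally, for any $\boldsymbol{\xi}\in\mathcal D_i$, the triangle inequality together with $L$-smoothness of $\phi(\boldsymbol{\xi};\cdot)$ and $\mathbf w_{ik},\mathbf w^*\in\mathcal W$ gives
\begin{align*}
\Vert\nabla\phi(\boldsymbol{\xi};\mathbf w_{ik})\Vert
\leq \Vert\nabla\phi(\boldsymbol{\xi};\mathbf w^*)\Vert+L\Vert\mathbf w_{ik}-\mathbf w^*\Vert
\leq \nabla^*+L\cdot\mathrm{dm}(\mathcal W),
\end{align*}
using assumption \{2\} and Definition \ref{diameter}. Substituting this uniform bound into the previous display and averaging over $i$ delivers the claim. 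I do not anticipate any real obstacle: the only subtlety is whether the minibatch is sampled with or without replacement, and the stated bound matches the finite-population correction factor $\frac{D-B}{B(D-1)}$ characteristic of sampling without replacement (and is trivially satisfied in the degenerate case $B=D$, where the factor vanishes since $\mathbf g_{ik}=\nabla f_i(\mathbf w_{ik})$ deterministically).
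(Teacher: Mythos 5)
Your proposal is correct and follows essentially the same route as the paper: the paper simply cites \cite[Prop.~1]{Wu2020OnTN} for the sampling-without-replacement variance factor $\frac{D-B}{B(D-1)}$ that you derive explicitly, and then uses the identical triangle-inequality-plus-smoothness bound $\Vert\nabla\phi(\boldsymbol{\xi};\mathbf w)\Vert\leq\nabla^*+L\cdot\mathrm{dm}(\mathcal W)$. No gaps.
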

\begin{proof} It directly follows from \cite[Prop. 1]{Wu2020OnTN}, along with
$\Vert\nabla\phi(\boldsymbol{\xi};\mathbf w)\Vert
\leq \Vert\nabla\phi(\boldsymbol{\xi};\mathbf w^*)\Vert+\Vert\nabla\phi(\boldsymbol{\xi};\mathbf w)-\phi(\boldsymbol{\xi};\mathbf w^*)\Vert$\\
$
\leq
\nabla^*+L\Vert\mathbf w-\mathbf w^*\Vert
\leq \nabla^*+L\cdot\mathrm{dm}(\mathcal W).
$
\end{proof}

 \section*{Appendix B: Proof of Theorem \ref{Tmain}}
 \label{proofofmainT}
 We prove \eqref{L1}-\eqref{L4} in appendices B\ref{proofofL1}-B\ref{proofofL2},
 using the auxiliary lemmas in B\ref{aux}. 
 We denote the set of semidefinite-positive $n\times n$ matrices with eigenvalues in the interval $[\mu,L]$ as
 $\mathcal S_{\mu,L}^n$.
 We start with two important properties.
 
 {\bf P1:} 
For any twice-differentiable function $f:\mathbb R^n\mapsto\mathbb R$, $\mu$-strongly convex and $L$-smooth, 
we can express $\nabla f(\mathbf y)$ as the line integral of the Hessian matrix $\nabla^2 f$ between $\mathbf x$ and $\mathbf y$:
\begin{align}
\nonumber
 \nabla f(\mathbf y)
&=\nabla f(\mathbf x)+\int_0^1\nabla^2 f(\mathbf x+t(\mathbf y-\mathbf x))\mathrm dt(\mathbf y-\mathbf x),
\\&
\label{multiv}
=\nabla f(\mathbf x)+\mathbf A(\mathbf y-\mathbf x),
\end{align}
 where $\mathbf A$ (function of $\mathbf x$, $\mathbf y$)  is the \emph{mean Hessian} matrix between points $\mathbf x$ and $\mathbf y$.
  Since $f$ is $\mu$-strongly convex and $L$-smooth, 
it follows that   $\nabla^2 f{\in}\mathcal S_{\mu,L}^{n}$,
 hence $\mathbf A{\in}\mathcal S_{\mu,L}^{n}$.

{\bf P2:} For any function $h{:}\mathbb R^n{\mapsto}\mathbb R$,  $\mu_h$-strongly convex and $L_h$-smooth ($L_h{\geq}\mu_h$), 
any $\mathbf x,\mathbf y{\in}\mathbb R^n$,
 and stepsize $\eta\in[0,\frac{2}{\mu_h+L_h}]$,
\begin{align}
\label{nestbound}
\Vert (\mathbf x-\mathbf y)-\eta(\nabla h(\mathbf x)-\nabla h(\mathbf y))\Vert
\leq
(1-\eta\mu_h)\Vert\mathbf x-\mathbf y\Vert.
\end{align}
To show this, use \eqref{multiv} to express
$\nabla h(\mathbf y){=}\nabla h(\mathbf x){+}\mathbf A(\mathbf y{-}\mathbf x)$,
for some  $\mathbf A\in\mathcal S_{\mu_h,L_h}^{n}$,
 hence
\\\centerline{$
 (\mathbf x-\mathbf y)-\eta(\nabla h(\mathbf x)-\nabla h(\mathbf y))
=
 (\mathbf I-\eta\mathbf A)(\mathbf x-\mathbf y).
$}
Since $\mathbf A\in\mathcal S_{\mu_h,L_h}^{n}$, we further bound
$$
\Vert (\mathbf I-\eta\mathbf A)(\mathbf x-\mathbf y)\Vert^2
\leq
\max_{\rho\in[\mu_h,L_h]}(1-\eta\rho)^2
\Vert \mathbf x-\mathbf y\Vert^2.
$$
It is straightforward to show that the right-hand-side is maximized by
$\rho{=}\mu_h$ under the stepsize assumption, yielding \eqref{nestbound}.
 \subsection{Proof of  \eqref{L1}}
 \label{proofofL1}
Using the non-expansive property of projections (see, for instance, \cite{bertsekas2003convex}), we can bound
$
\Vert\mathbf W_{k+1}{-}\overline{\mathbf W}_{k+1}\Vert^2
$$$
{\leq}
\Vert\mathbf W_{k}{-}\overline{\mathbf W}_{k}
-\eta_k(\nabla G_k(\mathbf W_{k})-\nabla G_k(\overline{\mathbf W}_{k}))
+\gamma_k\e_k^{(1)}-\eta_k\e_k^{(2)}
\Vert^2.
$$
Taking the expectation conditional on $\mathcal F_k$ and using the independence of $\e_k^{(1)}$ and $\e_k^{(2)}$, it follows
$\mathbb E[\Vert\mathbf W_{k+1}{-}\overline{\mathbf W}_{k+1}\Vert^2|\mathcal F_k]$
\begin{align}
{\leq}&
\Vert(\mathbf W_{k}-\overline{\mathbf W}_{k})
{-}\eta_k(\nabla G_k(\mathbf W_{k}){-}\nabla G_k(\overline{\mathbf W}_{k}))
\Vert^2\label{dfgghs}\\&
{+}\gamma_k^2N\Sigma^{(1)}
{+}\eta_k^2N\Sigma^{(2)}.
\label{fgh}
\end{align}
Note that Assumption \ref{fiassumption} implies that $G_k$ is $\mu$-strongly convex and $L_{Gk}\triangleq L+\frac{\gamma_k}{\eta_k}\rho_N$-smooth.
Then, using \eqref{nestbound}, we find that
\eqref{dfgghs} 
$\leq(1-\mu\eta_k)^2\Vert\mathbf W_{k}-\overline{\mathbf W}_{k}\Vert^2$,
as long as $\eta_k\leq 2/(\mu+L_{Gk})$ ({\bf C1} of Theorem \ref{Tmain}).
Using this bound and computing the unconditional expectation, it follows 
$\Vert\mathbf W_{k+1}-\overline{\mathbf W}_{k+1}\Vert_{\mathbb E}^2$
$$
\leq
(1-\mu\eta_k)^2\Vert\mathbf W_{k}-\overline{\mathbf W}_{k}\Vert_{\mathbb E}^2
+\gamma_k^2N\Sigma^{(1)}+\eta_k^2N\Sigma^{(2)}.
$$
 The result follows by induction and noting that $\mathbf W_{\bar{\kappa}}=\overline{\mathbf W}_{\bar{\kappa}}$.
\vspace{-5mm}
\subsection{Proof of \eqref{L4}}
\label{proofofL4}
Let 
$
\hat {\mathbf W}_k\triangleq \argmin_{\mathbf W\in\mathbb R^{Nd}}G_k(\mathbf W)
$
be the \emph{unconstrained} minimizer of $G_k$;\footnote{Here, we assume that all $f_i$'s are defined on $\mathbb R^d$, and the extended functions are $L$-smooth and $\mu$-strongly convex on $\mathbb R^d$}
this is in contrast to $\mathbf W_k^*$, which minimizes it over $\mathcal W^N$.
In the last part of the proof, we will show that, when  $\frac{\eta_{k}}{\gamma_{k}}{\leq}\frac{\zeta\cdot \mu\rho_2}{\sqrt{N}\nabla^* L}$  ({\bf C2} of Theorem~\ref{Tmain}),
$\mathbf W_k^*$ and $\hat {\mathbf W}_k$ coincide. 
Since $G_k$ is strongly-convex and smooth on $\mathbb R^{Nd}$, $\hat {\mathbf W}_k$ is the unique solution of
$\nabla G_k(\hat {\mathbf W}_k)=\mathbf 0$ in $\mathbb R^{Nd}$. We now bound the distance of $\hat {\mathbf W}_k$ from $\mathbf W^*$.
From \eqref{multiv} with $\mathbf x=\mathbf W^*$, $\mathbf y=\hat {\mathbf W}_k$, there exists $\mathbf A\in\mathcal S_{\mu,L}^{Nd}$ such that 
\\\centerline{$
\nabla f(\hat {\mathbf W}_k)=\nabla f(\mathbf W^*)+\mathbf A(\hat {\mathbf W}_k-\mathbf W^*).
$}
  It follows that $\mathbf 0=\nabla G_k(\hat {\mathbf W}_k)-\frac{\gamma_k}{\eta_{k}}\LL\mathbf W^*$
\\\centerline{$
=\nabla f(\hat {\mathbf W}_k)+\frac{\gamma_k}{\eta_{k}}\LL(\hat {\mathbf W}_k-\mathbf W^*)
=
\nabla f(\mathbf W^*)+\mathbf B(\hat {\mathbf W}_k-\mathbf W^*),
$}
where we used 
$\LL\mathbf W^*=(\boldsymbol{L}\otimes\mathbf I_d)\cdot(\mathbf 1_N\otimes\mathbf w^*)
=
(\boldsymbol{L}\cdot\mathbf 1_N)\otimes\mathbf w^*=\mathbf 0$, and
 defined
$\mathbf B\triangleq\mathbf A+\frac{\gamma_k}{\eta_k}\LL\succeq\mu\mathbf I$.
Solving with respect to $\hat {\mathbf W}_k-\mathbf W^*$ and computing the norm of both sides, we find
\begin{align}
\Vert\hat {\mathbf W}_k-\mathbf W^*\Vert
=
\Vert\mathbf B^{-1}\nabla f(\mathbf W^*)\Vert.
\label{adsh}
\end{align}
Note that $\nabla f(\mathbf W^*)\bot(\mathbf 1_N\otimes\mathbf I_d)$ (in fact, $\sum_i\nabla f_i(\mathbf w^*)=\mathbf 0$ from the optimality condition on $\mathbf w^*$, since $\mathbf w^*\in\mathrm{int}(\mathcal W)$).
Hence, we  bound \eqref{adsh}
via Lemma  \ref{L5} in B\ref{aux} with $U=\frac{\gamma_k}{\eta_k}$ as
\ba{
\Vert\hat {\mathbf W}_k-\mathbf W^*\Vert
\leq
\frac{L}{\mu\rho_2}
\frac{\eta_k}{\gamma_k}\Vert\nabla f(\mathbf W^*)\Vert
\leq
\frac{\sqrt{N}\nabla^* L}{\mu\rho_2}
\frac{\eta_k}{\gamma_k},
}{dfg}
where in the last step we used
 $\Vert\nabla f(\mathbf W^*)\Vert{\leq}$
 $\sqrt{\sum_i\Vert\nabla f_i(\mathbf w^*)\Vert^2}{\leq}\sqrt{N}\nabla^*$
 from Definition \ref{gradbo}.

Next, we show that $\hat {\mathbf W}_k\in\mathcal W^N$, hence it coincides with the solution of the constrained problem ($\hat {\mathbf W}_k= {\mathbf W}_k^*$). Since $\mathbf w^*$ is at distance $\zeta$ from the boundary of $\mathcal W$ (Assumption \ref{distance}), it suffices to show that $\Vert\hat {\mathbf w}_{ik}-\mathbf w^*\Vert\leq\zeta,\ \forall i$ (i.e., $\mathbf w^*$ is closer to $\hat {\mathbf w}_{ik}$ than to the boundary of $\mathcal W$).
Indeed,
$
\Vert\hat {\mathbf w}_{ik}-\mathbf w^*\Vert
\leq
\Vert\hat {\mathbf W}_k-\mathbf W^*\Vert\leq 
\frac{\sqrt{N}\nabla^* L}{\mu\rho_2}
\frac{\eta_{k}}{\gamma_{k}},
$
hence, when $\frac{\eta_{k}}{\gamma_{k}}\leq \frac{\zeta \cdot \mu\rho_2}{\sqrt{N}\nabla^* L}$ ({\bf C2} of Theorem \ref{Tmain}),
 it follows that $\hat {\mathbf W}_k= {\mathbf W}_k^*$ and
\begin{align}
\label{gradeq0}
\nabla G_k({\mathbf W}_k^*)=\mathbf 0,\end{align}
and \eqref{L4} readily follows from \eqref{dfg}.
\vspace{-5mm}
\subsection{Proof of  \eqref{L2}}
\label{proofofL2}
Using the triangle inequality, we bound
\ba{
\Vert\overline{\mathbf W}_{k+1}{-}\mathbf W_{k+1}^*\Vert
{\leq}
\Vert\overline{\mathbf W}_{k+1}-\mathbf W_{k}^*\Vert
+\Vert\mathbf W_{k+1}^*-\mathbf W_{k}^*\Vert.
}{sdfghdasdf}
Using the fixed point optimality condition $\mathbf W_{k}^*=\Pi^N[\mathbf W_{k}^*
-\eta\nabla G_k(\mathbf W_{k}^*)],\forall \eta\geq 0$ and the non-expansive property of projections \cite{bertsekas2003convex},
we bound the first term of \eqref{sdfghdasdf} as
$\Vert\overline{\mathbf W}_{k+1}-\mathbf W_{k}^*\Vert$
\ba{
&\nonumber
=\Vert \Pi^N[\overline{\mathbf W}_{k}
-\eta_k\nabla G_{k}(\overline{\mathbf W}_{k})]
-\Pi^N[{\mathbf W}_{k}^*
-\eta_k\nabla G_{k}({\mathbf W}_{k}^*)]\Vert
\\&\nonumber
\leq
\Vert \overline{\mathbf W}_{k}-{\mathbf W}_{k}^*
-\eta_k(\nabla G_{k}(\overline{\mathbf W}_{k})
-\nabla G_{k}({\mathbf W}_{k}^*))\Vert\\&
\leq(1-\mu\eta_k)\Vert\overline{\mathbf W}_{k}-{\mathbf W}_{k}^*\Vert,
}{int41}
where in the last step we used
\eqref{nestbound} since $G_k$ is $\mu$-strongly convex and $L_{Gk}\triangleq L+\frac{\gamma_k}{\eta_k}\rho_N$-smooth,
 provided that $\eta_k\leq 2/(\mu+L_{Gk})$ ({\bf C1} of Theorem \ref{Tmain}).
We now bound  $\Vert\mathbf W_{k+1}^*{-}\mathbf W_{k}^*\Vert$.
First, note that,
for $t{\geq}\bar{\kappa}$,
$\mathbf W_{t}^*$ satisfies the optimality condition
 $\nabla G_t(\mathbf W_{t}^*)=\mathbf 0$ (see \eqref{gradeq0}).
 It then follows that
\ba{
\label{e1}
&\mathbf 0{=}\nabla G_{k+1}(\mathbf W_{k+1}^*){=}\nabla f(\mathbf W_{k+1}^*){+}\frac{\gamma_{k+1}}{\eta_{k+1}}\LL\mathbf W_{k+1}^*,\!\!
\\
\label{e2}
&\mathbf 0=\nabla G_{k}(\mathbf W_{k}^*)=\nabla f(\mathbf W_{k}^*)+\frac{\gamma_{k}}{\eta_{k}}\LL\mathbf W_{k}^*.
}{}
Furthermore,
from \eqref{multiv} with $\mathbf x=\mathbf W_{k}^*$ and $\mathbf y=\mathbf W_{k+1}^*$,
there exists  $\mathbf A\in\mathcal S_{\mu,L}^{Nd}$ such that 
$\nabla f(\mathbf W_{k+1}^*)=\nabla f(\mathbf W_{k}^*)+\mathbf A(\mathbf W_{k+1}^*-\mathbf W_{k}^*)$. 
We use this expression in \eqref{e1}, subtract \eqref{e2}, reorganize and solve. These steps yield
$$
\mathbf W_{k+1}^*-\mathbf W_{k}^*
=-\Big(\frac{\gamma_{k+1}}{\eta_{k+1}}-\frac{\gamma_{k}}{\eta_{k}}\Big)\mathbf B^{-1}\LL\mathbf W_{k+1}^*,
$$
where 
$\mathbf B\triangleq \mathbf A+\frac{\gamma_k}{\eta_k}\LL.$
 Therefore, 
\ba{
\Vert\mathbf W_{k+1}^*{-}\mathbf W_{k}^*\Vert
\leq
\Big(\frac{\gamma_{k+1}}{\eta_{k+1}}{-}\frac{\gamma_{k}}{\eta_{k}}\Big)
\Vert\mathbf B^{-1}\LL\mathbf W_{k+1}^*\Vert.
}{int4}
Noting that
$\LL\mathbf W_{k+1}^*\bot(\mathbf 1_N\otimes\mathbf I_d)$ (since $\boldsymbol{L}\cdot\mathbf 1=\mathbf 0$), we
use Lemma~\ref{L5} in  B\ref{aux} with $U=\frac{\gamma_{k}}{\eta_{k}}$
 to bound
$$
\Vert\mathbf W_{k+1}^*-\mathbf W_{k}^*\Vert
\leq
\frac{L}{\mu\rho_2}\Big(\frac{\gamma_{k+1}}{\eta_{k+1}}-\frac{\gamma_{k}}{\eta_{k}}\Big)\frac{\eta_{k}}{\gamma_{k}}
\Vert\LL\mathbf W_{k+1}^*\Vert.
$$
Finally, we bound $\Vert\LL\mathbf W_{k+1}^*\Vert$
via Lemma \ref{L3} in Appendix B\ref{aux}:
$$
\Vert\mathbf W_{k+1}^*-\mathbf W_{k}^*\Vert
\leq
\Big(\frac{\eta_{k}}{\gamma_{k}}-\frac{\eta_{k+1}}{\gamma_{k+1}}\Big)
\frac{\sqrt{N}\nabla^* L}{\mu\rho_2}
\Big(1+\frac{L^2}{\mu\rho_2}
\frac{\eta_{k}}{\gamma_{k}}\Big).
$$
Eq. \eqref{L2} follows by
combining this result with \eqref{int41} into \eqref{sdfghdasdf}, and solving via induction
with 
$\Vert\overline{\mathbf W}_{\bar{\kappa}}{-}\mathbf W_{\bar{\kappa}}^*\Vert
\leq\sqrt{N}\mathrm{dm}(\mathcal W)$.
\vspace{-6mm}
 \subsection{Auxiliary lemmas}
 \label{aux}
\begin{lemma}\label{L5}
Let
$
\mathbf B{=}\mathbf A{+}U\LL,
$
where $U>0$, 
$\mathbf A\in\mathcal S_{\mu,L}^{Nd}$. Then,
$$
\Vert\mathbf B^{-1}\mathbf v\Vert\leq
\frac{1}{U}\cdot\frac{L}{\mu\rho_2}\Vert\mathbf v\Vert,\ \forall \mathbf v\bot(\mathbf 1_N\otimes\mathbf I_d).
$$
\end{lemma}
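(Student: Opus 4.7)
The plan is to reduce the problem to a block decomposition induced by the eigenstructure of $\LL$ and then apply a Schur-complement inversion. Let $\mathbf U_0\in\mathbb R^{Nd\times d}$ be an orthonormal basis for $\ker(\LL)=\mathrm{range}(\mathbf 1_N\otimes\mathbf I_d)$ and $\mathbf U_1\in\mathbb R^{Nd\times(N-1)d}$ an orthonormal basis for its orthogonal complement; then $\mathbf U_1^\top\LL\mathbf U_1=\mathbf \Lambda_1$ has eigenvalues in $[\rho_2,\rho_N]$ while $\LL\mathbf U_0=\mathbf 0$. Writing $\mathbf A_{ij}\triangleq\mathbf U_i^\top\mathbf A\mathbf U_j$ and decomposing $\mathbf y\triangleq\mathbf B^{-1}\mathbf v$ as $\mathbf y=\mathbf U_0\mathbf y_0+\mathbf U_1\mathbf y_1$ (with $\mathbf v=\mathbf U_1\mathbf v_1$, since $\mathbf v\bot\ker(\LL)$), the equation $\mathbf B\mathbf y=\mathbf v$ splits into $\mathbf A_{00}\mathbf y_0+\mathbf A_{01}\mathbf y_1=\mathbf 0$ and $\mathbf A_{10}\mathbf y_0+(\mathbf A_{11}+U\mathbf \Lambda_1)\mathbf y_1=\mathbf v_1$. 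Eliminating $\mathbf y_0=-\mathbf A_{00}^{-1}\mathbf A_{01}\mathbf y_1$ yields the reduced system $(\mathbf S_A+U\mathbf \Lambda_1)\mathbf y_1=\mathbf v_1$, where $\mathbf S_A\triangleq\mathbf A_{11}-\mathbf A_{10}\mathbf A_{00}^{-1}\mathbf A_{01}$ is the Schur complement of $\mathbf A_{00}$ in $\mathbf A$.

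I would next bound each block using $\mathbf A\in\mathcal S_{\mu,L}^{Nd}$. For $\mathbf y_1$, the identity $\mathbf S_A^{-1}=(\mathbf A^{-1})_{11}$ together with $\mathbf A^{-1}\preceq\mu^{-1}\mathbf I$ yields $\mathbf S_A\succeq\mu\mathbf I$, hence $\mathbf S_A+U\mathbf \Lambda_1\succeq U\rho_2\mathbf I$ and $\Vert\mathbf y_1\Vert\leq\Vert\mathbf v\Vert/(U\rho_2)$. For $\mathbf y_0$, the crucial step is to write $\mathbf A=\mu\mathbf I+\tilde{\mathbf A}$ with $\tilde{\mathbf A}\in\mathcal S_{0,L-\mu}^{Nd}$; since $\mu\mathbf I$ contributes nothing off-diagonal, $\mathbf A_{01}=\tilde{\mathbf A}_{01}$, and therefore $\Vert\mathbf A_{01}\Vert\leq L-\mu$. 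Combined with $\Vert\mathbf A_{00}^{-1}\Vert\leq 1/\mu$, this gives $\Vert\mathbf y_0\Vert\leq[(L-\mu)/\mu]\Vert\mathbf y_1\Vert$.

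Combining via orthogonality, $\Vert\mathbf y\Vert^2=\Vert\mathbf y_0\Vert^2+\Vert\mathbf y_1\Vert^2\leq[\mu^2+(L-\mu)^2]\Vert\mathbf y_1\Vert^2/\mu^2$. The algebraic inequality $\mu^2+(L-\mu)^2=L^2-2\mu(L-\mu)\leq L^2$ (valid for $L\geq\mu$) then collapses the prefactor to $L^2/\mu^2$, and substituting the bound on $\Vert\mathbf y_1\Vert$ produces the claimed estimate $\Vert\mathbf y\Vert\leq L/(\mu U\rho_2)\Vert\mathbf v\Vert$.

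The main obstacle I foresee is identifying the ``shift by $\mu\mathbf I$'' trick in the bound on $\mathbf A_{01}$. The naive estimate $\Vert\mathbf A_{01}\Vert\leq\Vert\mathbf A\Vert\leq L$ leads to $\Vert\mathbf y_0\Vert\leq(L/\mu)\Vert\mathbf y_1\Vert$ and thus $\Vert\mathbf y\Vert\leq\sqrt{L^2+\mu^2}/\mu\cdot\Vert\mathbf v\Vert/(U\rho_2)$, which exceeds $L/(\mu U\rho_2)\Vert\mathbf v\Vert$ in some regimes (e.g., at $L=\mu$ it yields $\sqrt{2}/(U\rho_2)\Vert\mathbf v\Vert$ versus the desired $1/(U\rho_2)\Vert\mathbf v\Vert$). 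Decomposing $\mathbf A$ as $\mu\mathbf I$ plus a positive semidefinite perturbation in $\mathcal S_{0,L-\mu}^{Nd}$ is precisely what eliminates this slack and recovers the tight constant in the lemma.
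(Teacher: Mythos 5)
Your proof is correct and recovers the exact constant $\tfrac{L}{\mu U\rho_2}$. It shares the paper's overall framework — a block decomposition along $\ker(\LL)=\mathrm{range}(\mathbf 1_N\otimes\mathbf I_d)$ and its orthogonal complement, plus a Schur-complement argument giving $\mathbf S+U\boldsymbol\Lambda_1\succeq U\rho_2\mathbf I$ — but extracts the constant by a genuinely different mechanism. The paper first writes $\Vert\mathbf B^{-1}\mathbf v\Vert=\Vert\mathbf A^{-1}(\mathbf A\mathbf B^{-1}\mathbf v)\Vert\leq\mu^{-1}\Vert\mathbf A\mathbf B^{-1}\mathbf v\Vert$ and then observes that, in the rotated basis, $\mathbf A\mathbf B^{-1}\mathbf v$ has \emph{identically zero} kernel component and equals $\mathbf S_0\mathbf S_U^{-1}\mathbf x$ on the complement; the factor $L$ then comes from $\mathbf S_0\preceq L\mathbf I$ (via $\mathbf S_0^{-1}$ being a diagonal block of $\tilde{\mathbf A}^{-1}\succeq L^{-1}\mathbf I$), so no component-wise recombination is needed. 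You instead bound $\Vert\mathbf y_1\Vert$ and $\Vert\mathbf y_0\Vert$ separately and must then be clever twice: the shift $\mathbf A=\mu\mathbf I+\tilde{\mathbf A}$ to get $\Vert\mathbf A_{01}\Vert\leq L-\mu$ rather than $L$ (your diagnosis that the naive bound loses the constant is exactly right), and the inequality $\mu^2+(L-\mu)^2\leq L^2$ to collapse the Pythagorean sum. Both routes are valid; the paper's insertion of $\mathbf A^{-1}\mathbf A$ is the slicker device that makes the kernel component vanish exactly and avoids your two extra estimates, whereas your version is arguably more transparent about where each component of $\mathbf B^{-1}\mathbf v$ lives and how large it is. One small point worth making explicit in a write-up: $\ker(\LL)=\mathrm{range}(\mathbf 1_N\otimes\mathbf I_d)$ relies on the connectivity assumption $\rho_2>0$ (Assumption \ref{algconnofL}), which both proofs use implicitly.
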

\begin{proof}
Using $\mathbf A\succeq\mu\mathbf I$, we start with bounding
\begin{align}
\Vert\mathbf B^{-1}\mathbf v\Vert=
\Vert\mathbf A^{-1}\mathbf A\mathbf B^{-1}\mathbf v\Vert\leq
\frac{1}{\mu}\Vert\mathbf A\mathbf B^{-1}\mathbf v\Vert.
\label{bounsdg}
\end{align}
Let $\boldsymbol{L}{=}\mathbf V\boldsymbol{\Lambda}\mathbf V^\top$ be the eigenvalue decomposition of the Laplacian matrix, where
$\mathbf V$ is unitary and $\boldsymbol{\Lambda}$ is diagonal. 
Since $\boldsymbol{L}\cdot\mathbf 1=0$,
 the first column of  $\mathbf V$ equals $\mathbf 1_N/\sqrt{N}$ and $[\boldsymbol{\Lambda}]_{1,1}{=}0$; furthermore, 
 $[\boldsymbol{\Lambda}]_{i,i}{\geq}\rho_2>0,\forall i>1$ (Assumption \ref{algconnofL}).
Then, $\LL{=}\boldsymbol{L}\otimes\mathbf I_d{=}\hat{\mathbf V}\hat{\boldsymbol{\Lambda}}\hat{\mathbf V}^\top$,
where
$\hat{\mathbf V}{=}{\mathbf V}{\otimes}\mathbf I_d$ and $\hat{\boldsymbol{\Lambda}}{=}{\boldsymbol{\Lambda}}{\otimes}\mathbf I_d$.
Let 
$$\tilde{\mathbf A}=\hat{\mathbf V}^\top\mathbf A\hat{\mathbf V}
=
\begin{bmatrix}
\tilde{\mathbf A}_{1,1} & \tilde{\mathbf A}_{1,2}\\
\tilde{\mathbf A}_{1,2}^\top & \tilde{\mathbf A}_{2,2}
\end{bmatrix},\ 
\hat{\boldsymbol{\Lambda}}
=
\begin{bmatrix}
\mathbf 0 & \mathbf 0\\
\mathbf 0 & \hat{\boldsymbol{\Lambda}}_{2,2}
\end{bmatrix},$$
where $\tilde{\mathbf A}_{1,1}\in\mathbb R^{d\times d}$, $\tilde{\mathbf A}_{1,2}\in\mathbb R^{d\times (N-1)d}$,
$\tilde{\mathbf A}_{2,2}$
and $\hat{\boldsymbol{\Lambda}}_{2,2}\in\mathbb R^{(N-1)d\times (N-1)d}$ with $\hat{\boldsymbol{\Lambda}}_{2,2}\succeq \rho_2\mathbf I$.
Since
$\mathbf v\bot(\mathbf 1_N\otimes\mathbf I_d)$, we also define
$\hat{\mathbf V}^\top\mathbf v=[\mathbf 0^\top,\mathbf x^\top]^\top$ for suitable $\mathbf x$. Therefore,
$$
\Vert\mathbf A\mathbf B^{-1}\mathbf v\Vert
=\Big\Vert\tilde{\mathbf A}(\tilde{\mathbf A}{+}U\hat{\boldsymbol{\Lambda}})^{-1}\begin{bmatrix}
\mathbf 0 \\
\mathbf x
\end{bmatrix}\Big\Vert
.
$$
Using the inversion properties of $2\times 2$ block matrices and simplifying, we continue as $\Vert\mathbf A\mathbf B^{-1}\mathbf v\Vert$
$$
=
\Big\Vert
\begin{bmatrix}
\tilde{\mathbf A}_{1,1} & \tilde{\mathbf A}_{1,2}\\
\tilde{\mathbf A}_{1,2}^\top & \tilde{\mathbf A}_{2,2}
\end{bmatrix}
\begin{bmatrix}
\bullet & -\tilde{\mathbf A}_{1,1}^{-1}\tilde{\mathbf A}_{1,2}\mathbf S_U^{-1}\\
\bullet & \mathbf S_U^{-1}
\end{bmatrix}
\begin{bmatrix}
\mathbf 0 \\
\mathbf x
\end{bmatrix}
\Big\Vert
=\Vert\mathbf S_0\mathbf S_U^{-1}\mathbf x\Vert,
$$
where $\mathbf S_z{\triangleq}
\tilde{\mathbf A}_{2,2}{-}\tilde{\mathbf A}_{1,2}^\top\tilde{\mathbf A}_{1,1}^{-1}\tilde{\mathbf A}_{1,2}{+}z\hat{\boldsymbol{\Lambda}}
$ is the Schur complement of block $\tilde{\mathbf A}_{1,1}$ of $\tilde{\mathbf A}{+}z\hat{\boldsymbol{\Lambda}}$,
and the block elements $\bullet$ are irrelevant, being then multiplied by $\mathbf 0$.
Using the facts that $\mathbf 0{\prec}\mathbf S_0{\preceq} L\cdot\mathbf I$ ($\mathbf S_0^{-1}$ is the lower-diagonal block
of  $\tilde{\mathbf A}^{-1}$, and $\tilde{\mathbf A}^{-1}{\succeq}\frac{1}{L}\mathbf I$)
and $\mathbf S_U{=}\mathbf S_0{+}U\hat{\boldsymbol{\Lambda}}{\succeq}U\rho_2\mathbf I$, yield
$$
\Vert\mathbf A\mathbf B^{-1}\mathbf v\Vert=
\Vert\mathbf S_0\mathbf S_U^{-1}\mathbf x\Vert
\leq
L\Vert\mathbf S_U^{-1}\mathbf x\Vert
\leq
\frac{L}{U\rho_2}\Vert\mathbf x\Vert=\frac{L}{U\rho_2}\Vert\mathbf v\Vert.
$$
The final result is obtained by using this bound in \eqref{bounsdg}.
\end{proof}
\begin{lemma}\label{L3}
$\Vert\LL\mathbf W_k^*\Vert\leq\frac{\eta_k}{\gamma_k}\sqrt{N}\nabla^*\Big(1+\frac{L^2}{\mu\rho_2}
\frac{\eta_{k}}{\gamma_{k}}\Big),
\forall k\geq\bar{\kappa}.
$
\end{lemma}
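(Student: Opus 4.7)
The plan is to use the optimality condition for $\mathbf W_k^*$ (derived in Appendix B.II as \eqref{gradeq0}) to convert $\LL\mathbf W_k^*$ into a gradient term, and then bound that gradient via the smoothness of $f$ together with the distance bound \eqref{L4} already established.

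The first step is to rewrite the optimality condition $\nabla G_k(\mathbf W_k^*)=\mathbf 0$, valid for $k\geq\bar\kappa$, as
\begin{align*}
\nabla f(\mathbf W_k^*)+\frac{\gamma_k}{\eta_k}\LL\mathbf W_k^* = \mathbf 0,
\end{align*}
which immediately gives $\LL\mathbf W_k^* = -\frac{\eta_k}{\gamma_k}\nabla f(\mathbf W_k^*)$. Taking norms, the whole task reduces to bounding $\Vert\nabla f(\mathbf W_k^*)\Vert$.

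The second step is to apply property {\bf P1} (the mean-Hessian identity \eqref{multiv}) with $\mathbf x=\mathbf W^*$ and $\mathbf y=\mathbf W_k^*$. This yields $\nabla f(\mathbf W_k^*)=\nabla f(\mathbf W^*)+\mathbf A(\mathbf W_k^*-\mathbf W^*)$ for some $\mathbf A\in\mathcal S_{\mu,L}^{Nd}$, whence by the triangle inequality and $\Vert\mathbf A\Vert\leq L$,
\begin{align*}
\Vert\nabla f(\mathbf W_k^*)\Vert \leq \Vert\nabla f(\mathbf W^*)\Vert + L\,\Vert\mathbf W_k^*-\mathbf W^*\Vert.
\end{align*}
The first term is bounded by $\sqrt{N}\nabla^*$ using Definition~\ref{gradbo} and the fact that $\Vert\nabla f(\mathbf W^*)\Vert=\sqrt{\sum_i\Vert\nabla f_i(\mathbf w^*)\Vert^2}$. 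For the second term I invoke \eqref{L4} (already proved in B.II), which gives $\Vert\mathbf W_k^*-\mathbf W^*\Vert\leq \frac{\sqrt N\nabla^* L}{\mu\rho_2}\frac{\eta_k}{\gamma_k}$. Assembling,
\begin{align*}
\Vert\nabla f(\mathbf W_k^*)\Vert \leq \sqrt{N}\nabla^*\Big(1+\frac{L^2}{\mu\rho_2}\frac{\eta_k}{\gamma_k}\Big),
\end{align*}
and multiplying by $\eta_k/\gamma_k$ gives the claimed bound.

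The proof is essentially a two-line chain, so there is no real obstacle; the only subtlety is recognizing that the same ingredients used to prove \eqref{L4} (the optimality condition together with the mean-Hessian representation) also yield a direct norm bound on $\nabla f(\mathbf W_k^*)$ without needing to reinvoke Lemma~\ref{L5}. In fact, Lemma~\ref{L5} is unnecessary here because we are bounding $\LL\mathbf W_k^*$ rather than inverting $\mathbf B$, so the $\rho_2$-dependence enters only through the already-established bound \eqref{L4}.
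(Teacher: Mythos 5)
Your proposal is correct and follows essentially the same route as the paper: both invoke the optimality condition $\nabla G_k(\mathbf W_k^*)=\mathbf 0$ to write $\Vert\LL\mathbf W_k^*\Vert=\frac{\eta_k}{\gamma_k}\Vert\nabla f(\mathbf W_k^*)\Vert$, then bound $\Vert\nabla f(\mathbf W_k^*)\Vert$ by $\sqrt{N}\nabla^*+L\Vert\mathbf W_k^*-\mathbf W^*\Vert$ and finish with \eqref{L4}. Your use of the mean-Hessian identity {\bf P1} in place of the paper's direct appeal to $L$-smoothness is an immaterial difference, and your remark that Lemma~\ref{L5} is not needed here matches the paper's proof.
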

\begin{proof}
From the proof of \eqref{L4} in Appendix B\ref{proofofL4}, 
$\nabla G_k(\mathbf W_k^*)=\nabla f(\mathbf W_k^*)+\frac{\gamma_k}{\eta_k}\LL\mathbf W_k^*=\mathbf 0$ for $k\geq\bar{\kappa}$
(see \eqref{gradeq0}).
It follows that 
$$\Vert\LL\mathbf W_k^*\Vert=\frac{\eta_k}{\gamma_k}\Vert\nabla f(\mathbf W_k^*)\Vert.$$
Moreover, using the triangle inequality and smoothness of $f$,
$$
\Vert\nabla f(\mathbf W_k^*)\Vert
\leq \Vert\nabla f(\mathbf W_k^*)-\nabla f(\mathbf W^*)\Vert
+\Vert\nabla f(\mathbf W^*)\Vert
$$$$
\leq
L\Vert\mathbf W_k^*-\mathbf W^*\Vert+\Vert\nabla f(\mathbf W^*)\Vert.
$$
The final result is obtained by bounding 
 $\Vert\mathbf W_k^*-\mathbf W^*\Vert$ as in  \eqref{L4},
and using $\Vert\nabla f(\mathbf W^*)\Vert\leq \sqrt{N}\nabla^*$.
\end{proof}
\vspace{-5mm}
\section*{Appendix C: Proof of Theorem \ref{T1}}
\label{proofofT1} 
\begin{proof}
First, note that the choice of stepsizes guarantees the existence of $\bar{\kappa}\geq 0$ (finite)
such that {\bf C1-C3} are satisfied for  $k\geq\bar{\kappa}$. Consider $k\geq\bar{\kappa}$.
In the proof, we will often need
$$
S_k^n(\upsilon)\triangleq \sum_{t=\bar{\kappa}}^{k-1}P_{tk}^n(1+\delta t)^{-\upsilon},\ \text{for }1\leq\upsilon\leq \frac{5}{4}n\text{ and }n\geq 1.
$$
To bound $S_k^n(\upsilon)$, we first bound $P_{tk}$ for $t\geq\bar{\kappa}-1 $ as
\begin{align}
&P_{tk}=\prod_{j=t+1}^{k-1}(1-\mu\eta_j)\leq
e^{-\mu\eta_0 \sum_{j=t+1}^{k-1}(1+\delta j)^{-1}}
\nonumber\\
\label{ptkbound}
&{\leq}
e^{-\mu\eta_0 \int_{t+1}^k(1+\delta x)^{-1}\mathrm dx}
{=}
(1{+}\delta k)^{-\frac{\mu\eta_0}{\delta}}(1{+}\delta(t{+}1))^{\frac{\mu\eta_0}{\delta}}.
\end{align}
For $t\geq\bar{\kappa}$,
we further bound $(1+\delta(t+1))^{\frac{\mu\eta_0}{\delta}}$
$$
=
(1{+}\delta t)^{\frac{\mu\eta_0}{\delta}}\Big(1{+}\frac{\delta}{1{+}\delta t}\Big)^{\frac{\mu\eta_0}{\delta}}
\leq
(1{+}\delta t)^{\frac{\mu\eta_0}{\delta}}\Big(1{+}\frac{\delta}{1{+}\delta\bar{\kappa}}\Big)^{\frac{\mu\eta_0}{\delta}};
$$
using $1+x\leq e^x$, we further bound the last term  above as
$$
\Big(1{+}\frac{\delta}{1{+}\delta\bar{\kappa}}\Big)^{\frac{\mu\eta_0}{\delta}}
\leq
 e^{\mu\eta_0/(1{+}\delta\bar{\kappa})}=e^{\mu\eta_{\bar{\kappa}}}\leq e,
$$
where in the last step we used $\eta_{\bar{\kappa}}\leq2/(\mu+L)\leq 1/\mu$ from {\bf C1} of Theorem \ref{Tmain}.
It follows that
$P_{tk}\leq e(1{+}\delta k)^{-\frac{\mu\eta_0}{\delta}}(1{+}\delta t)^{\frac{\mu\eta_0}{\delta}}
$ for $t\geq\bar{\kappa}$. Using this bound in $S_k^n(\upsilon)$, we obtain
$$
S_k^n(\upsilon)\leq e^n(1{+}\delta k)^{-n\frac{\mu\eta_0}{\delta}}
\sum_{t=\bar{\kappa}}^{k-1}(1{+}\delta t)^{n\frac{\mu\eta_0}{\delta}-\upsilon}.
$$
Since $n\frac{\mu\eta_0}{\delta}\geq5/4n\geq\upsilon$,
we continue as
\begin{align}
\nonumber
&S_k^n(\upsilon)
\leq
e^n(1+\delta k)^{-n\frac{\mu\eta_0}{\delta}}
\int_{\bar{\kappa}}^k
(1+\delta x)^{n\frac{\mu\eta_0}{\delta}-\upsilon}\mathrm dx
\\&
\!\!\!{\leq}
\frac{e^n}{n\mu\eta_0-\delta(\upsilon-1)}(1+\delta k)^{-\upsilon+1}
{\leq}
\frac{5 e^n}{4\mu\eta_0}(1+\delta k)^{-\upsilon+1},\!\label{skupbound}
\end{align}
where the last step used $\delta \leq \frac{4}{5}\mu\eta_0$ and $\upsilon\leq 5/4n$.

With $\gamma_k,\eta_k$ as in Theorem \ref{T1}, we specialize \eqref{L1} as
\begin{align*}
&\frac{1}{{N}}\Vert{\mathbf W}_{k}-\overline{\mathbf W}_{k}\Vert_{\mathbb E}^2
\leq
\gamma_0^2\Sigma^{(1)} S_k^2(3/2)
+\eta_0^2 \Sigma^{(2)} S_k^2(2).
\end{align*}
The bound in  \eqref{thm1_1} follows using 
  \eqref{skupbound} and $\sqrt{a+b}\leq\sqrt{a}+\sqrt{b}$ for $a,b\geq 0$.
Similarly,  we specialize  \eqref{L2} as
$$
\frac{1}{\sqrt{N}}\Vert\overline{\mathbf W}_{k}-{\mathbf W}_{k}^*\Vert_{\mathbb E}{\leq}
\mathrm{dm}(\mathcal W)\cdot P_{(\bar{\kappa}-1)k}
+\frac{\nabla^* L}{\mu\rho_2}
\frac{\eta_0}{\gamma_0}
\sum_{t=\bar{\kappa}}^{k-1}P_{tk}
$$$$
\times\Big(1+\frac{L^2}{\mu\rho_2}\frac{\eta_0}{\gamma_0}(1+\delta t)^{-1/4}\Big)
\Big(
(1+\delta t)^{-1/4}-(1+\delta(t+1))^{-1/4}
\Big).
$$
Since $x^{-1/4}$ is convex for $x>0$, we  further bound
$(1{+}\delta(t{+}1))^{-1/4}\geq (1{+}\delta t)^{-1/4}
-\delta/4(1{+}\delta t)^{-5/4},$
hence
$$\frac{1}{\sqrt{N}}\Vert\overline{\mathbf W}_{k}-\mathbf W_{k}^*\Vert_{\mathbb E}
\leq
\mathrm{dm}(\mathcal W)\cdot P_{(\bar{\kappa}-1)k}
$$$$
+\frac{\nabla^* L}{4\mu\rho_2}
\frac{\eta_0 \delta}{\gamma_0}
\Big(
 S_k^1(5/4)
+\frac{L^2}{\mu\rho_2}
\frac{\eta_0 }{\gamma_0}
 S_k^1(3/2)\Big).
$$
The bound in \eqref{thm1_2} readily follows using \eqref{ptkbound}, \eqref{skupbound} and $\delta{\leq}\frac{4}{5}\mu\eta_0$. 
\eqref{thm1_3} follows by direct substitution into \eqref{L4}.
\end{proof}
\vspace{-5mm}
   \bibliographystyle{IEEEtran}
\bibliography{IEEEabrv,biblio}

\begin{IEEEbiography}[{\includegraphics[width=1in,height=1.25in,clip,keepaspectratio]{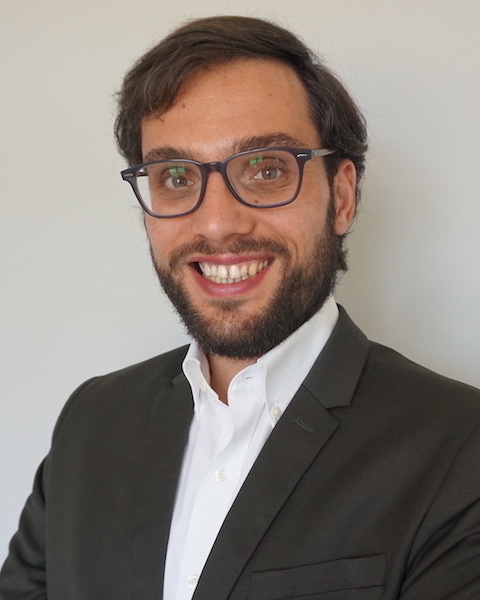}}]{Nicol\`{o} Michelusi}
(Senior Member, IEEE) received the B.Sc. (with honors), M.Sc. (with honors), and Ph.D. degrees from the University of Padova, Italy, in 2006, 2009, and 2013, respectively, and the M.Sc. degree in telecommunications engineering from the Technical University of Denmark, Denmark, in 2009, as part of the T.I.M.E. double degree program. From 2013 to 2015, he was a Postdoctoral Research Fellow with the Ming-Hsieh Department of Electrical Engineering, University of Southern California, Los Angeles, CA, USA, and from 2016 to 2020, he was an Assistant Professor with the School of Electrical and Computer Engineering, Purdue University, West Lafayette, IN, USA. He is currently an Associate Professor with the School of Electrical, Computer and Energy Engineering, Arizona State University, Tempe, AZ, USA. His research interests include 5G wireless networks, millimeter-wave communications, stochastic optimization, decentralized and federated learning over wireless systems. He served as Associate Editor for the IEEE TRANSACTIONS ON WIRELESS COMMUNICATIONS from 2016 to 2021, and currently serves as Editor for the IEEE TRANSACTIONS ON COMMUNICATIONS. He was the Co-Chair for the Distributed Machine Learning and Fog Network workshop at  IEEE INFOCOM 2021, 2023 and 2024, the Wireless Communications Symposium at  IEEE GLOBECOM 2020, the IoT, M2M, Sensor Networks, and Ad-Hoc Networking track at  IEEE VTC 2020, and the Cognitive Computing and Networking symposium at ICNC 2018. He was the Technical Area Chair for the Communication Systems track at Asilomar 2023. He received the NSF CAREER award in 2021, the IEEE Communication Theory Technical Committee (CTTC) Early Achievement
Award in 2022, and the IEEE Communications Society William R. Bennett Prize in 2024.  
\end{IEEEbiography}

\newpage

{
    \twocolumn[
    \centerline{\parbox[c][2cm][c]{10cm}{%
            \fontsize{24}{24}
            \selectfont
            {Supplemental Document}}}]
}

In this supplemental document, we provide the proof of Lemma \ref{L0} of the main manuscript. We also provide
additional numerical evaluations showcasing the parameters' selection of \proposed, as well as comparisons with state-of-the-art schemes.

\section*{Proof of Lemma \ref{L0}}
In this section, we provide the proof of Lemma \ref{L0}, restated below for ease of reference.

\noindent{\bf Lemma 2.}
Consider channels satisfying 
Assumption \ref{ch2}. Assume that the resource units are evenly allocated among $m=1,\dots, M$, i.e., $|R_m{-}Q/M|{<}1,\forall m$ (e.g., Fig. \ref{fig:ofdmframe}). 
Let:
\begin{align}
\label{varthetaproof}
&
\vartheta\triangleq\max_{i,j\neq i}\frac{1}{\Lambda_{ij}}\Big(\frac{1}{Q}\sum_{q=1}^Q\mathbb E[(|[\mathbf h_{ij}]_{q}|^2-\Lambda_{ij})^2]\Big)^{1/2},\\&
\label{varsigmaproof}
\varpi\triangleq\max_{i,j\neq i}
\frac{1}{\Lambda_{ij}}\Big(\frac{1}{M}\sum_{m=1}^M\mathbb E[(\hat\lambda_{ij}^{(m)}-\Lambda_{ij})^2]\Big)^{1/2},
\\
\label{Lambdaproof}
&\Lambda^*\triangleq\max_{i}\sum_{j\neq i}\Lambda_{ij},
\end{align}
where $\hat\lambda_{ij}^{(m)}=\frac{1}{{R_{m}}}\sum_{q\in\mathcal R_{m}}|[{\mathbf h}_{ij}]_{q}|^2$
is the sample average channel gain across the resource units in the set $\mathcal R_{m}$.
Then, $\frac{1}{N}\sum_{i=1}^N\mathrm{var}(\tilde{\mathbf d}_{i}){\leq}\Sigma^{(1)}
\triangleq\max\limits_{\mathbf z,\mathbf z'\in\mathcal Z}\Vert\mathbf z-\mathbf z'\Vert^2$
\\\centerline{$
\times \frac{1}{1-p_{\mathrm{tx}}}\Big[
 \frac{\sqrt{M}}{\sqrt{Q}}\sqrt{2(1+2\vartheta^2)}\Lambda^*
+\frac{\sqrt{1+\varpi^2}}{\sqrt{p_{\mathrm{tx}}}}\Lambda^*
+ \frac{\sqrt{M}}{\sqrt{Q}}\frac{N_0}{Ep_{\mathrm{tx}}}
\Big]^2.
$}
The value of the transmission probability $p_{\mathrm{tx}}$ minimizing this bound is the unique solution in $(0,1)$ of
\\\centerline{$
\sqrt{2(1{+}2\vartheta^2)}p_{\mathrm{tx}}^{3/2}{+}
\frac{\sqrt{Q}}{\sqrt{M}}\sqrt{1{+}\varpi^2}(2p_{\mathrm{tx}}{-}1){+}
\frac{N_0}{\Lambda^* E}\frac{3p_{\mathrm{tx}}{-}2}{\sqrt{p_{\mathrm{tx}}}}
=0.
$}
\begin{proof}
We start with
\begin{align}
\label{up1}
\frac{1}{N}\sum_{i=1}^N\mathrm{var}(\tilde{\mathbf d}_{i})
\leq
\max_i \mathrm{var}(\tilde{\mathbf d}_{i})=\max_i (\mathrm{sdv}(\tilde{\mathbf d}_{i}))^2.
\end{align}
Hence, we focus on bounding $\mathrm{sdv}(\tilde{\mathbf d}_{i})$ for a generic $i$.
Using \eqref{dik},
we can express 
$$
\tilde{\mathbf d}_{i}-\mathbb E[\tilde{\mathbf d}_{i}]=
\sum_{m=1}^M
({r_{im}}-\mathbb E[{r_{im}}])(\mathbf z_m-{\mathbf w_{i}}).
$$
Next, we use Minkowski inequality \cite[Lemma 14.10]{florescu2013handbook}, $\sqrt{\mathbb E[\Vert\sum_m\mathbf a_m\Vert^2]}\leq \sum_m\sqrt{\mathbb E[\Vert\mathbf a_m\Vert^2]}$. It
yields
\begin{align}
\nonumber
\mathrm{sdv}(\tilde{\mathbf d}_{i})&=
\sqrt{\mathbb E[\Vert\tilde{\mathbf d}_{i}-\mathbb E[\tilde{\mathbf d}_{i}]\Vert^2]}\\&\leq
\sum_{m=1}^M
\mathrm{sdv}({r_{im}})
\Vert\mathbf z_m-{\mathbf w_{i}}\Vert.
\label{up3}
\end{align}
We bound  $\Vert\mathbf z_m-{\mathbf w_{i}}\Vert$ by using the fact that  ${\mathbf w_{i}}=\sum_{m'=1}^M[\mathbf p_{i}]_{m'}\mathbf z_{m'}$
and the triangle inequality, yielding
$$
\Vert\mathbf z_m-{\mathbf w_{i}}\Vert\leq 
\sum_{m'=1}^M[\mathbf p_{i}]_{m'}\Vert\mathbf z_m-\mathbf z_{m'}\Vert
$$$$
\leq 
\max_{m',m''}\Vert\mathbf z_{m''}-\mathbf z_{m'}\Vert
=\max\limits_{\mathbf z,\mathbf z'\in\mathcal Z}\Vert\mathbf z-\mathbf z'\Vert,
$$
so that \eqref{up3} is further upper bounded as
\begin{align}
\mathrm{sdv}(\tilde{\mathbf d}_{i})\leq
\max\limits_{\mathbf z,\mathbf z'\in\mathcal Z}\Vert\mathbf z-\mathbf z'\Vert\sum_{m=1}^M
\mathrm{sdv}({r_{im}}).
\label{up2}
\end{align}
We now focus on bounding $\mathrm{sdv}({r_{im}})=\sqrt{\mathrm{var}({r_{im}})}$.
We use $\mathrm{var}(X)=\mathbb E_{A}[\mathrm{var}(X|A)]+\mathrm{var}(\mathbb E[X|A])$
to express
\begin{align}
\label{varsum}
&\mathrm{var}({r_{im}})
=
\mathbb E\Big[\mathrm{var}\Big({r_{im}}\Big|\varsigma,\chi_{-i},\chi_i,\mathbf h\Big)\Big]
\\&
+\mathrm{var}\Big(\mathbb E[{r_{im}}|\varsigma,\chi_{-i},\chi_i,\mathbf h]\Big)
.
\end{align}
The first variance term captures the impact of the AWGN noise and random phase injected at the transmitters;
the second term captures the impact of the randomness
 in the
 channels ($\mathbf h$), circular shift ($\varsigma$), random transmission of node $i$ ($\chi_i$) and
of all other nodes ($\chi_{-i}$).
Next, we bound each variance term individually.
\\
\underline{Bounding $\mathbb E[\mathrm{var}({r_{im}}|\varsigma,\chi_{-i},\chi_i,\mathbf h)]$}:
Let $m'\triangleq m\oplus\varsigma$ and 
$$A_{q}\triangleq\sum_{j\neq i}\chi_j[\mathbf h_{ij}]_{q}[\mathbf x_{j}]_{q}
$$$$
  =\sqrt{\frac{EQ}{R_{m'}}}\sum_{j\neq i}\chi_je^{\jj[\bstheta_j]_q}[{\mathbf h}_{ij}]_{q}
 \sqrt{[\mathbf p_{j}]_m},\ \forall q\in\mathcal R_{m'}.
  $$
  Then, we can express $[\mathbf y_{i}]_{q}=A_{q}+[\mathbf n_{i}]_{q}$
  and 
$$r_{im}
=
\frac{1-\chi_i}{EQp_{\mathrm{tx}}(1-p_{\mathrm{tx}})}\sum_{q\in\mathcal R_{m'}}(|A_{q}+[\mathbf n_{i}]_{q}|^2-N_0).$$
Note that, when conditioned on $\varsigma,\chi_{-i},\chi_i,\mathbf h$ (hence on $m'$), the terms
$|A_{q}+[\mathbf n_{i}]_{q}|^2$ are independent across $q$,
due to independence of $[\mathbf n_{i}]_{q}$, and  of the phases $[\bstheta_{j}]_{q}$ across $q\in\mathcal R_{m'}$ and $j$.
Therefore, 
\begin{align}
\label{varrim}
&\mathrm{var}({r_{im}}|\varsigma,\chi_{-i},\chi_i,\mathbf h)
=\frac{1-\chi_i}{E^2Q^2p_{\mathrm{tx}}^2(1-p_{\mathrm{tx}})^2}
\\&\qquad\times\sum_{q\in\mathcal R_{m'}}\mathrm{var}(|A_{q}+[\mathbf n_{i}]_{q}|^2|\varsigma,\chi_{-i},\mathbf h).
\nonumber
\end{align}
Using the independence of the noise and of the phases injected at the transmitters (across transmitters $j$),
the fact that
 $\mathbb E[e^{\jj[\bstheta_{j_1}-\bstheta_{j_2}+\bstheta_{j_3}-\bstheta_{j_4}]_{q}}]=
 \mathbbm{1}[j_3=j_4,j_1=j_2]+ \mathbbm{1}[j_3=j_2,j_4=j_1,j_1\neq j_2]
 $),
 and $\mathbb E[|[\mathbf n_{i}]_{q}|^4]=2N_0^2$, we find that
 $$
 \mathrm{var}(|A_{q}+[\mathbf n_{i}]_{q}|^2|\varsigma,\chi_{-i},\mathbf h)
=
\frac{EQ}{R_{m'}}\sum_{j\neq i}\chi_j|[\mathbf h_{ij}]_{q}|^2[\mathbf p_{j}]_m
 $$$$\times
\Big(\frac{EQ}{R_{m'}}\sum_{j'\neq i,j}\chi_{j'}|[\mathbf h_{ij'}]_{q}|^2[\mathbf p_{j'}]_m+2N_0\Big)
+N_0^2.
 $$
 Next,
we compute the expectation of this variance term with respect to the random transmissions of nodes $j\neq i$.
Since these are independent across $j\neq j'$ and occur with probability $p_{\mathrm{tx}}$, we obtain
$$
\mathbb E_{\chi_{-i}}[ \mathrm{var}(|A_{q}+[\mathbf n_{i}]_{q}|^2|\varsigma,\chi_{-i},\mathbf h)]
$$$$
{=}
\Big(\!\frac{EQp_{\mathrm{tx}}}{R_{m'}}\!\sum_{j\neq i}\!|[\mathbf h_{ij}]_{q}|^2[\mathbf p_{j}]_m+N_0\!\Big)^2
$$$$
-\frac{(EQp_{\mathrm{tx}})^2}{(R_{m'})^2}\!\!\sum_{j\neq i}|[\mathbf h_{ij}]_{q}|^4[\mathbf p_{j}]_m^2.
 $$
  Substituting in the expression of $\mathrm{var}({r_{im}}|\varsigma,\chi_{-i},\chi_i,\mathbf h)$ in \eqref{varrim},
  discarding the negative term above,
 and further taking the expectation with respect to the transmission indicator $\chi_i\in\{0,1\}$, 
   the circular shift $\varsigma$ (hence $m'=m\oplus\varsigma$ becomes uniform in $\{1,\dots,M\}$), 
   and the channels,
 yields
\begin{align}
\nonumber
&\mathbb E[\mathrm{var}({r_{im}}|\varsigma,\chi_{-i},\chi_i,\mathbf h)]
 \leq\frac{1}{E^2Q^2p_{\mathrm{tx}}^2(1-p_{\mathrm{tx}})}
 \\&
\times\frac{1}{M}\sum_{m',q}\mathbb E_{\mathbf h}
\Big[\Big(\!\frac{EQp_{\mathrm{tx}}}{R_{m'}}\!\sum_{j\neq i}\!|[\mathbf h_{ij}]_{q}|^2[\mathbf p_{j}]_m+N_0\!\Big)^2\Big],
\label{varrim2}
\end{align}
where
we defined the shorthand notation $\sum_{m',q}\equiv\sum_{m'=1}^M\sum_{q\in\mathcal R_{m'}}$.
The term above takes the form
$$
\sum_t\mathbb E[(\sum_{j}\alpha_{jt})^2]
$$
for suitable $\alpha_{jt}$. We can upper bound it as
$$
=
\sum_{j',j}\sum_t\mathbb E[\alpha_{jt}\alpha_{j't}]
\leq\sum_{j',j}\sum_t\sqrt{\mathbb E[\alpha_{jt}^2]}\sqrt{\mathbb E[\alpha_{j't}^2]}
$$$$
\leq
\sum_{j',j}\sqrt{\sum_t\mathbb E[\alpha_{jt}^2]}\sqrt{\sum_t\mathbb E[\alpha_{j't}^2]}
=
\Big(\sum_{j}\sqrt{\sum_t\mathbb E[\alpha_{jt}^2]}\Big)^2,
$$
where in the first step we used Holder's inequality,
in the second step we used \CS.
 Using this bound in \eqref{varrim2} yields
 $$\mathbb E[\mathrm{var}({r_{im}}|\varsigma,\mathcal T_{-i},\tau_i,\mathbf h)]
 $$
\begin{align}
\label{varrim3}
&
\leq\frac{1}{1-p_{\mathrm{tx}}}
\Big(
\sum_{j\neq i}
\sqrt{\sum_{m',q}\frac{1}{MR_{m'}^2}\mathbb E[|[\mathbf h_{ij}]_{q}|^4]}[\mathbf p_{j}]_m
\\&\nonumber
\qquad\qquad+\frac{N_0}{\sqrt{MQ}Ep_{\mathrm{tx}}}
\Big)^2.
\end{align}
We further bound
$$
\sum_{m',q}\frac{1}{MR_{m'}^2}
\mathbb E[|[\mathbf h_{ij}]_{q}|^4]\leq\frac{2M}{Q}\sum_{m',q}\frac{1}{MR_{m'}}
\mathbb E[|[\mathbf h_{ij}]_{q}|^4]
$$$$
=\frac{2M}{Q}\Big(\sum_{m'=1}^M\sum_{q\in\mathcal R_{m'}}\frac{1}{MR_{m'}}\mathbb E[(|[\mathbf h_{ij}]_{q}|^2-\Lambda_{ij})^2]
+\Lambda_{ij}^2\Big)
$$$$
\leq\frac{2M}{Q}\Big(\frac{2}{Q}\sum_{q=1}^Q\mathbb E[(|[\mathbf h_{ij}]_{q}|^2-\Lambda_{ij})^2]
+\Lambda_{ij}^2\Big)
$$$$
\leq
\frac{2M}{Q}(1+2\vartheta^2)\Lambda_{ij}^2,
$$
where in the first and third steps we used
$1/R_{m'}\leq 2M/Q$ (a consequence of the uniform allocation of resource units),
in the second step we used the fact that (Assumption \ref{ch2})
$$
\frac{1}{R_{m'}}\sum_{q\in\mathcal R_{m'}}\mathbb E[|[\mathbf h_{ij}]_{q}|^2]=\Lambda_{ij}^{(m')}
,\ \Lambda_{ij}=\frac{1}{M}\sum_{m'=1}^M\Lambda_{ij}^{(m')},$$
and in the last step we used \eqref{varthetaproof}.
Substituting this bound in \eqref{varrim3} yields
$$\mathbb E[\mathrm{var}({r_{im}}|\varsigma,\chi_{-i},\chi_i,\mathbf h)]$$
$$
 \leq
 \frac{M}{Q(1-p_{\mathrm{tx}})}
\Big(
\sqrt{2(1+2\vartheta^2)}\sum_{j\neq i}\Lambda_{ij}[\mathbf p_{j}]_m
+\frac{N_0}{EMp_{\mathrm{tx}}}
\Big)^2.
$$
 \noindent \underline{Bounding $
\mathrm{var}(\mathbb E[{r_{im}}|\varsigma,\chi_{-i},\chi_i,\mathbf h])
$}: Note that
\begin{align}\label{varrimx}
\mathrm{var}(\mathbb E[{r_{im}}|\varsigma,\chi_{-i},\chi_i,\mathbf h])\leq
\mathbb E[\mathbb E[{r_{im}}|\varsigma,\chi_{-i},\chi_i,\mathbf h]^2].
\end{align}
From \eqref{fdgnhdsf3} and \eqref{yi},
 \begin{align}
&
\!\!\mathbb E[{r_{im}}|\varsigma,\chi_{-i},\chi_i,\mathbf h]
{=}
(1-\chi_i)\frac{\sum_{j\neq i}\chi_j\hat\lambda_{ij}^{(m')}[\mathbf p_{j}]_m}{p_{\mathrm{tx}}(1-p_{\mathrm{tx}})}
\end{align}
where $\hat\lambda_{ij}^{(m')}=\frac{1}{{R_{m'}}}\sum_{q\in\mathcal R_{m'}}|[{\mathbf h}_{ij}]_{q}|^2$ is the sample average gain across the resource units in the set $\mathcal R_{m'}$.
We square and 
take the expectation with respect to $(\varsigma,\chi_{-i},\chi_i,\mathbf h)$, yielding
\begin{align}
&\nonumber
\mathbb E[\mathbb E[{r_{im}}|\varsigma,\chi_{-i},\chi_i,\mathbf h]^2]
=\nonumber
\frac{\mathbb E[(\sum_{j\neq i}\chi_j\hat\lambda_{ij}^{(m')}[\mathbf p_{j}]_m)^2]}{p_{\mathrm{tx}}^2(1-p_{\mathrm{tx}})}
\\&
\leq
\frac{\Big(\sum_{j\neq i}\sqrt{\mathbb E[\chi_j(\hat\lambda_{ij}^{(m')})^2]}[\mathbf p_{j}]_m\Big)^2}{p_{\mathrm{tx}}^2(1-p_{\mathrm{tx}})}
\label{fghsdsdf}
\end{align}
where in the last step we used Holder's inequality
$\mathbb E[AB]=\sqrt{\mathbb E[A^2]}\sqrt{\mathbb E[B^2]}$
with the random variables $A=\chi_j\hat\lambda_{ij}^{(m')}$
and $B=\chi_{j'}\hat\lambda_{ij'}^{(m')}$.
Furthermore,
$$
\mathbb E[\chi_j(\hat\lambda_{ij}^{(m')})^2]
=
\frac{p_{\mathrm{tx}}}{M}\sum_{m'=1}^M\mathbb E[(\hat\lambda_{ij}^{(m')})^2],
$$
where we computed the expectation with respect to the transmission indicator
 and the circular shift $\varsigma$ (hence $m'=m\oplus\varsigma$ uniform in $\{1,\dots,M\}$). We continue as
 $$
 \mathbb E[\chi_j(\hat\lambda_{ij}^{(m')})^2]
=
p_{\mathrm{tx}}\Big(\frac{1}{M}\sum_{m'=1}^M\mathbb E[(\hat\lambda_{ij}^{(m')}-\Lambda_{ij})^2]+\Lambda_{ij}^2\Big)
$$$$
\leq
p_{\mathrm{tx}}(1+\varpi^2)\Lambda_{ij}^2,
$$
where we used the fact that $\frac{1}{M}\mathbb E[\hat\lambda_{ij}^{(m')}]=\Lambda_{ij}^{(m')}$
and $\Lambda_{ij}=\frac{1}{M}\sum_{m'=1}^M\Lambda_{ij}^{(m')}$, followed by \eqref{varsigmaproof}.
Substituting this bound into \eqref{fghsdsdf}
and then in \eqref{varrimx}
 yields
 $$
\mathbb E[\mathrm{var}({r_{im}}|\varsigma,\mathcal T_{-i},\tau_i,\mathbf h)]
\leq
\frac{1+\varpi^2}{p_{\mathrm{tx}}(1-p_{\mathrm{tx}})}\Big(\sum_{j\neq i}\Lambda_{ij}[\mathbf p_{j}]_m\Big)^2.
$$
 \noindent\underline{Final results}:
Combining the results together into \eqref{varsum} yields
$$
\mathrm{var}(r_{im})\leq
\frac{1+\varpi^2}{p_{\mathrm{tx}}(1-p_{\mathrm{tx}})}\Big(\sum_{j\neq i}\Lambda_{ij}[\mathbf p_{j}]_m\Big)^2
$$$$
 +\frac{M}{Q(1-p_{\mathrm{tx}})}
\Big(
\sqrt{2(1+2\vartheta^2)}\sum_{j\neq i}\Lambda_{ij}[\mathbf p_{j}]_m
+\frac{N_0}{EMp_{\mathrm{tx}}}
\Big)^2.
$$

\begin{figure*}
     \centering
               \hfill
          \begin{subfigure}[b]{0.25\linewidth}
        \includegraphics[width = \linewidth,trim=10 0 40 20, clip]{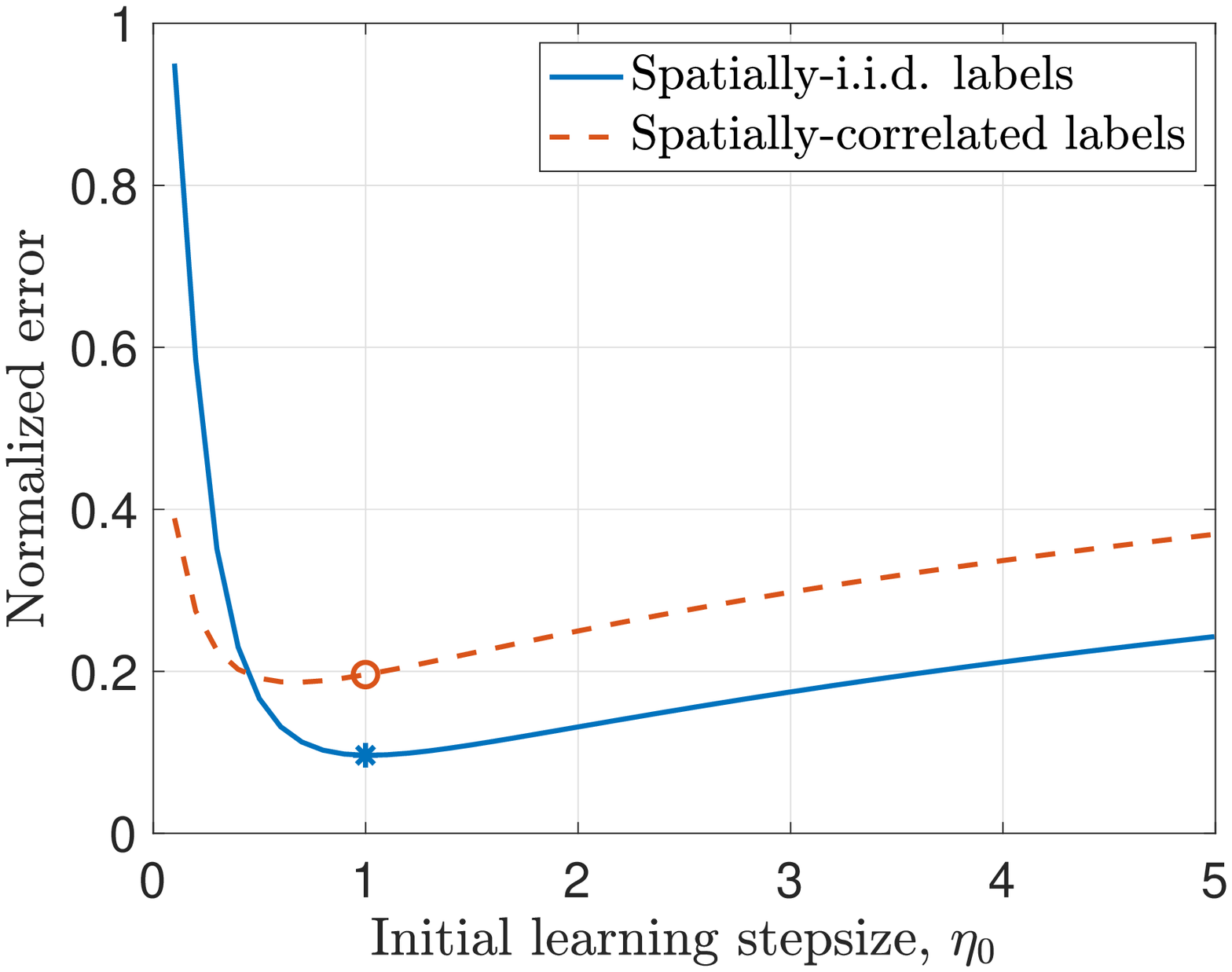}
        	    \vspace{-5mm}
	    \caption{} 
     \end{subfigure}
     \hfill
     \begin{subfigure}[b]{0.235\linewidth}
         \includegraphics[width = \linewidth,trim=40 0 40 20, clip]{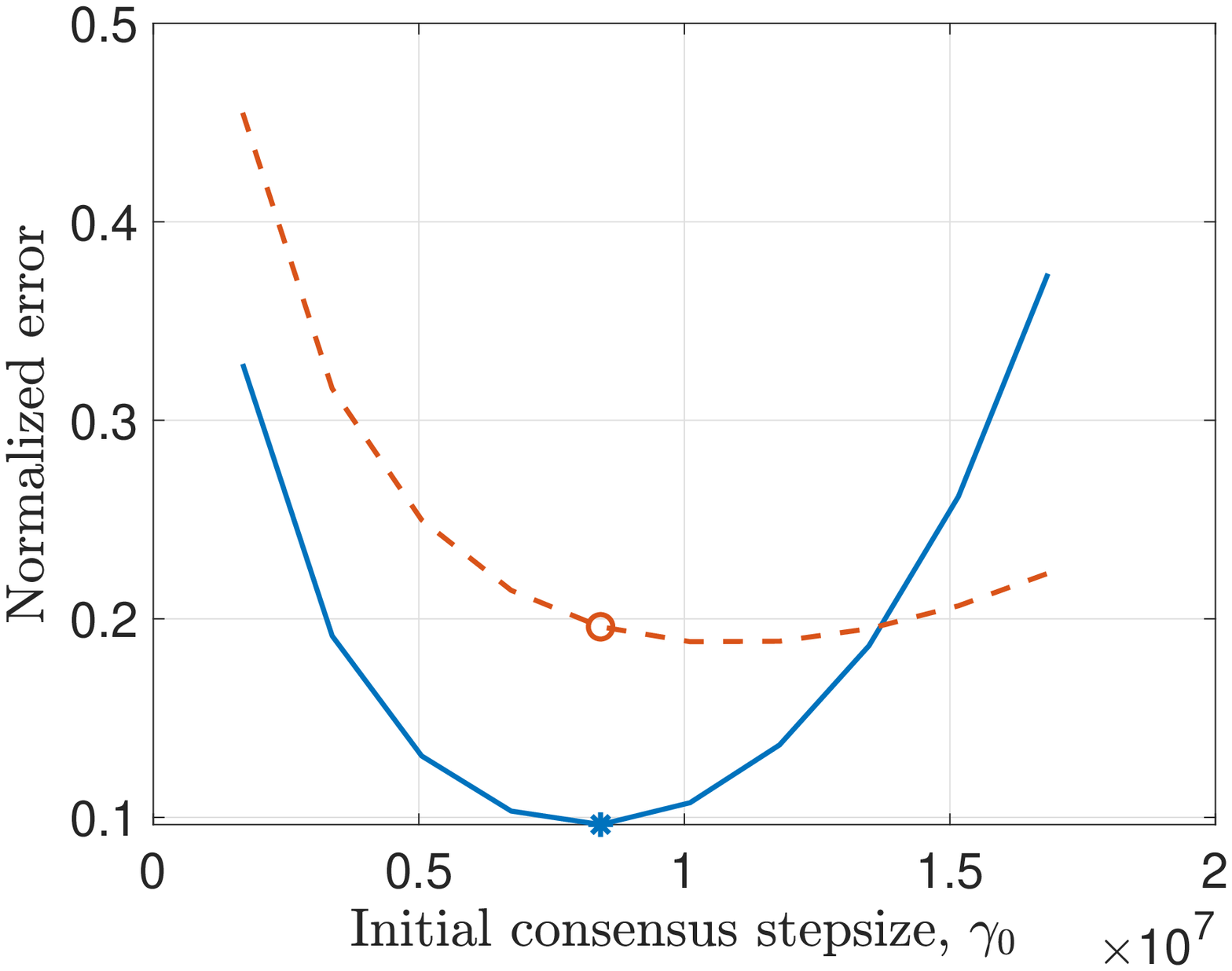}
         \vspace{-5mm}
	    \caption{} 
     \end{subfigure}
     \begin{subfigure}[b]{0.24\linewidth}
        \includegraphics[width = \linewidth,trim=28 0 40 20, clip]{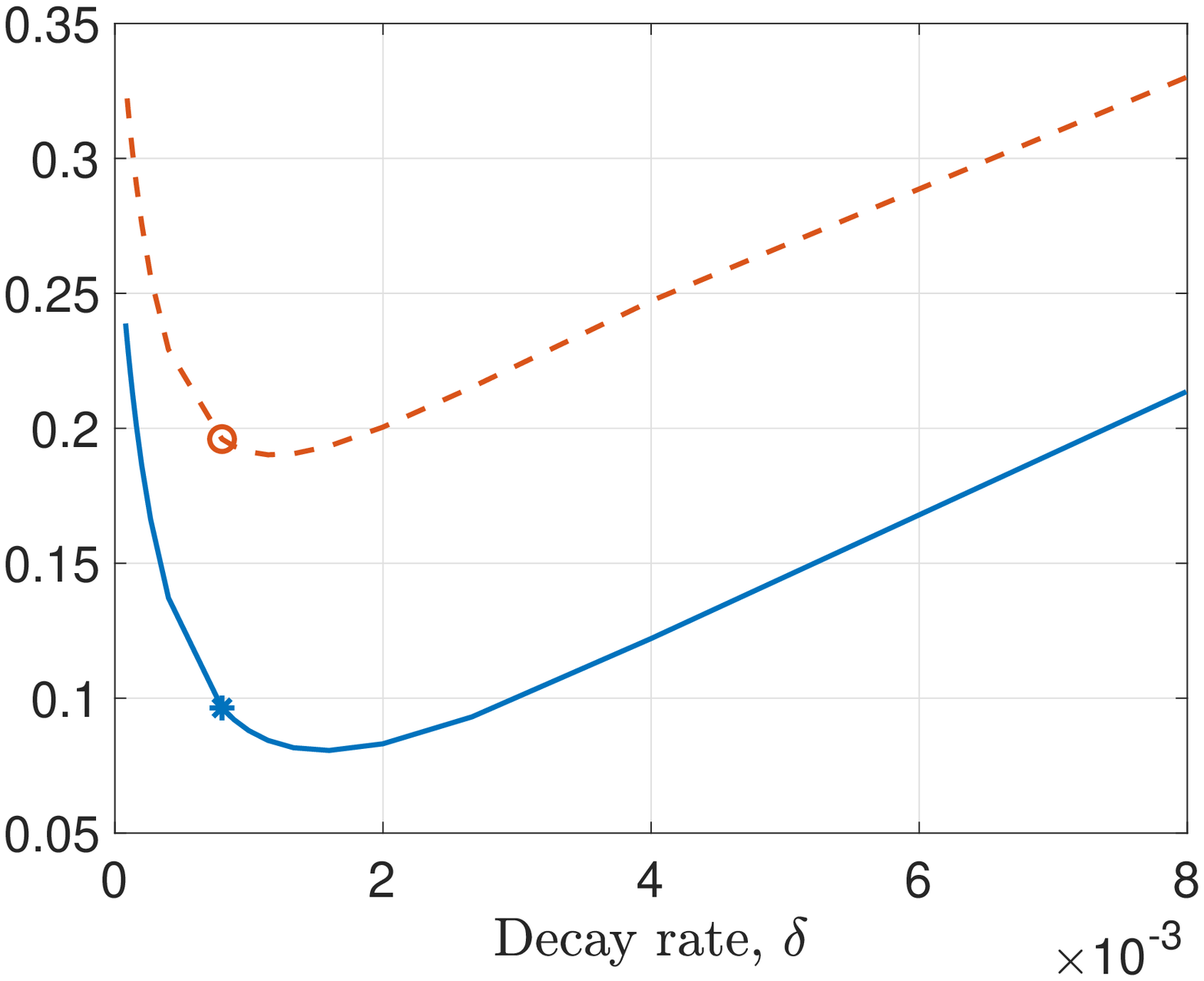}
       \vspace{-5mm}
	    \caption{} 
     \end{subfigure}
     \begin{subfigure}[b]{0.24\linewidth}
        \includegraphics[width = \linewidth,trim=30 0 40 20, clip]{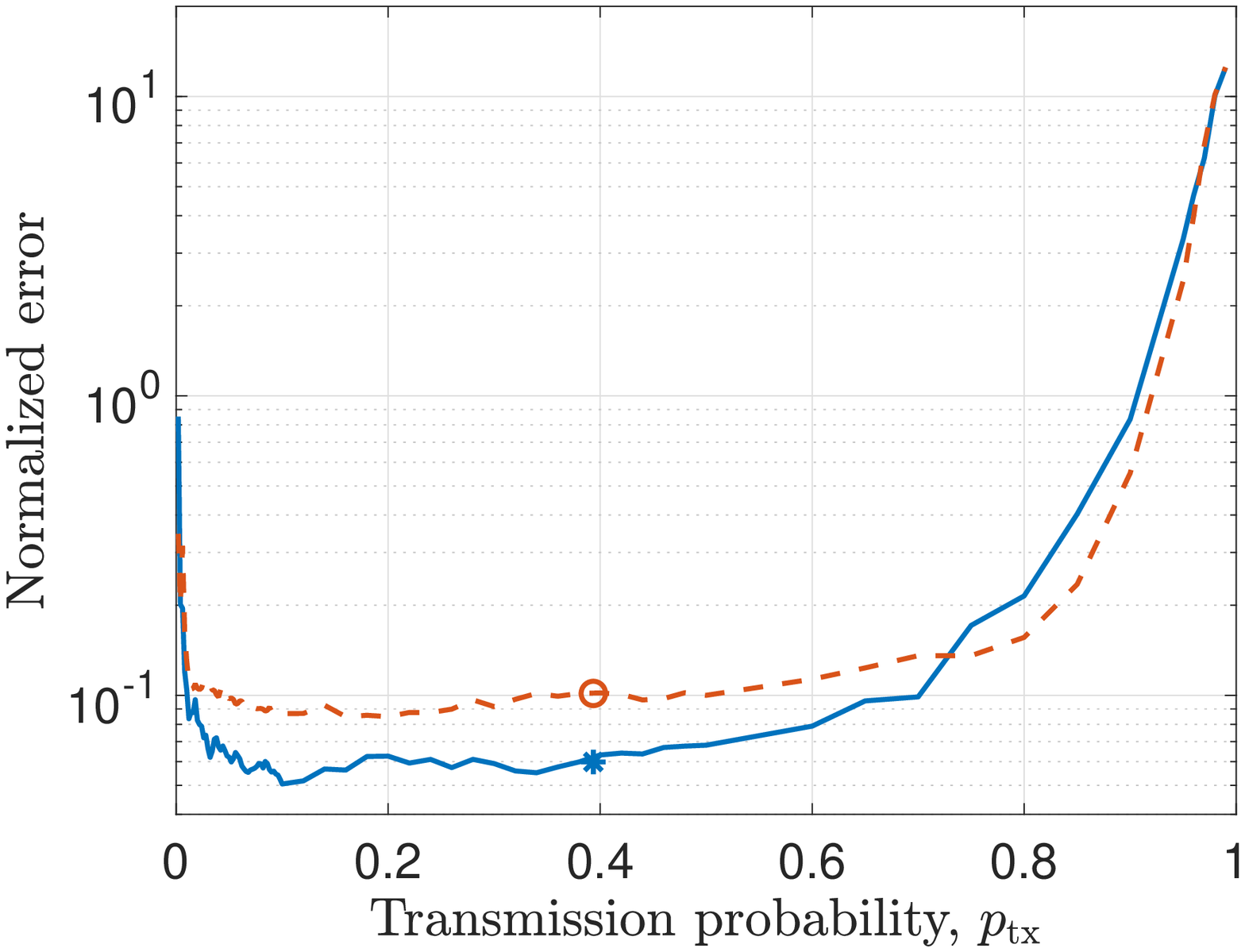}
        	 \vspace{-5mm}
	    \caption{} 
     \end{subfigure}     
          \hfill
\caption{Normalized error vs initial learning stepsize (a),
 initial consensus stepsize (b),
 decay rate  $\delta$ (c),
 and transmission probability $p_{\mathrm{tx}}$ (d), evaluated after 2000ms of execution time,
under spatially-i.i.d. (solid lines) and -dependent (dashed lines) scenarios, with i.i.d. channels over frames.
The markers denote the \emph{Baseline} configuration, adopted in \secref{numres}:
$\eta_0=\frac{2}{\mu+L}$, $\gamma_0=\frac{0.05}{\rho_2}$, $\delta=\frac{4}{5}\mu\eta_0$, $p_{\mathrm{tx}}$ as in Lemma \ref{L0}.
The common legend is shown in figure (a).
\vspace{-5mm}
 } \label{fig:params}
\end{figure*}

Finally, we take the square root of both sides,
along with $\sqrt{\sum_\ell a_\ell^2}\leq \sum_\ell a_\ell$ for $a_i\geq 0$,
add over $m$ and use the fact that $\sum_m[\mathbf p_{j}]_m=1$. These steps yield
$$
\sum_{m=1}^M\mathrm{sdv}(r_{im})
\leq
 \frac{\sqrt{M}}{\sqrt{Q}\sqrt{1-p_{\mathrm{tx}}}}\sqrt{2(1+2\vartheta^2)}\sum_{j\neq i}\Lambda_{ij}
 $$$$
+ \frac{\sqrt{M}}{\sqrt{Q}\sqrt{1-p_{\mathrm{tx}}}}\frac{N_0}{Ep_{\mathrm{tx}}}
+\frac{\sqrt{1+\varpi^2}}{\sqrt{p_{\mathrm{tx}}}\sqrt{1-p_{\mathrm{tx}}}}\sum_{j\neq i}\Lambda_{ij}\triangleq g(p_{\mathrm{tx}}).
$$
The final result is obtained by using the previous bound in \eqref{up2}, along with $\sum_{j\neq i}\Lambda_{ij}\leq \Lambda^*$  (see \eqref{Lambdaproof}).
Finally, it is straightforward to show that  $\lim_{p_{\mathrm{tx}}\to 0}g(p_{\mathrm{tx}})=\infty$,
$\lim_{p_{\mathrm{tx}}\to 1}g(p_{\mathrm{tx}})=\infty$, and
$$g^\prime(p_{\mathrm{tx}})\propto Ap_{\mathrm{tx}}^{3/2}+B(2p_{\mathrm{tx}}-1)+C\frac{3p_{\mathrm{tx}}-2}{\sqrt{p_{\mathrm{tx}}}}
\triangleq h(p_{\mathrm{tx}})
$$
where $A= \frac{\sqrt{M}}{\sqrt{Q}}\sqrt{2(1+2\vartheta^2)}$,
$B=\sqrt{1+\varpi^2}$,
$C=\frac{\sqrt{M}}{\sqrt{Q}}\frac{N_0}{\Lambda^* E}$. The function $h(p_{\mathrm{tx}})$ is 
decreasing in $p_{\mathrm{tx}}$ (with $\lim_{p_{\mathrm{tx}}\to 0}h(p_{\mathrm{tx}})=-\infty$,
$h(1)=A+B+C$,
hence there exists a unique $p_{\mathrm{tx}}^*\in(0,1)$ such that $h(p_{\mathrm{tx}}^*)=0$,
hence $g^\prime(p_{\mathrm{tx}}^*)=0$.
Such $p_{\mathrm{tx}}^*$ minimizes $g(p_{\mathrm{tx}})$, hence the upper bound $\Sigma^{(1)}$ on 
$\frac{1}{N}\sum_{i=1}^N\mathrm{var}(\tilde{\mathbf d}_{i})$.
\end{proof}

\section*{Performance of \proposed\ with respect to stepsize parameters and transmission probability}
In this and the next sections, we provide additional numerical evaluations to \secref{numres}. The simulation parameters and algorithm descriptions are provided in \secref{numres}.

We evaluate the performance of \proposed\ in terms of its normalized error after 2000ms of execution time,
for a network with $N=200$ nodes.
We use the CP0 codebook of Example \ref{ex1} with $M{=}2d{+}1{=}901$ codewords,
requiring two OFDM symbols ($Q{=}1024$), and yielding a frame of duration
$T{=}2T_{\mathrm{ofdm}}{=}258\mu$s. 
It follows Algorithm~\ref{A1}.

In Fig. \ref{fig:params}, we plot the normalized error versus some key parameters of \proposed:
the initial learning stepsize $\eta_0$ (a),
the initial consensus stepsize $\gamma_0$ (b),
the decay rate $\delta$ (c),
and the transmission probability $p_{\mathrm{tx}}$ (d).
To this end, we  vary one such parameter, while setting the other parameters equal to the \emph{Baseline} configuration.
The latter adopts $\eta_0=\frac{2}{\mu+L}$, $\gamma_0=\frac{0.05}{\rho_2}$, $\delta=\frac{4}{5}\mu\eta_0$, $p_{\mathrm{tx}}$ as in Lemma \ref{L0}.
For instance, Fig.~\ref{fig:params}.a is obtained by varying $\eta_0$, while setting the $\gamma_0$, $\delta$ and $p_{\mathrm{tx}}$ parameters as their baseline values.

In Fig. \ref{fig:params}.a, we plot the normalized error versus the initial learning stepsize $\eta_0$.
As expected, tuning of $\eta_0$ regulates the magnitude of gradient steps, hence the learning progress. A smaller $\eta_0$ slows down the progress of the algorithm, resulting in slower convergence and higher normalized error.
On the other hand, if the stepsize is too large, the optimization algorithm might oscillate around the minimizer or even diverge away from it.
Note that the baseline value $\eta_0=\frac{2}{\mu+L}$ (denoted with markers) is the maximum stepsize that guarantees convergence of gradient descent, for the class of strongly-convex and smooth loss functions. 

In Fig. \ref{fig:params}.b, we plot the normalized error versus the initial consensus stepsize $\gamma_0$.
As explained in the discussion following  Theorem \ref{T1},
tuning $\gamma_0$ reflects a delicate balance between speeding up information propagation (favored by larger $\gamma_0$) and minimizing error propagation (smaller $\gamma_0$). This trade-off is in line with the behavior observed in the figure.

In Fig. \ref{fig:params}.c, we plot the normalized error versus the decay rate $\delta$.
As explained in the discussion following Theorem \ref{T1}, the choice $\delta=\frac{4}{5}\mu\eta_0$ is the one that minimizes
 the convergence bounds stated in Theorem \ref{T1}, hence the value $\delta=\frac{4}{5}\mu\eta_0$ is employed in the numerical evaluations of \secref{numres}.
 Remarkably, the plots of \ref{fig:params}.c demonstrate the same monotonic behavior expected from Theorem \ref{T1} in the regime $\delta\leq\frac{4}{5}\mu\eta_0$. They also demonstrate that,
 if the condition $\delta\leq\frac{4}{5}\mu\eta_0$ of Theorem~\ref{T1} is violated, the normalized error quickly diverges. In this regime, 
 the stepsizes decrease too quickly, preventing the algorithm from making a tangible progress.
This numerical evaluation further motivates the use of decreasing vs constant ($\delta\to0$) stepsizes, developed in this paper.

In Fig. \ref{fig:params}.d, we plot the normalized error versus the transmission probability $p_{\mathrm{tx}}$.
We remind that the baseline value shown with markers is the one given by Lemma \ref{L0} and used in the numerical evaluations of \secref{numres}.
This value minimizes the upper bound on the variance of the disagreement signal estimate, given in Lemma \ref{L0}.
Indeed, Fig. \ref{fig:params}.d confirms that the optimal $p_{\mathrm{tx}}$ value found via Lemma \ref{L0}
 attains nearly optimal numerical performance. Furthermore, 
the normalized error exhibits a mild dependence with respect to $p_{\mathrm{tx}}$  around such optimal value (in the interval $[0.1,0.5]$), and quickly diverges as
$p_{\mathrm{tx}}\to 0$ (the transmissions are too sporadic) or $p_{\mathrm{tx}}\to 1$ (each node operates as a receiver too infrequently).
This implies that precise knowledge of propagation parameters $\vartheta$, $\varpi$ and $\Lambda^*$, affecting the design of $p_{\mathrm{tx}}$, is not critical.

\section*{Additional comparisons with\\ state-of-the-art schemes}

\begin{figure*}
     \centering
               \hfill
          \begin{subfigure}[b]{0.32\linewidth}
        \includegraphics[width = \linewidth,trim=10 0 20 20, clip]{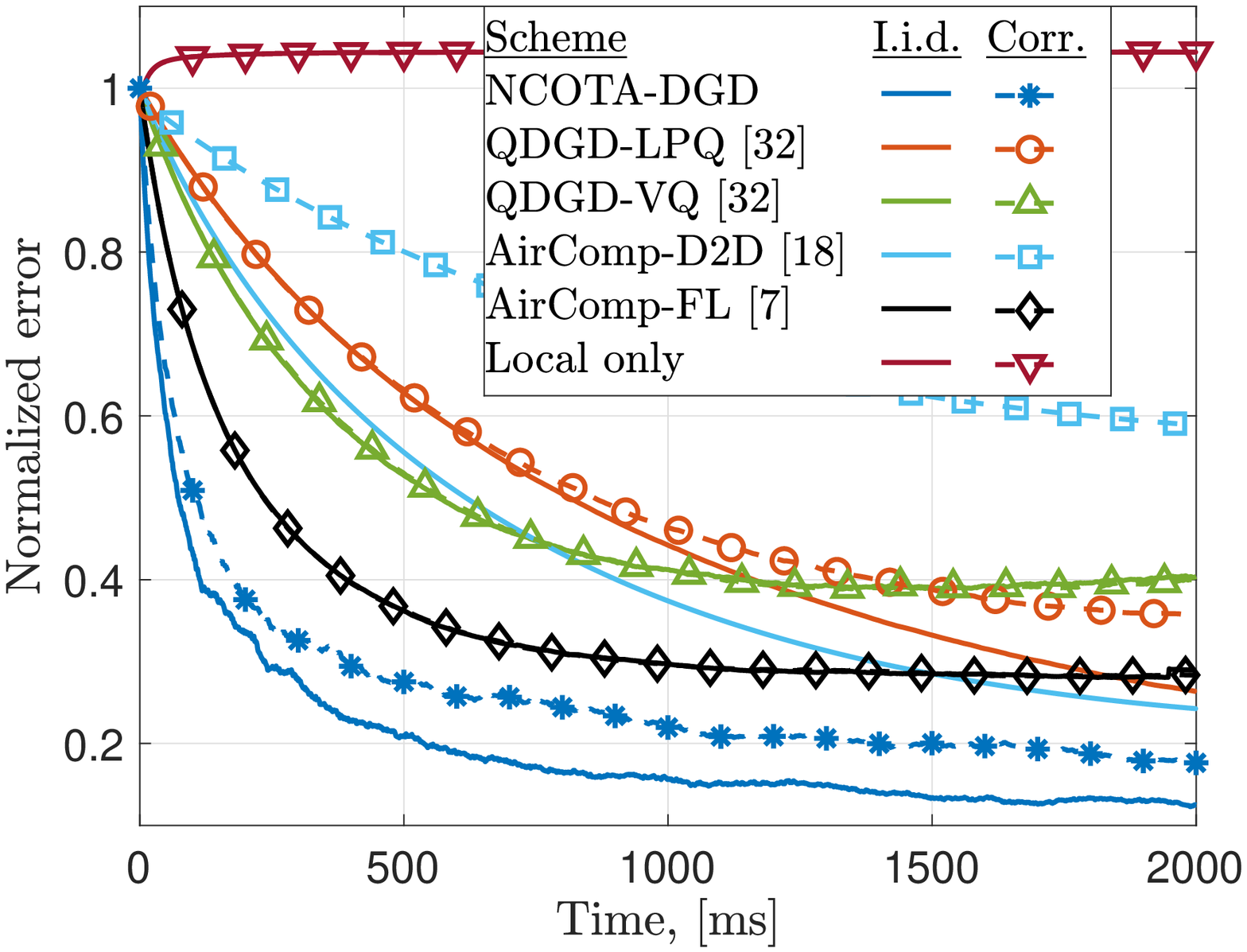}
        	    \vspace{-5mm}
	    \caption{} 
     \end{subfigure}
     \hfill
     \begin{subfigure}[b]{0.32\linewidth}
         \includegraphics[width = \linewidth,trim=10 0 20 20, clip]{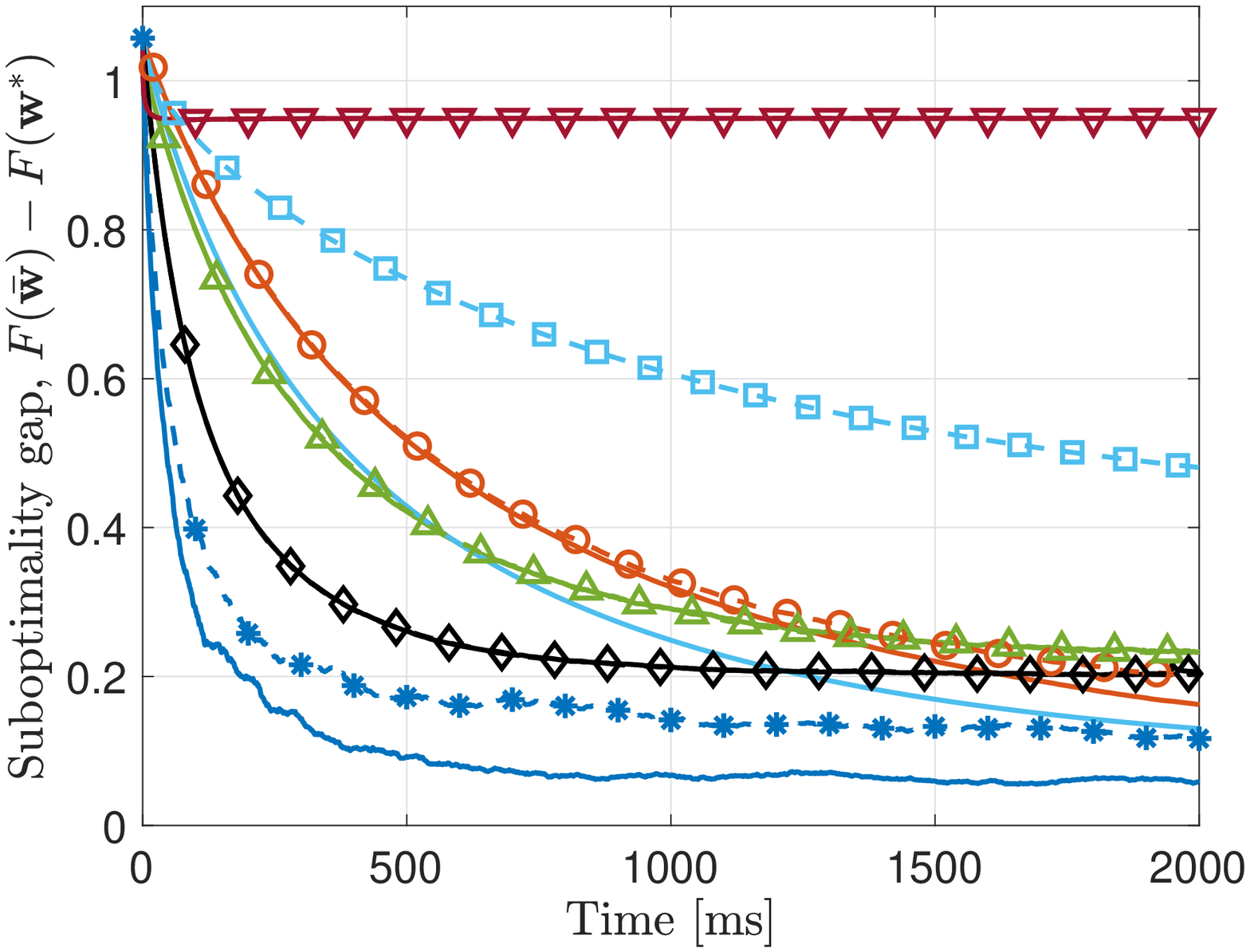}
         \vspace{-5mm}
	    \caption{} 
     \end{subfigure}
     \begin{subfigure}[b]{0.32\linewidth}
        \includegraphics[width = \linewidth,trim=10 0 20 20, clip]{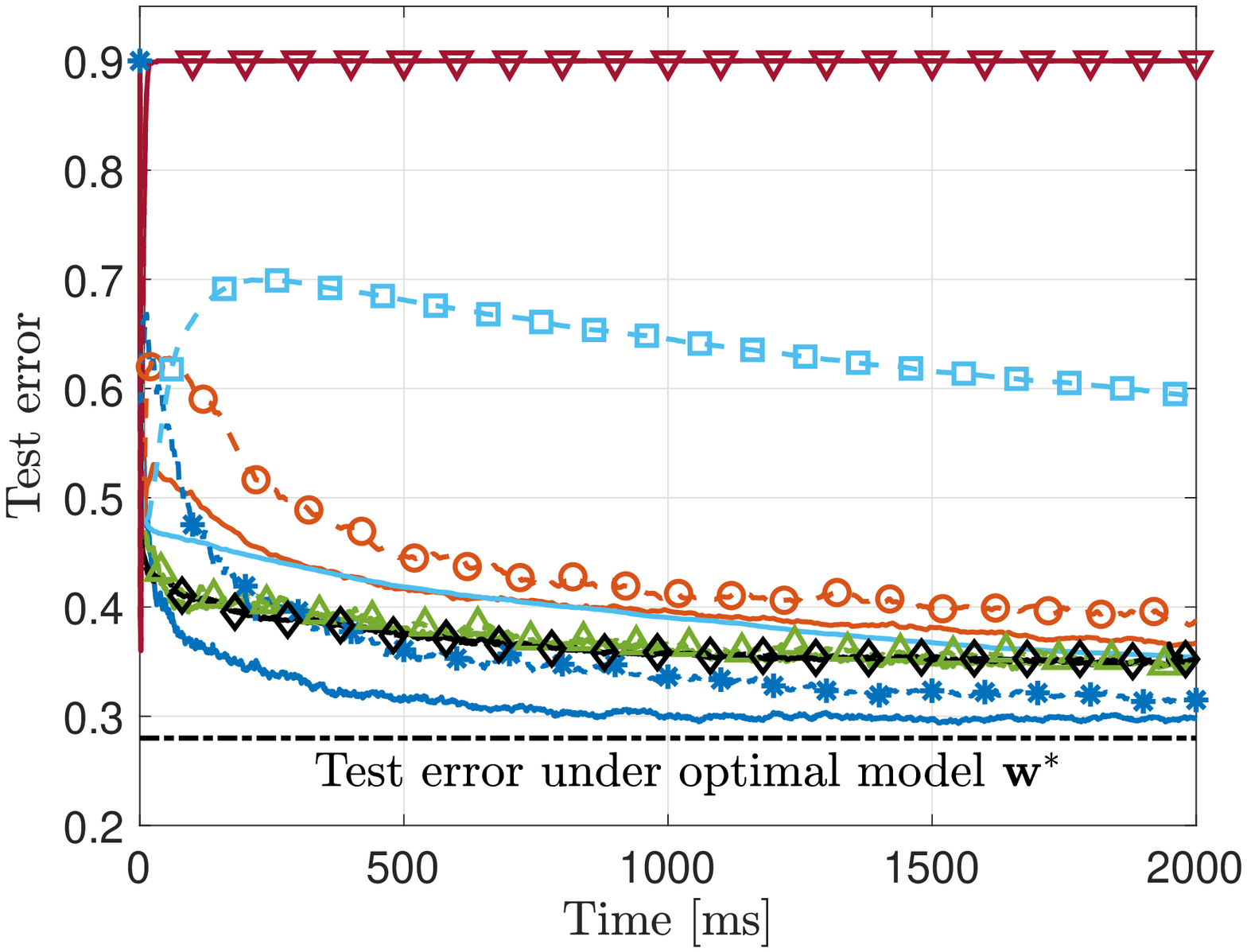}
       \vspace{-5mm}
	    \caption{} 
     \end{subfigure}
          \hfill
          \caption{Normalized error (a), suboptimality gap (b), test error (c)  vs time,
under both spatially-i.i.d. and -dependent label scenarios, with i.i.d. channels over frames.
Common legend shown in figure (a).
\vspace{-6mm}
 }
\label{fig:soa}
\end{figure*}

Fig. \ref{fig:soa} depicts the performance of the state-of-the-art schemes implemented in \secref{compSoA} versus time, for the scenario with 
$N=200$ nodes, with i.i.d. channels over frames.
We evaluate the performance under both 
spatially-i.i.d. (solid lines) and spatially-dependent (dashed with markers) label scenarios.
 As also noted in the comments of Fig. \ref{fig:SoAvsN},
smaller normalized error typically translates to smaller suboptimality gap and test error,
and the performance generally degrades in the spatially-dependent scenario.\

In both scenarios, we note that \proposed\ achieves the best performance across time, followed by \emph{AirComp-FL}.
The\newpage\noindent
 worst performance is attained by \emph{Local only}, since each node 
 optimizes its local function and does not exploit inter-agent  communications.
As noted in the comments of \secref{compSoA}, \emph{AirComp-FL} benefits from its star-based topology: all communications happen to and from the parameter server, which maintains model synchronization across the network. However, this scheme also
 suffers from the same, due to the communication bottleneck experienced by the edge nodes.

\end{document}